\newif\ifarxiv
\let\oldnl\nl% Store \nl in \oldnl
\newcommand{\nonl}{\renewcommand{\nl}{\let\nl\oldnl}}% Remove line number for one line
\newtheorem{theorem}{Theorem}
\newtheorem{lemma}[theorem]{Lemma}
\newtheorem{observation}[theorem]{Observation}
\newtheorem{claim}[theorem]{Claim}
\newtheorem{corollary}[theorem]{Corollary}
\newcommand{\widesim}[2][1.5]{
  \mathrel{\overset{#2}{\scalebox{#1}[1]{$\sim$}}}
}
\tikzset{
    linpt/.style={fill,rectangle,yscale=0.085cm,xscale=0.012cm},
    %Define standard arrow tip
    >=stealth',
    % Define arrow style
    pil/.style={
           ->,
           thick,
           %shorten <=2pt,
           shorten >=4pt
           }
}
\begin{document}

%%
%% The "title" command has an optional parameter,
%% allowing the author to define a "short title" to be used in page headers.
\title{The Space Complexity of Consensus from Swap}

\ifarxiv
\author{Sean Ovens \\
	University of Toronto\\
	\texttt{sgovens@cs.toronto.edu} \\
}
\else
%%
%% The "author" command and its associated commands are used to define
%% the authors and their affiliations.
%% Of note is the shared affiliation of the first two authors, and the
%% "authornote" and "authornotemark" commands
%% used to denote shared contribution to the research.
\author{Sean Ovens}
\email{sgovens@cs.toronto.edu}
\orcid{0000-0003-0785-2014}
\affiliation{%
  \institution{University of Toronto}
%  \city{Dublin}
%  \state{Ohio}
  \country{Canada}
%  \postcode{43017-6221}
}

%%
%% The code below is generated by the tool at http://dl.acm.org/ccs.cfm.
%% Please copy and paste the code instead of the example below.
%%
\begin{CCSXML}
<ccs2012>
   <concept>
       <concept_id>10003752.10003809.10010172</concept_id>
       <concept_desc>Theory of computation~Distributed algorithms</concept_desc>
       <concept_significance>500</concept_significance>
       </concept>
 </ccs2012>
\end{CCSXML}

\ccsdesc[500]{Theory of computation~Distributed algorithms}

\keywords{space complexity; lower bounds; consensus; shared memory; set agreement}
\fi

%\received{20 February 2007}
%\received[revised]{12 March 2009}
%\received[accepted]{5 June 2009}

%%
%% This command processes the author and affiliation and title
%% information and builds the first part of the formatted document.
\ifarxiv
\maketitle
\else
\fi

%%
%% By default, the full list of authors will be used in the page
%% headers. Often, this list is too long, and will overlap
%% other information printed in the page headers. This command allows
%% the author to define a more concise list
%% of authors' names for this purpose.
%\renewcommand{\shortauthors}{Trovato et al.}

%%
%% The abstract is a short summary of the work to be presented in the
%% article.
\begin{abstract}
Nearly thirty years ago, it was shown that $\Omega(\sqrt{n})$ read/write registers are needed to solve randomized wait-free consensus among $n$ processes.
This lower bound was improved to $n$ registers in 2018, which exactly matches known algorithms.
The $\Omega(\sqrt{n})$ space complexity lower bound actually applies to a class of objects called historyless objects, which includes registers, test-and-set objects, and readable swap objects.
However, every known $n$-process obstruction-free consensus algorithm from historyless objects uses $\Omega (n)$ objects.

In this paper, we give the first $\Omega (n)$ space complexity lower bounds on consensus algorithms for two kinds of historyless objects.
First, we show that any obstruction-free consensus algorithm from swap objects uses at least $n-1$ objects.
More generally, we prove that any obstruction-free $k$-set agreement algorithm from swap objects uses at least $\lceil \frac{n}{k}\rceil - 1$ objects.
The $k$-set agreement problem is a generalization of consensus in which processes agree on no more than $k$ different output values.
This is the first non-constant lower bound on the space complexity of solving $k$-set agreement with swap objects when $k > 1$.
We also present an obstruction-free $k$-set agreement algorithm from $n-k$ swap objects, which exactly matches our lower bound when $k=1$.

Second, we show that any obstruction-free binary consensus algorithm from readable swap objects with domain size $b$ uses at least $\frac{n-2}{3b+1}$ objects.
When $b$ is a constant, this asymptotically matches the best known obstruction-free consensus algorithms from readable swap objects with unbounded domains.
Since any historyless object can be simulated by a readable swap object with the same domain, our results imply that any obstruction-free consensus algorithm from historyless objects with domain size $b$ uses at least $\frac{n-2}{3b+1}$ objects.
For $b = 2$, we show a slightly better lower bound of $n-2$.
There is an obstruction-free binary consensus algorithm using $2n-1$ readable swap objects with domain size $2$, asymptotically matching our lower bound.
\end{abstract}

\ifarxiv
\else
\maketitle
\fi

%TODO

%OUTLINE
%	- Consensus problem
%		- explain problem
%		- Historyless base objects
%			- examples are swap and TASAR
%			- sqrt n lower bound
%		- randomized wait-free solutions with historyless objects
%		
%	- My techniques
%		-Covering arguments
%			- why are they difficult with RMW?
%		- for swap objects...
%		- readable binary swap objects, combine tech of SBO and a covering argument
%	
%	-Consensus hierarchy
%		- hierarchy of synchronization primitives, relative power
%		- examples, RW registers, TASAR, CAS		
%		- collapses with randomization, solvable with registers
%		
%	- Complexity hierarchy
%		- upper bound matches lower bound for registers

%TODO

\section{Introduction}\label{sec:introduction}
Consensus is one of the most well-studied problems in distributed computing.
In the consensus problem, $n$ processes each begin with an input and they collectively try to agree on a single output that is equal to the input of some process.
A consensus algorithm is wait-free if every process decides within a finite number of its own steps.
Unfortunately, a well-known result by Fischer, Lynch, and Paterson \citep{flp-85} proves that deterministic wait-free consensus is unsolvable for $n \geq 2$ processes in the asynchronous message passing model, even when only one process can crash.
Wait-free consensus is also impossible in deterministic asynchronous shared memory when processes communicate using only read/write registers \citep{la-87,cil-87}.
Randomized wait-freedom is a weaker progress condition that requires each process to decide within a finite number of its own steps in expectation.
There are known randomized wait-free consensus algorithms from $n$ registers \citep{ah-90,cil-94}.
%A randomized wait-free algorithm requires that every process decides within a finite number of its own steps in expectation.

In 1993, Ellen, Herlihy, and Shavit \citep{ehs-98} proved that $\Omega (\sqrt{n})$ registers are needed to solve randomized wait-free consensus.
This lower bound applies to binary consensus algorithms, where process inputs are all either $0$ or $1$, so it applies to consensus algorithms with arbitrary inputs as well.
Over twenty years later, a breakthrough result by Zhu \citep{z-16} showed that any randomized wait-free binary consensus algorithm from registers uses at least $n-1$ registers.
Finally, using a novel technique, Ellen, Gelashvili, and Zhu \citep{egz-18} improved this lower bound to $n$ registers, exactly matching the space complexity of known algorithms \citep{ah-90,cil-94}.

All of these lower bounds apply to \emph{nondeterministic solo-terminating} consensus algorithms.
Such algorithms have the property that, for every reachable configuration $C$ of the algorithm and every process $p$, there is a solo execution by $p$ from $C$ in which $p$ decides.
Every randomized wait-free algorithm is nondeterministic solo-terminating.
If a nondeterministic solo-terminating algorithm is deterministic, then it is called \emph{obstruction-free}.
Hence, space lower bounds that are proved for nondeterministic solo-terminating consensus algorithms also apply to randomized wait-free and obstruction-free consensus algorithms.

Ellen, Gelashvili, and Zhu \citep{egz-18} proved that any nondeterministic solo-terminating algorithm from objects that support read can be transformed into an obstruction-free algorithm using the same objects.
Hence, space lower bounds for obstruction-free algorithms using objects that support read also apply to nondeterministic solo-terminating (and hence, randomized wait-free) algorithms.
Obstruction-freedom is a simpler property than nondeterministic solo-termination and randomized wait-freedom, so this is helpful for proving new space complexity lower bounds.
%Giakkoupis, Helmi, Higham, and Woelfel \citep{ghhw-13} proved that obstruction-free algorithms can be transformed into randomized wait-free algorithms using the same objects against an oblivious adversary.

Herlihy's wait-free consensus hierarchy \citep{h-91} is one way to compare the relative power of various objects.
An object has consensus number $x$ if there is an $x$-process, deterministic, wait-free consensus algorithm from instances of that object and registers, but there is no such algorithm for more than $x$ processes.
This hierarchy collapses in a randomized (or obstruction-free) setting, since there are known $n$-process randomized wait-free (and obstruction-free) consensus algorithms from registers \citep{ah-90,cil-94}.
Ellen, Gelashvili, Shavit, and Zhu \citep{egsz-20} proposed an alternative classification of objects based on the number of instances of the object required to solve obstruction-free consensus for $n$ processes.
Since space lower bounds for obstruction-free implementations from readable objects also apply to randomized wait-free implementations \citep{egz-18}, this also gives us a way to compare the relative power of objects in a randomized setting.

\smallskip

Registers belong to a class of objects called historyless objects.
A historyless object has the property that its value depends only on the last nontrivial operation applied to it.
An operation is trivial if it cannot modify the value of the object.
%A register is an example of a historyless object.
Registers support the trivial operation \emph{Read}, which returns the current value of the object, and the nontrivial operation \emph{Write}$(v)$, which always sets the value of the object to $v$.
Another example of a historyless object is a swap object.
A swap object supports the \emph{Swap}$(v)$ operation, which atomically changes the value of the object to $v$ and returns its previous value.
Readable swap objects support \emph{Swap}$(v)$ and \emph{Read}.
Any historyless object can be simulated by a readable swap object \citep{efr-07}.
Hence, when proving space complexity lower bounds for algorithms that use historyless objects, it suffices to consider readable swap objects.
It is impossible to solve wait-free consensus among $n \geq 3$ processes using only historyless objects \citep{h-91}.
Ellen, Gelashvili, Shavit, and Zhu \citep{egsz-20} gave an obstruction-free consensus algorithm using $n-1$ readable swap objects.
Their algorithm is similar to Aspnes and Herlihy's racing counters algorithm from $n$ registers \citep{ah-90}.
%However, it is possible to solve obstruction-free consensus using $n-1$ readable swap objects.
%The obstruction-free consensus algorithm which uses , by Ellen, Gelashvili, Shavit, and Zhu \citep{egsz-20}, uses $n-1$ readable swap objects.

%Ellen, Gelashvili, Shavit, and Zhu \citep{egsz-20} presented an obstruction-free consensus algorithm that uses $n-1$ readable swap objects.
%There is no known nondeterministic solo-terminating consensus algorithm using fewer than $n-1$ historyless objects.
%All known randomized wait-free consensus algorithms from historyless objects use $\Omega (n)$ objects.
%However, it is possible to solve obstruction-free consensus using historyless objects, where obstruction-freedom requires that every process decides provided it takes sufficiently many consecutive steps.
%All known obstruction-free consensus algorithms from historyless objects use $\Omega (n)$ objects.

The $\Omega (\sqrt{n})$ lower bound by Ellen, Herlihy, and Shavit \citep{ehs-98} actually applies to nondeterministic solo-terminating consensus algorithms that use only historyless objects, even when the objects can have unbounded domain size.
In fact, this is still the best space complexity lower bound that is known for consensus algorithms that use only historyless objects.

%A readable test-and-set object is a binary historyless object that supports \emph{TAS} (which is the same as \emph{Swap}$(1)$) and \emph{Read}.
%It initially stores the value $0$.
%Ellen, Gelashvili, Shavit, and Zhu \citep{egsz-20} proved that no finite number of test-and-set objects suffice to solve obstruction-free consensus for $n \geq 3$ processes, but there is an algorithm from an unbounded number of test-and-set objects.
%A test-and-set-and-reset object supports \emph{Write}$(0)$ in addition to \emph{TAS}.
%Our results imply that at least $n-2$ test-and-set-and-reset objects are required to solve obstruction-free consensus.

In this paper, we approach this longstanding gap by considering algorithms that use readable swap objects with bounded domain sizes.
%In this paper, we approach this longstanding gap by considering the space complexity of nondeterministic solo-terminating consensus algorithms using readable swap objects with bounded domain sizes.
We prove that any $n$-process nondeterministic solo-terminating binary consensus algorithm using readable swap objects with domain size $b$ uses at least $\frac{n-2}{3b+1}$ objects.
When $b$ is a constant, our lower bound differs from Ellen, Gelashvili, Shavit, and Zhu's upper bound by only a constant factor.
When $b = 2$, we give a slightly better lower bound of $n-2$ readable swap objects.
There is an obstruction-free binary consensus algorithm from $2n-1$ registers with domain size $2$ \citep{b-11}, asymptotically matching our lower bound.

The proofs of our lower bounds use a new technique that we first used to prove space complexity lower bounds for scannable objects \citep{me-21}.
A scannable object is a generalization of a snapshot object \citep{aadgms-93,a-93}.
It consists of a sequence of readable objects called components that can all be read simultaneously.
A scannable object is an example of a long-lived object, which means that each process can apply arbitrarily many operations to the object.
In contrast, in a binary consensus algorithm, each process has a single input and, if it does not crash, it produces a single output and terminates (i.e. does not continue to participate in the execution).
For this reason, we could not use our lower bound technique for scannable objects to directly obtain our result for obstruction-free consensus.
In this paper, we use a novel approach that combines this technique with a covering argument and a valency argument, which are both standard techniques for proving lower bounds.
Our results, along with some known upper and lower bounds on the space complexity of solving consensus using historyless objects, are summarized in Table~\ref{tab:setagree}.

\medskip

In the $k$-set agreement problem \cite{c-93}, processes must collectively agree on at most $k$ distinct output values.
Consensus is the same as $1$-set agreement.
When $n \leq k$, each process can simply output its input value.
On the other hand, when $n > k$ it is known that deterministic wait-free $k$-set agreement is unsolvable in the asynchronous shared memory model when processes communicate using only registers \cite{bg-93,hs-99,sz-00}.
%There is a known obstruction-free algorithm from $n-k+1$ registers \cite{brs-18}.
There is a simple obstruction-free $k$-set agreement algorithm using $n-k+1$ registers: $n-k+1$ processes use the registers to solve consensus, and the remaining $k-1$ processes decide their input values.
Bouzid, Raynal, and Sutra \citep{brs-18} obtained the same result when processes are anonymous, meaning that processes run the same protocol, do not have identifiers, and their initial states depend only on their inputs.
Ellen, Gelashvili, and Zhu \citep{egz-18} proved that at least $\lceil\frac{n}{k}\rceil$ registers are required to solve obstruction-free $k$-set agreement using only registers when $n > k \geq 1$.
They conjectured that $n-k+1$ registers are required.

Delporte-Gallet, Fauconnier, Kuznetsov, and Ruppert \citep{dfkr-15} proved a lower bound of $n-k+1$ registers for solving repeated obstruction-free $k$-set agreement.
This problem is an unbounded sequence of independent instances of $k$-set agreement.
They also showed that repeated obstruction-free $k$-set agreement can be solved with $\textit{min}(n-k+2, n)$ registers and anonymously with $2(n-k) + 2$ registers.
Bouzid, Raynal, and Sutra \citep{brs-18} later showed that repeated obstruction-free $k$-set agreement can be solved anonymously with $n-k+1$ registers, exactly matching the lower bound.

%TODO
There is also a simple wait-free $2$-process consensus algorithm from a single swap object.
The swap object initially contains a special value $\bot$ which cannot be the input value of any process.
Both processes swap their input value into the object.
The process that receives the response $\bot$ decides its input value and the other process decides the value it obtained in response to its swap operation.
%Using this $2$-process consensus algorithm, we can construct a simple wait-free, $n$-process, $(n-1)$-set agreement algorithm from a single swap object: a predesignated pair of processes use the object to solve consensus, while the remaining processes simply decide their input values.
Using this $2$-process consensus algorithm and a reduction by Chaudhuri and Reiners \citep{cr-96}, we can construct a simple wait-free $n$-process $k$-set agreement algorithm from $n-k$ swap objects when $k \geq \lceil\frac{n}{2}\rceil$ as follows: $n-k$ different pairs of processes each use a different swap object to solve consensus, while the remaining $2k - n$ processes simply decide their input values.
%More generally, Chaudhuri and Reiners \citep{cr-96} showed that wait-free $m$-process $\ell$-set agreement can be used to solve wait-free $n$-process $k$-set agreement when $k \geq \ell\cdot \lceil\frac{n}{m}\rceil$.
It is unknown whether wait-free $k$-set agreement is solvable using swap objects (or readable swap objects) when $\lceil\frac{n}{2}\rceil > k > 1$.
Furthermore, prior to this paper, there was no known non-constant lower bound on the space complexity of solving nondeterministic solo-terminating $k$-set agreement using swap objects when $n > k > 1$.

%re was no known non-constant lower bound on the space complexity of solving nondeterministic solo-terminating $k$-set agreement using swap objects when $k > 1$.

%\smallskip

%In this paper, we investigate the space complexity of solving obstruction-free $k$-set agreement from swap objects and obstruction-free consensus from readable swap objects with bounded domains.
%In this paper, we close this space complexity gap for two kinds of historyless objects: swap objects and readable swap objects with constant-sized domains.
We use an indistinguishability argument to show that any $n$-process nondeterministic solo-terminating $k$-set agreement algorithm using swap objects requires at least $\lceil\frac{n}{k}\rceil - 1$ objects.
Our proof of this lower bound is concise, and we believe it offers insight into why nontrivial historyless operations alone have limited power to solve set agreement.
We also give an $n$-process obstruction-free $k$-set agreement algorithm from $n-k$ swap objects, exactly matching our lower bound for $k = 1$.
When $k < \lceil\frac{n}{2}\rceil$, this is the first known obstruction-free $k$-set agreement algorithm from swap objects.
%For $k > 1$, our algorithm is also the first known obstruction-free $k$-set agreement algorithm from fewer than $n-k+1$ historyless objects.\todo{This will be the first published algorithm. The algorithm from EGSZ20 would probably solve $k$-set agreement with $n-k$ readable swap objects by simply removing $k-1$ objects.}
%We prove that this bound is tight by giving an $n$-process obstruction-free consensus algorithm from $n-1$ swap objects.
Our algorithm is based on the $n$-process obstruction-free consensus algorithm from $n-1$ readable swap objects by Ellen, Gelashvili, Shavit, and Zhu \citep{egsz-20}, but the proofs of correctness are completely different.
%All of these algorithms use objects with unbounded domains.
Our results, along with some known upper and lower bounds on the space complexity of solving $k$-set agreement using historyless objects, are summarized in Table~\ref{tab:setagree}.

\renewcommand{\arraystretch}{1.5}
\begin{table}[h]
\centering
\begin{tabular}{lp{3.9cm}cc}\toprule
Task & Objects used by the algorithm		&	Lower bound			&	Upper bound \\ \hline
Consensus & Registers	&	$n$ \citep{egz-18}	&	$n$ \citep{ah-90,cil-94} \\
Consensus & Swap objects	&	$\mathbf{n-1}$ \textbf{[Theorem~\ref{thm:swaplb}]}	&	$\mathbf{n-1}$ \textbf{[Algorithm~\ref{alg:setagreeswap}]} \\
Consensus & Readable swap objects with domain size $2$	&	$\mathbf{n-2}$ \textbf{[Theorem~\ref{thm:bswaplb}]}	&	$2n-1$ \citep{b-11} \\
Consensus & Readable swap objects with domain size $b$	&	$\mathbf{\frac{n-2}{3b+1}}$ \textbf{[Theorem~\ref{thm:boundedlb}]}	&	$2n-1$ \citep{b-11} \\
Consensus & Readable swap objects with unbounded domain	&	$\Omega(\sqrt{n})$ \citep{ehs-98}	&	$n-1$ \citep{egsz-20} \\
$k$-set agreement & Registers	&	$\lceil\frac{n}{k}\rceil$ \citep{egz-18}	&	$n-k+1$ \citep{brs-18} \\
$k$-set agreement & Swap objects	&	$\mathbf{\lceil\frac{n}{k}\rceil-1}$ \textbf{[Theorem~\ref{thm:swaplb}]}	&	$\mathbf{n-k}$ \textbf{[Algorithm~\ref{alg:setagreeswap}]} \\
$k$-set agreement & Readable swap objects with unbounded domain	&	$1$	&	$\mathbf{n-k}$ \textbf{[Algorithm~\ref{alg:setagreeswap}]} \\
\bottomrule
\end{tabular}

\vspace{2mm}

\caption{\textmd{Lower and upper bounds on the space complexity of solving $n$-process nondeterministic solo-terminating consensus when $n > 1$ and $k$-set agreement when $n > k > 1$ with different kinds of historyless objects. Our new results are in \textbf{boldface}.}}\label{tab:setagree}
\end{table}

%TODO old position
%We also show that any $n$-process obstruction-free binary consensus algorithm from readable swap objects with domain size $b$ uses at least $\frac{n-2}{3b+1}$ objects.
%When $b$ is a constant, then our lower bound differs from Ellen, Gelashvili, Shavit, and Zhu's upper bound by only a constant factor.
%A readable swap object can simulate any historyless object with the same domain \citep{efr-07}, so our result implies that any $n$-process obstruction-free binary consensus algorithm from historyless objects with domain size $b$ requires at least $\frac{n-2}{3b+1}$ objects.
%When $b = 2$, we give a slightly better lower bound of $n-2$ readable binary swap objects.
%There is an obstruction-free binary consensus algorithm from $2n-1$ binary registers \citep{b-11}, which have domain $\{0, 1\}$.
%This demonstrates that our lower bound for readable binary swap objects is asymptotically tight.

\medskip

We present our model of computation together with a brief discussion of covering arguments and valency arguments in Section~\ref{sec:model}.
In Section~\ref{sec:unreadableub}, we present our obstruction-free $k$-set agreement algorithm from swap objects.
In Section~\ref{sec:unreadablelb}, we prove our lower bound on the number of swap objects needed to solve nondeterministic solo-terminating $k$-set agreement.
In Section~\ref{sec:readable}, we present our lower bound on the space complexity of obstruction-free consensus algorithms from readable swap objects with bounded domain sizes.
Finally, we conclude and discuss some possible research directions in Section~\ref{sec:conclusion}.

\section{Preliminaries}\label{sec:model}

We consider a standard asynchronous shared memory model in which $n$ processes communicate using instances of shared \emph{objects} provided by the system.
An object has a set of possible \emph{values}, a set of \emph{operations} that can be applied to it, and a set of \emph{responses} that these operations can return.

A \emph{swap object} stores a value $v \in \mathbb{N}$ and supports the \emph{Swap}$(v')$ operation, which returns the current value $v$ of the object and changes its value to $v'$.
A \emph{readable binary swap object} stores a value $v \in \{0, 1\}$ and supports \emph{Read} (which returns the current value of the object), \emph{Swap}$(0)$, and \emph{Swap}$(1)$.
If $v \in \{0, 1\}$, then we use $\bar{v}$ to denote the value $1 - v$.

In the \emph{$k$-set agreement} problem, each process is given some input value, and processes attempt to collectively agree on no more than $k$ output values.
When a process $p$ outputs $v$, we say that $p$ \emph{decides} the value $v$.
A \emph{$k$-set agreement algorithm} consists of a set of objects and a procedure for each process, and must satisfy the following two properties.
\begin{itemize}
	\item \emph{$k$-Agreement}: no more than $k$ values are decided.
	\item \emph{Validity}: if a process decides the value $v$, then $v$ was the input value of some process.
\end{itemize}
In the \emph{$m$-valued $k$-set agreement} problem, process inputs are from the set $\{0, \ldots, m-1\}$.
Notice that the $m$-valued $k$-set agreement problem is trivial if $m \leq k$.

The \emph{consensus} problem is another name for the $1$-set agreement problem.
We will simply use \emph{agreement} to refer to the $1$-agreement property of consensus algorithms.
The $2$-valued consensus problem is called the \emph{binary consensus} problem.

A \emph{configuration} of a $k$-set agreement algorithm consists of a state for every process and a value for every object.
We use $\textit{value}(B, C)$ to denote the value of the object $B$ in the configuration $C$.
A \emph{step} by a process consists of an operation applied to some object, a response to that operation, and some finite amount of local computation by that process.
%A \emph{step} by a process consists of an operation and an object which supports that operation.

An \emph{execution} is an alternating sequence of configurations and steps, beginning with a configuration, such that each step is applied in the configuration that precedes it and results in the configuration that follows it.
A finite execution ends with a configuration.
If $C$ is a configuration and $\alpha$ is a finite execution from $C$, then $C\alpha$ denotes the final configuration in $\alpha$.
An execution $\alpha$ is $P$-only, where $P$ is some set of processes, if every step in $\alpha$ is applied by a process in $P$.
If $P = \{p_j\}$, then we say $\alpha$ is $p_j$-only.
A \emph{solo-terminating} execution by process $p_j$ from a configuration $C$ is a $p_j$-only execution that ends with a configuration in which $p_j$ has decided a value.

For every configuration $C$ and every process $p$, a $k$-set agreement algorithm specifies the next operation that $p$ will apply given its state in $C$.
We say that $p$ is \emph{poised} to apply this operation in $C$.
An execution is produced by a \emph{scheduler}, which decides the order in which processes take steps.
That is, for any configuration $C$ of a $k$-set agreement algorithm, a scheduler picks a process $p$ that has not decided in $C$ to take its next step.
Suppose that process $p$ is poised to apply the operation $op$ to the object $B$ in configuration $C$.
If the scheduler picks process $p$ to take a step in configuration $C$, then $p$ applies the operation $op$ to $B$ and obtains a response to $op$ based on the value of $B$ in $C$.
%For example, if $op$ is a \emph{Read} or \emph{Swap} operation, then process $p$ obtains the response $\textit{value}(B, C)$.
After applying $op$, process $p$ does some local computation, and then updates its own state based on the response to its operation and its local computation.
This results in a new configuration.
If the algorithm is \emph{deterministic}, then $p$ has exactly one possible state resulting from its local computation.
If the algorithm is \emph{randomized}, then the local computation could include some coin flips, so $p$ could have multiple possible states resulting from its local computation.
An \emph{initial configuration} defines the values of the objects before processes have taken any steps.

Two configurations $C_1$ and $C_2$ are \emph{indistinguishable} to a set of processes $P$ if every process in $P$ has the same state in $C_1$ and $C_2$.
This is denoted by $C_1 \widesim{P} C_2$.
%Let $C_1$ and $C_2$ be configurations of a deterministic consensus algorithm.
Let $C_1$ and $C_2$ be configurations of a $k$-set agreement algorithm.
Let $\alpha_1$ and $\alpha_2$ be executions starting from $C_1$ and $C_2$, respectively.
Then $\alpha_1$ and $\alpha_2$ are \emph{indistinguishable} to a set of processes $P$ if $C_1 \widesim{P} C_2$ and every process in $P$ performs the same sequence of steps in $\alpha_1$ and $\alpha_2$.
(Thus, every process in $P$ obtains the same sequence of responses to all of its operations and local coin flips in $\alpha_1$ and $\alpha_2$.)
This is denoted by $\alpha_1 \widesim{P} \alpha_2$.
If $C_1 \widesim{P} C_2$ and $\alpha_1 \widesim{P} \alpha_2$, then $C_1\alpha_1 \widesim{P} C_2\alpha_2$.

Suppose that $C$ and $C'$ are configurations of a $k$-set agreement algorithm such that $C \widesim{P} C'$, for some set of processes $P$.
If $\alpha$ is a $P$-only execution from $C$ and the objects accessed by $P$ during $\alpha$ have the same values in $C$ and $C'$, then there is a $P$-only execution $\alpha'$ from $C'$ such that $\alpha \widesim{P} \alpha'$ \citep{ae-14}.

%Suppose that $C$ and $C'$ are configurations of a deterministic consensus algorithm such that $C \stackrel{P}{\sim} C'$, for some set of processes $P$.
%If $\sigma$ is a $P$-only history that is applicable to $C$, and the objects accessed during $\sigma$ have the same values in $C$ and $C'$, then $\sigma$ is applicable to $C'$ and $C\sigma \stackrel{P}{\sim} C'\sigma$ \citep{ae-14}.

The \emph{history} of an execution is its sequence of operations along with the processes that applied them.\todo{do I still use this term anywhere?}
A history $\sigma$ is \emph{applicable} to $C$ if there is an execution starting from $C$ whose history is $\sigma$.
If $\sigma$ is finite, $C$ is a configuration of a deterministic $k$-set agreement algorithm, and $\sigma$ is applicable to $C$, then there is exactly one execution starting from $C$ whose history is $\sigma$.
We use $C\sigma$ to denote the final configuration in this execution.
A history $\sigma$ is $P$-only if it only contains operations by processes in $P$.
If $P = \{p_j\}$, then we say $\sigma$ is $p_j$-only.

A $k$-set agreement algorithm is \emph{nondeterministic solo-terminating} if, for every configuration $C$ of the algorithm and every process $p$, there is a solo-terminating execution by $p$ from $C$.
A nondeterministic solo-terminating $k$-set agreement algorithm that is deterministic is called \emph{obstruction-free}.
A $k$-set agreement algorithm is \emph{randomized wait-free} if, for every scheduler, the expected length of an execution produced by that scheduler is finite, where the expectation is taken over the local coin flips performed by all processes.

\medskip

Covering arguments are a standard technique for obtaining space lower bounds in distributed computing.
The first covering argument was used by Burns and Lynch \citep{bl-93} to prove that any mutual exclusion algorithm for $n \geq 2$ processes requires at least $n$ registers.
A set of processes $\mathcal{Q}$ \emph{covers} a set of registers $\mathcal{B}$ if $|\mathcal{Q}| = |\mathcal{B}|$ and, for every $B \in \mathcal{B}$, there is a process in $\mathcal{Q}$ that is poised to write to $B$ in its next step.
A block write by $\mathcal{Q}$ is an execution that consists of the next step by each process in $\mathcal{Q}$, applied consecutively.
This sets the registers in $\mathcal{B}$ to fixed values.
Immediately before the block write by $\mathcal{Q}$, we can insert any execution $\alpha$ not involving $\mathcal{Q}$ that only accesses the registers in $\mathcal{B}$.
The block write by $\mathcal{Q}$ hides $\alpha$ from processes that did not take steps during $\alpha$.
At best, a covering argument obtains a configuration in which every process covers a distinct object, which gives a lower bound of $n$.

Covering arguments can be generalized to historyless objects, where a set of processes $\mathcal{Q}$ covers a set of objects $\mathcal{B}$ if $|\mathcal{Q}| = |\mathcal{B}|$ and, for each object $B \in \mathcal{B}$, there is a process in $\mathcal{Q}$ that is poised to apply a nontrivial operation to $B$ in its next step.
A block update by $\mathcal{Q}$ is a generalization of a block write, where each of the processes in $\mathcal{Q}$ takes its next step.
When the block update by $\mathcal{Q}$ is applied after $\alpha$, the processes in $\mathcal{Q}$ may obtain information about $\alpha$.
Hence, we cannot reuse the processes in $\mathcal{Q}$ while hiding $\alpha$ from the other processes.
This makes it more difficult to apply covering arguments in systems with readable swap objects compared to systems with registers.

%The lower bounds on the space complexity of consensus algorithms by Ellen, Herlihy, and Shavit \citep{ehs-98} and by Zhu \citep{z-16} both combine a covering argument with a valency argument.
Valency arguments were first introduced by Fischer, Lynch, and Paterson \citep{flp-85}.
The valency of a set of processes in a configuration of a binary consensus algorithm is the set of values that can be output by those processes in executions from that configuration.
More formally, a set of processes $\mathcal{P}$ is \emph{bivalent} in configuration $C$ if, for each $v \in \{0, 1\}$, there exists an execution from $C$ only involving steps by $\mathcal{P}$ in which some process in $\mathcal{P}$ decides the value $v$.
If $\mathcal{P}$ is not bivalent in $C$, then it is \emph{univalent} in $C$.
More specifically, the set of processes $\mathcal{P}$ is $v$-\emph{univalent} in $C$ if, in every execution from $C$ that only includes steps by $\mathcal{P}$ in which some process in $\mathcal{P}$ decides, $v$ is the only value that is decided by any process in $\mathcal{P}$.

%A set of processes $R$ \emph{covers} a set of historyless objects $\mathcal{B}$ in a configuration $C$ if $|R| = |\mathcal{B}|$ and, for every object $B \in \mathcal{B}$, there is a process in $R$ that is poised to apply a nontrivial operation to $B$.
%A \emph{block update} $\beta$ by $R$ is the history of an execution that consists of the next step by every process in $R$.
%If every process in $R$ is poised to apply a \emph{Swap} operation in $C$, we call $\beta$ a \emph{block swap} by $R$.

\section{Set Agreement Algorithm from Swap}\label{sec:unreadableub}

In this section, we present an obstruction-free $m$-valued $k$-set agreement algorithm from $n-k$ swap objects $B_1, \ldots, B_{n-k}$.
We emphasize that a swap object does not support the \emph{Read} operation.
%For consensus (i.e. $k = 1$), our upper and lower bounds match exactly.
In our algorithm, every swap object has two fields.
The \emph{lap counter} field consists of an array of $m$ values, all initially $0$, and the \emph{identifier} field consists of a single value, initially $\bot$.

We can view the algorithm as a race among the input values.
Every process $p$ stores a local lap counter $U[0,\ldots,m-1]$ that holds the highest lap for each input value that has been observed by $p$.
If $U[x] > U[x']$, for some $x, x' \in \{0, \ldots, m-1\}$, then $p$ believes that the value $x$ is ahead of the value $x'$ in the race.
%Each process $p$ maintains a preference $v$ which is a value that $p$ believes is leading the race.
%In particular, $U[v] \geq U[j]$, for all $j \in \{0, \ldots, m-1\}$.
%Initially, the preference of process $p$ is equal to its input value.
During the algorithm, process $p$ repeatedly attempts to complete a lap for a value that appears to be leading the race.
To complete a lap, process $p$ must observe its own local lap counter and process identifier in every object.
When $p$ sees that some value $v$ is sufficiently far ahead of all the other values, it decides the value $v$.

\begin{algorithm2e}
\caption{An obstruction-free, $m$-valued, $k$-set agreement algorithm from $n-k$ swap objects.}
\label{alg:setagreeswap}

\SetAlgoNoLine

\SetKwProg{Fn}{Function}{:}{}
	\DontPrintSemicolon
	\SetKwFor{Loop}{loop}{}{end}
	
	\nonl\textbf{shared:} swap objects $B_1, \ldots, B_{n-k}$, initially $B_1 = \ldots = B_{n-k} = \bigl\langle [0, \ldots, 0], \bot \bigr\rangle$\;
	\Fn{$propose(v)$ by process $p$}{
		$U[0, \ldots, m-1] \gets [0, \ldots, 0]$\;\label{ln:init1}
		$U[v] \gets 1$\;\label{ln:init2}
		\Loop{}{\label{ln:mainloop}
			$\textit{conflict} \gets \textsc{False}$\;\label{ln:setconflict}
			\For{$i \in \{1, \ldots, n-k\}$}{\label{ln:swaploop}
				$\langle U', p'\rangle \gets Swap(B_i, \langle U, p\rangle)$\;\label{ln:swap}
				\If{$\langle U', p'\rangle \neq \langle U, p\rangle$}{\label{ln:idcond}
					$\textit{conflict} \gets \textsc{True}$\;
					\If{$U \neq U'$}{
						\For{all $j \in \{0, \ldots, m-1\}$} {\label{ln:updatelocal}
							$U[j] \gets$ max$\bigl(U[j], U'[j]\bigr)$\;\label{ln:setmaxlocal}
						}
					}
				}
			}\label{ln:endswaploop}
			\If{$\textit{conflict} = \textsc{False}$}{\label{ln:conflictcond}
				$c \gets$ max$(U)$\;\label{ln:maxval}
				$v \gets$ min$\bigl\{j\; :\; U[j] = c\bigr\}$\;\label{ln:minindex}
				\If{for all $j \neq v$, $U[v] \geq U[j] + 2$}{\label{ln:decidecond}
					\textbf{decide} $v$\;\label{ln:decide}
					\Return\;
				}
				\Else{
					$U[v] \gets U[v] + 1$\;\label{ln:inc}
				}
			}
		}\label{ln:endmainloop}
		
	}
\end{algorithm2e}

Algorithm~\ref{alg:setagreeswap} is a pseudocode description of our algorithm.
A process $p$ with input $v$ begins by initializing its local lap counter $U$ so that $U[j] = 0$ for all $j \in \{0, \ldots, m-1\}$, and then sets $U[v] = 1$.
Processes only decide and complete laps for values that they believe are winning the race, so this initialization step ensures that the algorithm satisfies validity.
After initializing its local lap counter $U$, process $p$ then repeatedly performs the loop on lines~\ref{ln:mainloop}--\ref{ln:endmainloop}.
An iteration of this loop begins with $p$ initializing a local Boolean variable \emph{conflict} to \textsc{False}.
This variable is used to indicate whether or not $p$ has observed a lap counter or a process identifier different from its own in some object.
During the loop on lines~\ref{ln:swaploop}--\ref{ln:endswaploop}, $p$ swaps its local lap counter and its identifier into the objects $B_1, \ldots, B_{n-k}$ one at a time.
If $p$ obtains a response different than $\langle U, p\rangle$ from one of these swaps, it sets its \emph{conflict} variable to \textsc{True}.
If it observes a lap counter $U' \neq U$, then process $p$ updates every component $j$ of its local lap counter to the maximum of $U[j]$ and $U'[j]$ on lines~\ref{ln:updatelocal}--\ref{ln:setmaxlocal}.
%Notice that the maximum value in the $j$-th component of the local lap counter of any process may not be changed by line~\ref{ln:updatelocal}.
When $p$ reaches the end of the loop on lines~\ref{ln:mainloop}--\ref{ln:endmainloop} and \emph{conflict} is \textsc{True}, $p$ sets \emph{conflict} to \textsc{False} and restarts the loop on lines~\ref{ln:mainloop}--\ref{ln:endmainloop}.

%After changing its local lap counter in the loop on lines~\ref{ln:updatelocal}--\ref{ln:setmaxlocal}, process $p$ updates its preference $v$ to be the index of its new lap counter that has the highest value.
%If there are multiple indices with the highest value, then $p$ chooses $v$ to be the smallest of these indices (lines~\ref{ln:maxval}--\ref{ln:minindex}).
%Note that $p$ uses the updated value of its local lap counter as input to the remaining \emph{Swap} operations that it performs.

When $p$ reaches the end of the loop on lines~\ref{ln:mainloop}--\ref{ln:endmainloop} and \emph{conflict} is \textsc{False}, $p$ must have observed its local lap counter and process identifier as the response to all $n-k$ swaps in the loop on lines~\ref{ln:swaploop}--\ref{ln:endswaploop}.
In this case, $p$ completes a lap.
Process $p$ chooses a value $v$ that appears to be leading the race, i.e. $U[v] \geq U[v']$ for all $v' \in \{0, \ldots, m-1\}$.
If there are multiple values that appear to be leading the race, then $p$ chooses $v$ to be the smallest of these values (lines~\ref{ln:maxval}--\ref{ln:minindex}).
%Then $p$ completes a lap for the value $v$.\todo{move this earlier}
Then, $p$ checks whether $v$ is at least $2$ laps ahead of all the other values on line~\ref{ln:decidecond}.
If so, then $p$ decides $v$ and returns.
Otherwise, $p$ increments the $v$-th component of its local lap counter on line~\ref{ln:inc}.

Consider any configuration $C$.
Let $M$ be the component-wise maximum of the lap counters in the swap objects and the local lap counter of process $p$ in $C$.
Notice that, if process $p$ runs on its own for sufficiently long starting from $C$, then its local lap counter will eventually contain the value $M$.
If $p$ continues to run on its own, then it will eventually swap $\langle M, p\rangle$ into every swap object.
Afterwards, if $p$ does another iteration of the loop on lines~\ref{ln:mainloop}--\ref{ln:endmainloop} on its own, it will complete a lap for the value $v$.
After completing at most $3$ laps for the value $v$, process $p$ will decide the value $v$.
Hence, the algorithm is obstruction-free.
We give a more rigourous proof of obstruction-freedom in Lemma~\ref{lem:of}.

\medskip

We will now formally prove the correctness of Algorithm~\ref{alg:setagreeswap}.
First, we make some simple observations about the algorithm.
If a process $p$ increases the value of the $j$-th component of its local lap counter to some value $\ell \geq 1$ on line~\ref{ln:setmaxlocal}, then the swap object accessed previously by $p$ must have contained a lap counter whose $j$-th component was $\ell$.
Some other process must have swapped this lap counter into the object in an earlier step.
That process must have had a local lap counter whose $j$-th component contained $\ell$ immediately before it applied this \emph{Swap} operation.
Therefore, the maximal value of the $j$-th component of the local lap counters across all processes cannot be changed when a process performs line~\ref{ln:setmaxlocal}.
More specifically, it can only be changed by a process performing line~\ref{ln:inc}.
This gives us the following.

\begin{observation}\label{obs:onebyone}
	Suppose that the $j$-th component of the local lap counter of some process has value $\ell \geq 2$ in a configuration $C$ of an execution $\alpha$.
	Then, for all $2 \leq \ell' \leq \ell$, there must be some step before $C$ in $\alpha$ by some process in which it increments the $j$-th component of its local lap counter on line~\ref{ln:inc} from $\ell'-1$ to $\ell'$.
\end{observation}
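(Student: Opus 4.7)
The plan is to track how the $j$-th component of a lap counter can ever take on a value $\ell' \geq 2$, by showing that the system-wide maximum of that component increases by at most one per step and only via line~\ref{ln:inc} once it is $\geq 1$.

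First, for any configuration $D$ of $\alpha$, I will define $S_j(D)$ to be the maximum value of the $j$-th component, taken over the local lap counter $U$ of every process and over the lap counter field of every swap object $B_1, \ldots, B_{n-k}$ in $D$. In the initial configuration, $S_j = 0$. I need to show two things: (i) $S_j$ is nondecreasing along $\alpha$, and (ii) whenever $S_j$ strictly increases from some value $x \geq 1$ to $x+1$, the step responsible is an execution of line~\ref{ln:inc} by some process whose local $U[j]$ moved from $x$ to $x+1$.

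The main work is a short case analysis on the step type. Lines~\ref{ln:init1}–\ref{ln:init2} can only raise $S_j$ from $0$ to $1$ (when the initializing process has input $j$), so they are irrelevant for $\ell' \geq 2$. A swap on line~\ref{ln:swap} merely exchanges the local $U$ and the object's lap counter: the component value $U[j]$ that is placed into $B_i$ and the value $U'[j]$ that is copied out were both already in the pool of counters in the preceding configuration, so $S_j$ is unchanged. Line~\ref{ln:setmaxlocal} sets $U[j] \gets \max(U[j], U'[j])$, where $U'[j]$ was the value of the $j$-th component of $B_i$ just before the preceding swap, hence $U'[j] \leq S_j$; combined with $U[j] \leq S_j$, this shows $S_j$ cannot grow here either. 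The only remaining step that modifies a counter component is line~\ref{ln:inc}, which increments the local $U[v]$ by exactly one; when $v = j$ this raises $S_j$ by at most $1$, and when $v \neq j$ it has no effect on $S_j$. The subtle point — and the main obstacle — is line~\ref{ln:setmaxlocal}: I need to argue carefully that $U'[j]$ was part of the pool that defined $S_j$ at the start of the current iteration and that no subsequent step prior to line~\ref{ln:setmaxlocal} could have removed a larger value from the pool. This follows because between the swap and the max-update, the process performs only local control-flow steps that do not alter any lap counter.

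Given (i) and (ii), the observation follows easily. Since $S_j$ is $0$ initially, never decreases, only ever moves up by $1$ in a single step, and moves above $1$ only through line~\ref{ln:inc}, the sequence of values taken by $S_j$ along $\alpha$ up to $C$ must pass through every integer from $1$ to $S_j(C)$. Because the $j$-th component of some local counter in $C$ is $\ell$, we have $S_j(C) \geq \ell$. Hence, for each $\ell' \in \{2, \ldots, \ell\}$, there is a step in $\alpha$ before $C$ that raises $S_j$ from $\ell' - 1$ to $\ell'$, and by (ii) this step is an execution of line~\ref{ln:inc} by some process that increments the $j$-th component of its local lap counter from $\ell' - 1$ to $\ell'$, as required.
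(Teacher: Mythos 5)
Your proposal is correct and takes essentially the same approach as the paper: the paper argues informally, immediately before the observation, that the maximum of the $j$-th component over local lap counters cannot increase at a line~\ref{ln:setmaxlocal} step (since the dominating value was swapped in by a process whose own counter still dominates it, by monotonicity), so it can only increase, by one at a time, via line~\ref{ln:inc}. Your variant additionally folds the objects' lap-counter fields into the tracked maximum $S_j$, which turns the swap into a pure exchange and avoids appealing separately to Observation~\ref{obs:alwaysforward}; just note that the nondecrease of $S_j$ across line~\ref{ln:swap} and line~\ref{ln:setmaxlocal} relies on the standard convention that a step is a single shared access together with all ensuing local computation, so the eviction of the old object value and its absorption into the local counter happen atomically.
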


Process $p$ is the only process that writes its own identifier to any object.
Hence, if $p$ completes a lap, then all of the objects $B_1, \ldots, B_{n-k}$ must have contained the value of $p$'s local lap counter and its identifier immediately before $p$ began the loop on lines~\ref{ln:swaploop}--\ref{ln:endswaploop}.
We say a configuration $C$ is \emph{$\langle V, p\rangle$-total} if the value of every object $B_1, \ldots, B_{n-k}$ is $\langle V, p\rangle$ in $C$ and the value of $p$'s local lap counter is $V$ in $C$.
This gives us the following observation.

\begin{observation}\label{obs:total}
Suppose that a process $p$ takes a step in which it completes a lap.
Let $V$ be the value of $p$'s local lap counter immediately before this step.
Then the configuration immediately after $p$ performed line~\ref{ln:setconflict} for the last time before completing the lap was $\langle V, p\rangle$-total, and during the loop on lines~\ref{ln:swaploop}--\ref{ln:endswaploop}, $p$ swapped $\langle V, p\rangle$ into $B_1, \ldots, B_{n-k}$ and obtained $\langle V, p\rangle$ as the response to each of these operations.
%Then immediately before this during the loop on lines~\ref{ln:swaploop}--\ref{ln:endswaploop}, $p$ applies a sequence of $n-k$ \emph{Swap} operations to $B_1, \ldots, B_{n-k}$ that all return $\langle V, p\rangle$, where $V$ is the value of $p$'s local lap counter immediately before applying the first of these operations.
%Furthermore, the configuration that immediately precedes the first of these \emph{Swap} operations is $\langle V, p\rangle$-total.
\end{observation}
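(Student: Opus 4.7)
The plan is to reduce Observation~\ref{obs:total} to two facts: during the iteration that completes the lap, process $p$'s local lap counter $U$ is invariant, and only $p$ itself can install a value with identifier component $p$ into any object. Once both facts are in hand, a short continuity argument on each $B_i$ finishes the proof.

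First I would argue that $U$ does not change during the iteration in which $p$ completes a lap. The variable \emph{conflict} is set to \textsc{False} once at line~\ref{ln:setconflict} and is never reset to \textsc{False} inside the loop on lines~\ref{ln:swaploop}--\ref{ln:endswaploop}, so the test on line~\ref{ln:conflictcond} passes only if \emph{conflict} remained \textsc{False} for the whole iteration. Hence $p$ never entered the branch beginning at line~\ref{ln:idcond}, and the update on line~\ref{ln:setmaxlocal}, the only place inside the loop where $U$ is modified, was never executed. Consequently $U = V$ throughout the iteration, and for each $i \in \{1, \ldots, n-k\}$, process $p$ swapped $\langle V, p\rangle$ into $B_i$ and received $\langle V, p\rangle$ as its response. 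This already shows the second half of the statement, and it also gives that $p$'s local lap counter is $V$ in the configuration $C$ immediately after $p$'s last execution of line~\ref{ln:setconflict}.

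Next I would show that $\textit{value}(B_i, C) = \langle V, p\rangle$ for every $i$. Let $C_i$ be the configuration immediately before $p$ swaps into $B_i$ during the current iteration; by the previous paragraph $\textit{value}(B_i, C_i) = \langle V, p\rangle$. Line~\ref{ln:swap} is the only place in the algorithm where an object is modified, and since process identifiers are unique, only $p$ ever swaps a pair whose identifier component equals $p$. Because $B_i$'s initial identifier is $\bot \neq p$, some swap by $p$ must have modified $B_i$ prior to $C_i$, and since $p$ had not yet accessed $B_i$ in the current iteration, the most recent such swap occurred no later than $C$. Moreover, no other process $q \neq p$ could have modified $B_i$ between $C$ and $C_i$, since any such step would replace the identifier component and contradict $\textit{value}(B_i, C_i) = \langle V, p\rangle$. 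Hence $B_i$ retained the value $\langle V, p\rangle$ continuously from $p$'s earlier swap through $C_i$, and in particular $\textit{value}(B_i, C) = \langle V, p\rangle$, so $C$ is $\langle V, p\rangle$-total.

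The main obstacle I anticipate is the continuity argument in the last paragraph: bridging the gap between the direct evidence at $C_i$ (from the swap response) and the claim at the earlier configuration $C$. Nothing in the code directly forbids another process from rewriting $B_i$ in between — the argument instead leverages the combination of two structural facts, namely that \emph{Swap} is the only operation available on these objects and that $p$'s identifier is unique, so any interference by another process would leave a visible trace in the identifier field.
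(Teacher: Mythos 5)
Your proof is correct and uses the same central fact the paper relies on: only $p$ ever installs a pair whose identifier field is $p$, so the response $\langle V,p\rangle$ at each swap forces $B_i$ to have contained $\langle V,p\rangle$ throughout the iteration, and hence already at $C$. You spell out two steps the paper leaves informal --- that $U$ is invariant across the whole iteration (so $U=V$ at line~\ref{ln:setconflict} as well as at each swap), and the per-object continuity argument from $C$ to $C_i$ --- but the underlying approach is the same.
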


If $V$ and $V'$ are two lap counter values, then we say that $V$ is \emph{dominated by} $V'$, or $V \preceq V'$, if and only if $V[j] \leq V'[j]$, for all $j \in \{0, \ldots, m-1\}$.
%Notice that the value of $p$'s local lap counter immediately before the loop on lines~\ref{ln:updatelocal}--\ref{ln:setmaxlocal} is dominated by the value of its local lap counter immediately after this loop.
After initializing its local lap counter $U$, a process $p$ only modifies $U$ on lines~\ref{ln:setmaxlocal} and \ref{ln:inc}.
These lines may only increase the values of the components of $p$'s local lap counter.
Hence, the value of $p$'s local lap counter before taking any step is dominated by the value of its local lap counter after taking the step.

\begin{observation}\label{obs:alwaysforward}
	If process $p$ has a local lap counter $V$ in a configuration $C$ of an execution $\alpha$, and in some later configuration $C'$ of $\alpha$, $p$ has local lap counter $V'$, then $V \preceq V'$.
\end{observation}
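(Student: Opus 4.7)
The plan is to prove Observation~\ref{obs:alwaysforward} by a straightforward induction on the number of steps taken by $p$ between $C$ and $C'$. In the base case (zero steps), $p$'s local state is identical in the two configurations, so $V = V'$ and hence $V \preceq V'$. For the inductive step, it suffices to verify that no single step by $p$ can strictly decrease any component of its local lap counter $U$, so the relation $\preceq$ is preserved step by step.

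To carry out the inductive step, I would enumerate every point in Algorithm~\ref{alg:setagreeswap} where $p$ writes to $U$ after the initialization on lines~\ref{ln:init1}--\ref{ln:init2}. The only such writes occur on line~\ref{ln:setmaxlocal}, where $p$ sets $U[j] \gets \max(U[j], U'[j])$, and on line~\ref{ln:inc}, where $p$ sets $U[v] \gets U[v] + 1$. In both cases the new value of $U[j]$ (respectively $U[v]$) is at least its old value, and no other component of $U$ is modified; all other steps by $p$ (swaps, local tests, updates to \emph{conflict}, and the decision on line~\ref{ln:decide}) leave $U$ unchanged entirely. Thus, if $U$ has value $V$ just before a step by $p$ and value $V''$ just afterwards, then $V \preceq V''$.

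Composing this componentwise inequality along the sequence of steps taken by $p$ between $C$ and $C'$ (and using that $\preceq$ is transitive) yields $V \preceq V'$, which completes the argument. I do not anticipate any real obstacle here: the observation is a direct syntactic check against the pseudocode, and the only mild subtlety is noting that the statement presupposes $p$ already has a local lap counter in $C$, so the initial assignments on lines~\ref{ln:init1}--\ref{ln:init2} occur no later than in $C$ itself and therefore fall outside the inductive range.
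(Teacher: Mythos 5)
Your proposal is correct and matches the paper's reasoning: the paper also justifies this observation by noting that after the initialization on lines~\ref{ln:init1}--\ref{ln:init2}, the only modifications to $U$ occur on lines~\ref{ln:setmaxlocal} and~\ref{ln:inc}, both of which can only increase components. You have simply phrased the same per-step monotonicity argument as a formal induction with an explicit appeal to transitivity of $\preceq$.
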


Following the initialization on lines~\ref{ln:init1}--\ref{ln:init2}, the values of the components of the local lap counters of each process are nonnegative.
Hence, the condition on line~\ref{ln:decidecond} implies that $p$'s preference must be on lap $2$ or greater when it decides.
This gives us the following.

\begin{observation}\label{obs:atleast2}
	If $V$ is the value of $p$'s local lap counter when it decides the value $x$, then $V[x] \geq 2$.
\end{observation}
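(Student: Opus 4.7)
The plan is to derive the bound directly from the decision condition on line~\ref{ln:decidecond}. A process $p$ only reaches line~\ref{ln:decide} after the check on line~\ref{ln:decidecond} succeeds, which guarantees that $V[x] \geq V[j] + 2$ for every $j \neq x$ (here $V$ is $p$'s local lap counter and $x$ is the candidate value chosen on line~\ref{ln:minindex}). So the whole claim reduces to showing that $V[j] \geq 0$ for some $j \neq x$.

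The first step is to establish nonnegativity of the local lap counter as an invariant over the entire execution of $p$. The initialization on lines~\ref{ln:init1}--\ref{ln:init2} sets every entry of $U$ to $0$ and then raises $U[v]$ to $1$, so nonnegativity holds immediately after $p$'s initialization. The only lines that subsequently modify $U$ are line~\ref{ln:setmaxlocal}, which replaces $U[j]$ with $\max(U[j], U'[j])$, and line~\ref{ln:inc}, which increments $U[v]$. Neither of these can drop any component of $U$ below its previous value, so by a one-line induction on the steps of $p$ (or by direct appeal to Observation~\ref{obs:alwaysforward}), every component of $p$'s local lap counter is at least $0$ in every configuration in which $p$ has executed the initialization.

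Given nonnegativity and the decision condition, the conclusion is immediate: for $m \geq 2$, pick any $j \neq x$ to obtain $V[x] \geq V[j] + 2 \geq 2$. The only edge case is $m = 1$, but there the $m$-valued $k$-set agreement problem is trivial and is explicitly excluded in the paragraph defining it, so the observation is only applied in the regime $m \geq 2$. I do not anticipate any real obstacle here: this is essentially a one-line consequence of the guard on line~\ref{ln:decidecond} combined with the fact that the entries of $U$ never become negative.
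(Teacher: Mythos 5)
Your proof is correct and takes essentially the same approach as the paper: establish nonnegativity of the components of $p$'s local lap counter after initialization (the paper notes this in the sentence preceding the observation; you appeal to Observation~\ref{obs:alwaysforward}), then combine it with the guard on line~\ref{ln:decidecond} to get $V[x] \geq V[j] + 2 \geq 2$ for any $j \neq x$. Your explicit note about the $m=1$ edge case is a reasonable addition, but, as you observe, it is ruled out since the algorithm is only of interest when $m > k \geq 1$.
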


In the next lemma, we show that if there is a $\langle V, p\rangle$-total configuration $C$ followed by a $\langle V', p'\rangle$-total configuration $C'$ with $V \not\preceq V'$ in some execution, then all of the objects must have been swapped by distinct processes between these two configurations.
Furthermore, when these processes apply their swap operations between $C$ and $C'$, they obtain lap counters that dominate $V$.
This implies that their local lap counters dominate $V$ in $C'$.
We formalize this in the following lemma.

\begin{lemma}\label{lem:manyprocesses}
%TODO
%Where $U' \not\preceq U$...
%If $U[u] > U'[u]$...
%There are $n-k$ distinct processes different from $p$ and $p'$ that apply \emph{Swap} operations to different objects between $C$ and $C'$.
%
%The swap operations applied by these processes between $C$ and $C'$ must be dominated by $U'$
%
%Each process can change its $i$-th component from less than $U[i]$ to at least $U[i]$ at most once
%
%For each of these objects, we must have a process
%
%Once your $i$-th component is at least $U[i]$, you can never go back
Consider some $\langle V, p\rangle$-total configuration $C$ in an execution $\alpha$.
Let $C'$ be a $\langle V', p'\rangle$-total configuration that appears after $C$ in $\alpha$.
If $V \not\preceq V'$, then there are $n-k$ distinct processes different from $p$ and $p'$ that apply \emph{Swap} operations to different objects between $C$ and $C'$, and the values of the local lap counters of these processes in $C'$ dominate $V$.
\end{lemma}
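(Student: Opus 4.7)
For each object $B_i$, I will identify a distinguished swap $\tau_i$ on $B_i$ performed between $C$ and $C'$, by a process $r_i \notin \{p, p'\}$, whose local lap counter in $C'$ dominates $V$. Since $V \not\preceq V'$, the lap counter field of $B_i$ dominates $V$ at $C$ (where it equals $V$) but fails to dominate $V$ at $C'$ (where it equals $V'$). Hence there must be a first swap on $B_i$ after $C$ whose written lap counter fails to dominate $V$; let $\tau_i$ denote this swap, $r_i$ the process performing it, $W_i^+$ the lap counter that $r_i$ writes, and $W_i^-$ the lap counter that $r_i$ reads in response. By minimality of $\tau_i$, every earlier swap on $B_i$ after $C$ (if any) wrote a lap counter $\succeq V$, so $W_i^- \succeq V$. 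Because $W_i^- \succeq V$ but $W_i^+ \not\succeq V$, we have $W_i^- \neq W_i^+$, so the componentwise max update on lines~\ref{ln:updatelocal}--\ref{ln:setmaxlocal} fires, and $r_i$'s local lap counter immediately after $\tau_i$ equals $\max(W_i^+, W_i^-) \succeq V$. By Observation~\ref{obs:alwaysforward}, $r_i$'s local lap counter in $C'$ also dominates $V$.

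Next, I verify that $r_i \notin \{p, p'\}$. If $r_i = p$, then $p$ wrote a lap counter $W_i^+ \not\succeq V$ at $\tau_i$; however, $p$'s local lap counter equals $V$ at $C$ and can only grow afterwards by Observation~\ref{obs:alwaysforward}, a contradiction. If $r_i = p'$, then $p'$'s local lap counter becomes $\succeq V$ immediately after $\tau_i$, which by Observation~\ref{obs:alwaysforward} forces its value in $C'$ to dominate $V$; this contradicts that it equals $V' \not\succeq V$. For distinctness of the $r_i$, suppose $r_i = r_j$ for distinct $i, j$, and WLOG $\tau_i$ precedes $\tau_j$ in $\alpha$. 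Then the common process has a lap counter $\succeq V$ immediately after $\tau_i$, so by Observation~\ref{obs:alwaysforward} its lap counter at $\tau_j$ also dominates $V$; but this lap counter is exactly $W_j^+$, which does not dominate $V$, a contradiction.

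The main potential obstacle is picking the right distinguished swap on each $B_i$. A natural first attempt---say, the last swap on $B_i$ by a process other than $p'$---runs into trouble because one cannot then rule out $r_i = p$ or guarantee that $r_i$'s lap counter in $C'$ dominates $V$. Choosing $\tau_i$ to be the first ``transition'' swap on $B_i$ sidesteps both issues: the read response is forced to dominate $V$, which propagates to $r_i$'s lap counter through the max update, and the same monotonicity argument then yields distinctness of the $r_i$ essentially for free.
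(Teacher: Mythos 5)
Your proposal is correct and follows essentially the same approach as the paper: in both, one fixes the first swap on each $B_i$ that writes a lap counter not dominating $V$, observes that its response (by minimality) dominates $V$, concludes via the componentwise-max update and Observation~\ref{obs:alwaysforward} that the swapper's local lap counter dominates $V$ thereafter, and then derives distinctness and the exclusion of $p$ and $p'$ by the same monotonicity argument. The only (minor) difference is that you spell out explicitly why the max update on lines~\ref{ln:updatelocal}--\ref{ln:setmaxlocal} actually fires (i.e.\ that $W_i^- \neq W_i^+$), a step the paper leaves implicit.
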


\begin{proof}
Since $B_1, \ldots, B_{n-k}$ all contain $\langle V, p\rangle$ in $C$ and $\langle V', p'\rangle \neq \langle V, p\rangle$ in $C'$, every object must have had a lap counter that does not dominate $V$ swapped into it between $C$ and $C'$.
Let $i \in \{1, \ldots, n-k\}$ and consider the first process $q_i$ to perform a \emph{Swap}$(B_i, \langle V'', q_i\rangle)$ operation such that $V \not\preceq V''$ between $C$ and $C'$.
By Observation~\ref{obs:alwaysforward}, $q_i$ cannot be $p$, because $p$ has local lap counter $V$ in configuration $C$.
In the response to this operation, $q_i$ obtains a lap counter that dominates $V$.
Then $q$ updates its local lap counter to a value that dominates $V$ in the loop on lines~\ref{ln:updatelocal}--\ref{ln:setmaxlocal}.
Hence, every subsequent lap counter swapped into any object by $q_i$ dominates $V$.
Therefore, the processes $q_1, \ldots, q_{n-k}$ are distinct.
%Therefore, the $n-k$ processes who first swap lap counters whose $x$-th components are less than $V[x]$ into each object $B_1, \ldots, B_{n-k}$ between $C$ and $C'$ must be distinct.
Furthermore, by Observation~\ref{obs:alwaysforward}, the local lap counters of $q_1, \ldots, q_{n-k}$ in $C'$ dominate $V$.
Since the local lap counter of $p'$ in $C'$ is $V'$ and $V \not\preceq V'$, this implies that $p' \not\in \{q_1, \ldots, q_{n-k}\}$.

\end{proof}

\begin{lemma}\label{lem:agree}
Algorithm~\ref{alg:setagreeswap} satisfies $k$-agreement.
\end{lemma}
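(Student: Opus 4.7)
The plan is a contradiction argument that uses Lemma~\ref{lem:manyprocesses} as the main lever. I will assume that $k+1$ distinct values are decided and try to produce $n+1$ distinct processes, contradicting the system size. For each decided value $v$, I fix a representative decider $p_v$ whose counter $V_v$ at its total configuration $C_v$ (existing by Observation~\ref{obs:total}) maximizes $V_v[v]$ over all deciders of $v$.

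The first step is to select $v^*$ achieving $V_{v^*}[v^*] = \max_v V_v[v]$ and verify that $V_{v^*}$ is maximal in the pointwise order $\preceq$: if some $V_m \succeq V_{v^*}$ with $m \neq v^*$, the decision condition gives $V_m[v_m] \geq V_m[v^*] + 2 \geq V_{v^*}[v^*] + 2$, contradicting maximality. Consequently $V_{v^*} \not\preceq V_j$ for every $j \neq v^*$. The second step, assuming some $C_j$ strictly follows $C_{v^*}$ in time, is to apply Lemma~\ref{lem:manyprocesses} to the pair $(C_{v^*}, C_j)$. This yields $n-k$ distinct processes $S$, each different from $p_{v^*}$ and $p_j$, whose local counters dominate $V_{v^*}$ at $C_j$.

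The third step is to show $S$ avoids all representatives. If $q \in S$ eventually decides some value $w$, then Observation~\ref{obs:alwaysforward} applied to the counter $q$ has after its swap in $[C_{v^*}, C_j]$ gives decision counter $V_q \succeq V_{v^*}$. When $w = v^*$, the exclusion $q \neq p_{v^*}$ forces $q$ to be a non-representative decider of $v^*$; when $w \neq v^*$, the decision condition gives $V_q[w] \geq V_q[v^*] + 2 > V_{v^*}[v^*] \geq V_w[w]$, contradicting the max-choice of $V_w$. Either way $q$ is not a representative, so the $k+1$ representatives together with the $n-k$ elements of $S$ give $n+1$ distinct processes.

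The main obstacle is the case where $C_{v^*}$ is itself the latest total configuration among the $\{C_v\}$, so no usable $C_j$ lies after it; this is a chain-like scenario where $V_m \preceq V_{v^*}$ for every $m \neq v^*$, and Lemma~\ref{lem:manyprocesses} cannot be applied with $C_{v^*}$ as the earlier configuration. I expect to handle this by locating an auxiliary later total configuration arising from a non-decider lap completion (each line~\ref{ln:inc} increment produces such a total configuration, whose existence is supplied by Observation~\ref{obs:onebyone}) and applying the lemma against it, carrying the witness-versus-representative disjointness argument through in the same way.
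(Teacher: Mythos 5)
Your proposal diverges genuinely from the paper's proof. The paper fixes $p_j$ to be the \emph{first} decider of each value, so that the total configurations $C_1, \ldots, C_{k+1}$ appear in temporal order, then (i) proves $\ell_{k+1} \geq \ell_j$ by a pigeonhole on Lemma~\ref{lem:manyprocesses}, and (ii) runs a second, separate contradiction argument via Observation~\ref{obs:onebyone} with a careful two-case analysis on whether the auxiliary total configuration falls before or after $C_t$. Your approach instead picks max-counter representatives and aims for a direct ``more than $n$ processes'' count. Your non-obstacle branch (some $C_j$ strictly after $C_{v^*}$) does go through: the domination of the $n-k$ witnesses' counters at $C_j$, together with the maximality of $V_{v^*}[v^*]$ and the decision condition, does exclude every representative, yielding $n+1$ distinct processes. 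That part is sound and rather cleaner than the paper's pigeonhole.

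The gap is the case you flag as the ``main obstacle,'' and it is not a corner case you can wave away---in the paper's setting the last-deciding representative ends up with the maximal lap count, so ``$C_{v^*}$ is the latest total configuration'' is the generic situation, not an exceptional one. Two concrete problems with your proposed fix. First, your characterization of the obstacle as ``$V_m \preceq V_{v^*}$ for every $m \neq v^*$'' does not follow from $C_{v^*}$ being latest; maximality of $V_{v^*}$ only rules out $V_m \succeq V_{v^*}$, so $V_m$ may be incomparable, and in that sub-case you could apply Lemma~\ref{lem:manyprocesses} with $(C_m, C_{v^*})$---but then the $n-k$ witnesses' counters dominate $V_m$, not $V_{v^*}$, so your disjointness computation $V_q[w] \geq V_q[v^*]+2 \geq V_{v^*}[v^*]+2 > V_w[w]$ no longer has the needed anchor, and the counting argument breaks. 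Second, Observation~\ref{obs:onebyone} locates increment steps occurring \emph{before} the configuration where a component reaches a given value; it gives you no control over whether the associated $\langle V, q\rangle$-total configuration lands after $C_{v^*}$ in the execution, nor whether $V_{v^*} \not\preceq V$ (you would typically get $V$ \emph{dominated by} $V_{v^*}$ on component $v^*$, i.e., the wrong direction for applying Lemma~\ref{lem:manyprocesses} with $C_{v^*}$ as the earlier configuration). The paper's resolution handles precisely this difficulty by allowing the auxiliary total configuration $C$ to fall on either side of $C_t$ and running two parallel pigeonhole arguments with $p_1, \ldots, p_k, q$ as the distinguished processes; your sketch does not reduce to that, and ``carrying the witness-versus-representative disjointness argument through in the same way'' is not literally possible because the set of counters being dominated changes. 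You would need to rebuild the disjointness argument around $V_t$ (or the counters that \emph{are} dominated), which is essentially what forces the paper into its more intricate case analysis.
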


\begin{proof}
To obtain a contradiction, suppose that there exists an execution of the algorithm in which at least $k+1$ distinct values are decided.
Let $\alpha$ be a prefix of such an execution in which exactly $k+1$ distinct values are decided.
For all $1 \leq j \leq k+1$, let $x_j$ be the $j$-th value decided during $\alpha$, and let $p_j$ be the first process to decide $x_j$ during $\alpha$.
%For all $i \in \{1, \ldots, k+1\}$, let $p_i$ be a process that decides $v_i$ during $\alpha$.
Let $V_j$ be the value of $p_j$'s lap counter when it decides.
By Observation~\ref{obs:total}, the configuration $C_j$ immediately after $p_j$ performed line~\ref{ln:setconflict} for the last time is $\langle V_j, p_j\rangle$-total, and during its last execution of the loop on lines~\ref{ln:swaploop}--\ref{ln:endswaploop} $p_j$ swaps $\langle V_j, p_j\rangle$ into $B_1, \ldots, B_{n-k}$ and obtains $\langle V_j, p_j\rangle$ as the response to each of these operations.
Notice that $p_j$ is the only process that swaps $p_j$ to the identifier field of any object, so no other process applies a \emph{Swap} operation to any object $B_i$ between $C_j$ and the last time $p_j$ applies a \emph{Swap} operation to $B_i$.
Since $p_{j-1}$ decides before $p_j$ decides, this implies that the last $n-k$ \emph{Swap} operations by $p_{j-1}$ happen before $C_j$.
In particular, $p_{j-1}$ decides before $C_j$ in $\alpha$.
Furthermore, $C_1, \ldots, C_{k+1}$ appear in order in $\alpha$.

%Since $p_i$ is the only process that swaps $p_i$ to the identifier field of any object and all of the objects contain $\langle V_{i-1}, p_{i-1}\rangle$ in $C_{i-1}$, the last $n-k$ \emph{Swap} operations by $p_i$ must happen before $C_{i+1}$.
%Consider some $i \in \{1, \ldots, k\}$.
%Notice that $p_i$ applies exactly one \emph{Swap} operation to each object after $C_i$, and each of these operations returns $\langle V_i, p_i\rangle$.
%Notice that $p_i$ does not apply any \emph{Swap} operations to $B_{n-k}$ after $C_i$ and before $s_{i, n-k}$ in $\alpha$.
%Furthermore, $p_i$ is the only process that swaps $p_i$ to the identifier field of any object.
%Since every object contains $\langle W_{i+1}, p_{i+1}\rangle \neq \langle W_i, p_i\rangle$ in configuration $C_{i+1}$, all of these \emph{Swap} operations by $p_i$ must happen before $C_{i+1}$.
%Since $s_{i, n-k}$ returns $\langle W_i, p_i\rangle$ and $B_{n-k}$ contains $\langle W_{i+1}, p_{i+1}\rangle$ in $C_{i+1}$, it must be the case that $s_{i, n-k}$ occurs before $C_{i+1}$ in $\alpha$.
%Thus, $p_i$ decides before $C_{i+1}$ in $\alpha$.

For all $i \in \{1, \ldots, k+1\}$, define $\ell_i = V_i[x_i]$.
We first show that $\ell_{k+1} \geq \ell_j$ for all $j \in \{1, \ldots, k\}$.
To obtain a contradiction, suppose that $\ell_{k+1} < \ell_j$ for some $j \in \{1, \ldots, k\}$.
Let $j$ be the maximum number for which this is satisfied.
Then for all $j' \in \{j+1, \ldots, k\}$, $\ell_{k+1} \geq \ell_{j'}$.
Since $p_{k+1}$ decides $v_{k+1}$ while its local lap counter is $V_{k+1}$, the condition on line~\ref{ln:decidecond} implies that $\ell_{k+1} \geq V_{k+1}[x_j] + 2$.
Since $\ell_j > \ell_{k+1}$, we have $V_j[v_j] = \ell_j > V_{k+1}[v_j] + 2$.
%Hence, by Lemma~\ref{lem:manyprocesses}, there are $n-k$ distinct processes different from $p_j$ and $p_{k+1}$ that apply \emph{Swap} operations to different objects between $C_j$ and $C_{k+1}$, and the values of the $v_j$-th components of the local lap counters of these processes are at least $\ell_j$ in $C_{k+1}$.
Hence, $V_j \not\preceq V_{k+1}$.
By Lemma~\ref{lem:manyprocesses}, there are $n-k$ distinct processes different from $p_j$ and $p_{k+1}$ that apply \emph{Swap} operations to different objects between $C_j$ and $C_{k+1}$, and the values of the local lap counters of these processes in $C_{k+1}$ dominate $V_j$.
%Hence, by Lemma~\ref{lem:betweencomponenttotal}, between $C_j$ and $C_{k+1}$ there are \emph{Swap} operations $s_1, \ldots, s_{n-k}$ applied to $B_1, \ldots, B_{n-k}$ by $n-k$ distinct processes $r_1, \ldots, r_{n-k}$ different from $p_j$ and $p_{k+1}$.
%These \emph{Swap} operations $s_1, \ldots, s_{n-k}$ swap lap counters $V_1, \ldots, V_{n-k}$ with $V_1[v_j], \ldots, V_{n-k}[v_j] < \ell_j$ and return lap counters $U_1, \ldots, U_{n-k}$ that satisfy $U_1[v_j], \ldots, U_{n-k}[v_j] > \ell_j$.
Since there are only $n-k-1$ processes other than $p_1, \ldots, p_{k+1}$, at least one of these $n-k$ processes must be some $p_{j'} \in \{p_1, \ldots, p_{k+1}\}$.
Note that $p_{j'} \not \in \{p_1, \ldots, p_{j-1}\}$, since $p_1, \ldots, p_{j-1}$ decided before $C_j$.
Hence, $p_{j'} \in \{p_{j+1}, \ldots, p_k\}$.
Since $p_{j'}$ does not change its local lap counter after $C_{j'}$ in $\alpha$, and $C_{j'}$ appears before $C_{k+1}$ in $\alpha$, the local lap counter of $p_{j'}$ in $C_{k+1}$ is $V_{j'}$.
Hence, $V_{j'}$ dominates $V_j$.
%Since $V_j$, Observation~\ref{obs:alwaysforward} implies that $V_{j'}$ dominates $V_j$.
In particular, $V_{j'}[x_j] \geq V_j[x_j] = \ell_j$.
Furthermore, $V_{j'}[x_{j'}] = \ell_{j'} \geq V_{j'}[x_j]+2$ by the condition on line~\ref{ln:decidecond}.
Thus, $\ell_{j'} > \ell_j > \ell_{k+1}$.
This contradicts the definition of $j$.

%There must be at least $n-k$ distinct processes that took steps between $C_j$ and $C_{k+1}$...
%Since there are only $n-k-1$ processes other than $p_1, \ldots, p_{k+1}$, at least one of these processes must be one of $p_1, \ldots, p_{k+1}$.
%Can't be $p_1, \ldots, p_{j-1}$ since they ahve already terminated...

Hence, $\ell_{k+1} \geq \ell_j$ for all $j \in \{1, \ldots, k\}$.
Let $\ell_t =$ max$\{\ell_j\; :\; 1 \leq j \leq k\}$.
By Observation~\ref{obs:atleast2}, $\ell_t \geq 2$.
Therefore, by Observation~\ref{obs:onebyone}, there is a step before $C_{k+1}$ in $\alpha$ by some process $q$ in which $q$ increments component $x_{k+1}$ of its local lap counter from $\ell_t-1$ to $\ell_t$.
Let $V$ be the value of $q$'s local lap counter immediately before taking this step, so $V[x_{k+1}] = \ell_t-1$.
By definition of $\ell_t$, we have $\ell_t \geq \ell_j$ for all $j \in \{1, \ldots, k\}$.
Since $p_j$ decides $x_j$ while it has local lap counter $V_j$, the condition on line~\ref{ln:decidecond} implies that $\ell_j \geq V_j[x_{i}]+2$ for all $i \in \{1, \ldots, k+1\} - \{j\}$.
Hence, $V[x_{k+1}] = \ell_t-1 \geq \ell_j-1 \geq V_j[x_{k+1}]+1$ for all $j \in \{1, \ldots, k\}$.
When process $p_j$ decides, the value of its local lap counter is $V_j$.
Since $V \not\preceq V_j$ for any $j \in \{1, \ldots, k\}$, Observation~\ref{obs:alwaysforward} implies that $q \not\in \{p_1, \ldots, p_k\}$.

By Observation~\ref{obs:total}, the configuration $C$ immediately after $q$ performed line~\ref{ln:setconflict} for the last time before completing the lap for $x_{k+1}$ is $\langle V, q\rangle$-total.
%, and during this loop it swapped $\langle V, q\rangle$ into $B_1, \ldots, B_{n-k}$ and obtained $\langle V, q\rangle$ as the response to each of these operations.
Suppose that $C$ appears before $C_t$ in $\alpha$.
Since $V[x_{k+1}] = \ell_t-1 > V_t[x_{k+1}]$, we have $V \not\preceq V_t$.
Hence, Lemma~\ref{lem:manyprocesses} implies that there are $n-k$ distinct processes different from $q$ and $p_t$ that apply \emph{Swap} operations to different objects between $C$ and $C_t$, and the values of the local lap counters of these processes dominate $V$ in $C_t$.
Since there are only $n-k-1$ processes other than $p_1, \ldots, p_k, q$, at least one of these processes $p_{t'}$ is in $\{p_1, \ldots, p_k\} - \{q, p_t\}$.
Since $V \preceq V_{t'}$, we know that $V_{t'}[x_{k+1}] \geq V[x_{k+1}] = \ell_t-1$.
This contradicts the fact that $\ell_t-1 \geq \ell_{t'}-1 \geq V_{t'}[x_{k+1}]+1$.

%None of these processes can be $p_1, \ldots, p_k$, since $W_j[v_{k+1}] \leq \ell_j-2 \leq \ell_t-2$ for all $j \in \{1, \ldots, k\}$.\todo{There must be a process $p_j$ that is one of these processes. This is a contradiction because...}
%This is a contradiction, because there are only $n-k-1$ processes other than $p_1, \ldots, p_k, q$.

Therefore, $C$ appears after $C_t$ in $\alpha$.
Notice that processes only increment a component of their local lap counter with maximal value on line~\ref{ln:inc}.
Since $q$ increments the $x_{k+1}$-th component of its lap counter from $\ell_t-1$ to $\ell_t$, we know that $V[x_{k+1}] \geq V[x_t]$.
Since $V[x_{k+1}] = \ell_t-1 < \ell_t$, this implies that $V[x_t] < \ell_t = V_t[x_t]$.
Hence, $V_t \not\preceq V$.
By Lemma~\ref{lem:manyprocesses}, there are $n-k$ distinct processes different from $q$ and $p_t$ that apply \emph{Swap} operations to different objects between $C_t$ and $C$, and the values of the local lap counters of these processes dominate $V_t$ in $C$.
Since there are only $n-k-1$ processes other than $p_1, \ldots, p_k, q$, at least one of these processes $p_{t'}$ is in $\{p_1, \ldots, p_k\} - \{q, p_t\}$.
Since $V_t \preceq V_{t'}$, we know that $V_{t'}[x_{t}] \geq V_t[x_{t}] = \ell_t$.
This contradicts the fact that $\ell_t-1 \geq \ell_{t'}-1 \geq V_{t'}[x_{t}]+1$.
%If there is a process $p_t$ with $t \in \{1, \ldots, k\}$ such that $p_t \in \{r_1, \ldots, r_{n-k}\}$, then $W_t[v_{k+1}] \geq \ell_t-1$.\todo{justify more}
%Hence, $W_t[v_t] = \ell_t \geq W_t[v_j]+2$ by the condition on line~\ref{ln:decidecond}.
%Thus, $\ell_t > \ell_j \geq \ell_{k+1}$.
%This contradicts the definition of $j$.
\end{proof}

\begin{lemma}\label{lem:valid}
Algorithm~\ref{alg:setagreeswap} satisfies validity.
\end{lemma}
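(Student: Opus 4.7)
The plan is to prove a simple invariant by induction on the length of the execution: in every reachable configuration, for every process $p$ and every $j \in \{0, \ldots, m-1\}$, if the $j$-th component of $p$'s local lap counter is at least $1$, then $j$ is the input of some process; and similarly, if the $j$-th component of any lap counter stored in any swap object $B_i$ is at least $1$, then $j$ is the input of some process. Call such a value $j$ \emph{valid}. The lemma follows immediately from this invariant combined with Observation~\ref{obs:atleast2}: when process $p$ decides value $v$, its local lap counter $U$ satisfies $U[v] \geq 2 \geq 1$, so $v$ must be valid.

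To prove the invariant, I would examine every line of Algorithm~\ref{alg:setagreeswap} where a local lap counter or an object's lap counter field can be modified. The initialization on lines~\ref{ln:init1}--\ref{ln:init2} sets $U[v] = 1$ where $v$ is $p$'s input, which is valid by definition. The \emph{Swap} operation on line~\ref{ln:swap} writes $p$'s current local lap counter into an object, so the invariant on objects follows from the invariant on local lap counters at the previous step. The componentwise max on line~\ref{ln:setmaxlocal} preserves the invariant because the new value $U[j]$ is at most the maximum of the previous $U[j]$ and $U'[j]$, both of which satisfy the invariant by the inductive hypothesis.

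The only place requiring a small extra argument is line~\ref{ln:inc}, where $U[v] \gets U[v] + 1$. Here $v$ is chosen as $\min\{j : U[j] = \max(U)\}$. I need to show $v$ is already valid before the increment, so that the increment preserves the invariant. This follows because $U[v] = \max(U) \geq 1$: after initialization, $p$'s local lap counter has a $1$ in the component corresponding to $p$'s input, and by Observation~\ref{obs:alwaysforward}, components of $p$'s local lap counter never decrease. Hence $\max(U) \geq 1$ at every configuration after $p$ has completed its initialization, so the inductive hypothesis applied to $v$ yields that $v$ is valid.

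There is no real obstacle here; the only thing to be careful about is to observe that every line that modifies a local lap counter either sets a single component to $1$ for an input value, copies from an object (inductively valid), or increments a component that is already at least $1$ (inductively valid). The proof is therefore a short case analysis over the lines of the algorithm.
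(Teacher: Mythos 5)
Your proof is correct and takes essentially the same approach as the paper: both proofs reduce to the same three cases (initialization on line~\ref{ln:init2}, componentwise max on line~\ref{ln:setmaxlocal}, increment on line~\ref{ln:inc}) and both rely on the same key observation that the incremented component is already the maximum and hence at least~$1$. The only difference is organizational — the paper considers the first step in the execution where some process makes the $j$-th component of its local lap counter positive and derives a contradiction, whereas you establish the corresponding invariant directly by induction on execution length.
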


\begin{proof}
Consider an execution $\alpha$ in which some process $p$ decides a value $j$.
We need to prove that some process had input $j$.
By Observation~\ref{obs:atleast2}, the $j$-th component of $p$'s local lap counter is at least $2$ when $p$ decides.
Consider the first step in $\alpha$ in which some process $q$ changes the $j$-th component of its local lap counter so that it is greater than $0$.
Then $q$ changes the $j$-th component of its local lap counter on line~\ref{ln:init2}, line~\ref{ln:setmaxlocal}, or line~\ref{ln:inc}.

First, suppose $q$ changes the $j$-th component of its local lap counter on line~\ref{ln:setmaxlocal}.
Then the response of the previous \emph{Swap} operation that $q$ performed on line~\ref{ln:swap} must have had a lap counter $V$ with $V[j] > 0$.
Hence, prior to this step by $q$, the local lap counter of some process was equal to $V$, which is a contradiction.

Now suppose $q$ changes the $j$-th component of its local lap counter on line~\ref{ln:inc}.
Then $q$ changes the $j$-th component of its local lap counter from $0$ to $1$.
Since processes always increment a component of their local lap counter with maximal value on line~\ref{ln:inc}, all of the other components of $q$'s local lap counter are at most $0$ immediately before $q$ increments the $j$-th component.
However, on line~\ref{ln:init2}, $q$ sets some component of its local lap counter to $1$, and it never decreases the value of any component of its local lap counter.
This is a contradiction.

Thus, process $q$ changes the $j$-th component of its local lap counter on line~\ref{ln:init2}.
This implies that $q$ had input $j$.
\end{proof}

\begin{lemma}\label{lem:of}
Algorithm~\ref{alg:setagreeswap} is obstruction-free.
\end{lemma}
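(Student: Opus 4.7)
The plan is to fix an arbitrary reachable configuration $C$ and process $p$, then analyze the $p$-only execution from $C$. Let $M$ denote the componentwise maximum of $p$'s local lap counter and the lap counter fields of $B_1,\ldots,B_{n-k}$ in $C$. Since no other process takes a step from $C$, no object's lap counter field can grow beyond $M$ from that point on, and by Observation~\ref{obs:alwaysforward} $p$'s local counter $U$ is monotonically non-decreasing. The first step is to observe that within one complete pass of the inner loop on lines~\ref{ln:swaploop}--\ref{ln:endswaploop}, the update on lines~\ref{ln:updatelocal}--\ref{ln:setmaxlocal} forces $U \succeq M$, since each iteration $i$ of that loop takes a componentwise max with whatever is currently stored in $B_i$.

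Next, I would track both $U$ and the multiset of lap counters stored in the objects as $p$ performs further iterations of the main loop. Because $p$ swaps $\langle U,p\rangle$ into each $B_i$ along the way and the response to each such swap is $\preceq U$, the local counter $U$ stops changing once $U \succeq M$, and after one more complete pass through lines~\ref{ln:swaploop}--\ref{ln:endswaploop} every object holds $\langle U,p\rangle$ for this stable $U$. A third pass then finds every swap response equal to $\langle U,p\rangle$, so the \textit{conflict} flag remains \textsc{False} and $p$ reaches line~\ref{ln:conflictcond}. At that point $p$ either decides, or increments one component $U[v]$ on line~\ref{ln:inc}. In the latter case the very next iteration of the main loop will see a conflict but leave $U$ unchanged, and the iteration after that is again conflict-free, so one increment occurs every two iterations of the main loop.

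It remains to argue that $p$ satisfies the condition on line~\ref{ln:decidecond} after at most two such increments. Let $c$ be the current maximum value of $U$ and let $v$ be the smallest index with $U[v]=c$, as chosen on lines~\ref{ln:maxval}--\ref{ln:minindex}. If some other index also attains $c$, two increments push $U[v]$ to $c+2$ while every other component remains $\leq c$; otherwise $U[v]$ already strictly dominates every other component, and either the decide test is already met or at most one further increment suffices. In every case, $p$ decides within finitely many steps of its solo execution, establishing obstruction-freedom. The main obstacle I expect is the bookkeeping in the second paragraph: when $U$ grows in the middle of the inner loop via line~\ref{ln:setmaxlocal}, different objects can be left holding different lap counters, so I need to argue carefully, using Observation~\ref{obs:alwaysforward} together with the absence of concurrent steps, that once $U \succeq M$ everything $p$ writes from then on dominates the contents of every object, and that this invariant stabilizes the system within two further inner-loop passes.
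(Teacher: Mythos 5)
Your proposal is correct and follows essentially the same stabilization argument as the paper's proof: a first complete inner-loop pass raises $U$ to the componentwise maximum $M$ of everything $p$ can see, a second pass makes the configuration $\langle U,p\rangle$-total, a third pass is then conflict-free, and from that point each further increment costs exactly two more passes (one to propagate, one to verify), with at most two increments needed before the decide test on line~\ref{ln:decidecond} is met. The paper packages the same reasoning into an explicit bound of at most $8(n-k)$ swaps (one partial pass plus seven full passes), and is a bit more careful about the initial partial pass in which $p$ may still be mid-way through the inner loop, but these are only differences of presentation, not substance.

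One small imprecision worth flagging: your statement that ``no object's lap counter field can grow beyond $M$ from that point on'' holds only until $p$'s first increment on line~\ref{ln:inc}; after that $U$ exceeds $M$ and is written into the objects. This does not affect the argument, since you only need the bound to hold up to the point where $U$ stabilizes at $M$, but the claim should be scoped to that phase.
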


\begin{proof}
For any configuration $C$, we will show that a solo execution by $p$ from $C$ contains at most $8(n-k)$ steps.
Suppose that $p$ is poised to apply a \emph{Swap} operation to the object $B_i$ in $C$.
Then $p$ applies exactly one \emph{Swap} operation to each of $B_i, \ldots, B_{n-k}$ before completing the loop on lines~\ref{ln:swaploop}--\ref{ln:endswaploop}.
Afterwards, $p$ either decides and returns, or it begins a new iteration of the loop on lines~\ref{ln:mainloop}--\ref{ln:endmainloop}.
During this loop, $p$ applies exactly one \emph{Swap} operation to every object on lines~\ref{ln:swaploop}--\ref{ln:endswaploop}.
Notice that $p$ may change its local lap counter after swapping one of the objects $B_1, \ldots, B_{i-1}$, but the value of its local lap counter immediately before starting the loop dominates the lap counters stored in $B_i, \ldots, B_{n-k}$.
After this, every object contains $p$ in its identifier field, and the value $V$ of $p$'s local lap counter dominates the lap counter stored in every object.
Hence, after one more iteration of the loop on lines~\ref{ln:mainloop}--\ref{ln:endmainloop}, the resulting configuration $C'$ is $\langle V, p\rangle$-total, and $p$ has either returned or it is poised to access $B_1$.
At this point, $p$ has performed at most $3(n-k)$ steps.

If $p$ has not returned in $C'$, then it swaps every object once before completing a lap.
When this happens, $p$ either returns or it increments a component $x$ of its local lap counter with maximal value to obtain a new local lap counter value $V'$.
In the second case, $V'[x] \geq V'[x']+1$, for all $x' \in \{0, \ldots, m-1\} - \{x\}$.
Process $p$ then swaps $\langle V', p\rangle$ into every object exactly once, and the resulting configuration $C''$ is $\langle V', p\rangle$-total.
In $C''$, process $p$ has either returned or it is poised to access $B_1$.
Process $p$ then swaps every object exactly one more time before completing another lap.
After this, $p$ either returns or increments the $x$-th component of its local lap counter to obtain a new lap counter $V''$, where $V''[x] \geq V''[x']+2$ for all $x' \in \{0, \ldots, m-1\} - \{x\}$.
Finally, $p$ performs exactly two more iterations of the loop on lines~\ref{ln:mainloop}--\ref{ln:endmainloop}, swapping each object exactly two more times before returning.
In total, $p$ performs at most $3(n-k) + 5(n-k) = 8(n-k)$ \emph{Swap} operations in any solo execution before returning.
\end{proof}

Lemmas~\ref{lem:agree}-\ref{lem:of} imply that Algorithm~\ref{alg:setagreeswap} is an $n$-process, $m$-valued, obstruction-free $k$-set agreement algorithm.

\section{Lower Bound on Set Agreement from Swap}\label{sec:unreadablelb}

We now present our lower bound on the number of swap objects required to solve nondeterministic solo-terminating $(k+1)$-valued $k$-set agreement.
Consider an initial configuration $C$ and an execution $\alpha$ from $C$ in which $k$ distinct values are decided.
Let $\mathcal{Q}$ be a set of processes that do not participate in $\alpha$. 
Suppose that, in $C$, the processes in $\mathcal{Q}$ have input $v$, which is different from all of the values decided in $C\alpha$.
%In the following lemma, we show that in the configuration $C\alpha$, at least $|\mathcal{Q}|$ swap object must contain information about the execution $\alpha$.
If a process $q \in \mathcal{Q}$ starts running in $C\alpha$, then it cannot tell that $k$ different values have been decided until it applies a \emph{Swap} operation to some object that was modified during $\alpha$.
However, when $q$ first swaps some object $B$ that was modified during $\alpha$, process $q$ overwrites the information about $\alpha$ that was stored there.
Hence, if we stop $q$ immediately after it swaps $B$, no other process in $\mathcal{Q}$ can learn anything about $\alpha$ by swapping $B$.
We apply this argument repeatedly, using different processes to overwrite each of the objects that store information about $\alpha$.
This shows that $\alpha$ must have swapped at least $|\mathcal{Q}|$ different objects.
We formalize this argument in the proof of Lemma~\ref{lem:consume}.

\begin{lemma}\label{lem:consume}
	Consider an initial configuration $C$ of a nondeterministic solo-terminating $k$-set agreement algorithm in which a set of processes $\mathcal{Q}$ have the same input $v$.
	Suppose that there is an execution $\alpha$ from $C$ that contains no steps by processes in $\mathcal{Q}$ such that $k$ distinct values different from $v$ are decided in $C\alpha$.
	Then the algorithm uses at least $|\mathcal{Q}|$ swap objects.
\end{lemma}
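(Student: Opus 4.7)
The plan is to prove that the set $\mathcal{B}$ of swap objects modified during $\alpha$ has size at least $|\mathcal{Q}|$, which immediately implies the lemma. Enumerate $\mathcal{Q} = \{q_1, \ldots, q_m\}$ with $m = |\mathcal{Q}|$. I will inductively construct, for each $i \in \{1, \ldots, m\}$, a $\{q_i\}$-only history $\gamma_i$ and a distinct object $B_i \in \mathcal{B}$ such that $\gamma_i$ is a prefix of a solo-terminating execution by $q_i$ from $C\alpha\gamma_1 \cdots \gamma_{i-1}$ whose final step is $q_i$ swapping $B_i$, where $B_i$ is the first object in $\mathcal{B} \setminus \{B_1, \ldots, B_{i-1}\}$ that $q_i$ touches during this execution. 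The invariant I will maintain is that $\gamma_1 \cdots \gamma_i$ is also applicable from $C$, producing the same sequence of operations and responses, and that every object outside $\mathcal{B} \setminus \{B_1, \ldots, B_i\}$ holds the same value in $C\gamma_1 \cdots \gamma_i$ as in $C\alpha\gamma_1 \cdots \gamma_i$.

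For the inductive step, nondeterministic solo-termination yields a solo execution of $q_i$ from $C\alpha\gamma_1 \cdots \gamma_{i-1}$ ending in a decision; since $k$ distinct values different from $v$ have already been decided by $C\alpha$, $k$-agreement forces this decision to be different from $v$. I claim $q_i$ must access some object in $\mathcal{B} \setminus \{B_1, \ldots, B_{i-1}\}$ during this run. Otherwise, by the invariant, $q_i$ obtains identical responses when the same sequence of operations is applied from $C\gamma_1 \cdots \gamma_{i-1}$, so the whole execution is valid starting from $C$ as well. To reach a contradiction, I consider the modified initial configuration $C'$ in which every process (not just those in $\mathcal{Q}$) starts with input $v$: since $C$ and $C'$ agree on the initial states of all processes in $\mathcal{Q}$ and on all object values, and the execution is $\mathcal{Q}$-only, the same run is applicable from $C'$ and $q_i$ makes the same decision there. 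Validity in $C'$ forces that decision to be $v$, contradicting the $k$-agreement conclusion. Hence $B_i$ exists, and truncating $\gamma_i$ just after the swap of $B_i$ completes the step.

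The main obstacle is propagating the invariant across iterations: I need to verify that throughout $\gamma_i$, the process $q_i$ evolves identically in the two branches (with and without $\alpha$). Before swapping $B_i$, every object $q_i$ touches lies outside $\mathcal{B} \setminus \{B_1, \ldots, B_{i-1}\}$ by the choice of $B_i$, so by the inductive invariant $q_i$ sees matching responses and therefore writes identical values to any intermediate objects; in particular, the value written to $B_i$ itself is identical in both configurations, so the invariant carries over with $B_i$ now included in the matching side. After $m$ iterations, this produces $m$ distinct elements of $\mathcal{B}$, proving $|\mathcal{B}| \geq |\mathcal{Q}|$ and therefore that the algorithm uses at least $|\mathcal{Q}|$ swap objects.
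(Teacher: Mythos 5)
Your proof is correct and uses the same core technique as the paper: each process $q_i$ must, on its solo run, access an object whose value differs between the two branches (otherwise an indistinguishability chain through an all-$v$-input configuration would force it to decide $v$, contradicting $k$-agreement), and because a Swap overwrites before informing, the first such access equalizes that object's value across the branches, consuming it. The only differences are cosmetic bookkeeping: you track the complement of the "matching" set via the pre-defined set $\mathcal{B}$ of objects modified during $\alpha$ (so your consumed objects demonstrably lie in $\mathcal{B}$), whereas the paper grows a matching set $\mathcal{A}_i$ directly; and you run $q_i$'s solo-terminating execution from the $C\alpha$ side and map it back through $C$ to the all-$v$ configuration, whereas the paper runs the solo execution from the all-$v$ configuration $D$ and maps forward to $C\alpha$.
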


\begin{proof}
Let $\mathcal{Q} = \{q_1, \ldots, q_{|\mathcal{Q}|}\}$.
Define $\mathcal{Q}_i = \{q_1, \ldots, q_i\}$ for all $i \in \{1, \ldots, |\mathcal{Q}|\}$, and define $\mathcal{Q}_0 = \emptyset$.
Let $D$ be an initial configuration in which all processes have input $v$.
For all $0 \leq i \leq |\mathcal{Q}|$, we show that there is a set of $i$ swap objects $\mathcal{A}_i$ and a pair of $\mathcal{Q}_i$-only executions $\gamma_i$ and $\delta_i$ from $C\alpha$ and $D$, respectively, such that $\textit{value}(B, C\alpha\gamma_i) = \textit{value}(B, D\delta_i)$, for all $B \in \mathcal{A}_i$.
We use induction on $i$.
When $i = 0$, $\gamma_i$ and $\delta_i$ are empty executions, $\mathcal{A}_i = \emptyset$, and the claim is trivially satisfied.

\begin{figure}[h]
\centering
\begin{tikzpicture}[shorten >=1pt,node distance=2.9cm,on grid,auto] 
	\tikzset{every state/.style={minimum size=0.85cm}}

	\node[state]	(C)	{$C$};
	\node[state]	(Calpha) [right=3.3cm of C]	{};
	\node[state]	(Calphagamma)	[right=3.3cm of Calpha]	{};
	\node[state]	(Calphagammaqi+1)	[right=4cm of Calphagamma]	{};
	\node[state] (step)	[right=2cm of Calphagammaqi+1] {};
	
	\node[state] (D) [above=2.6cm of Calpha] {$D$};
	\node[state] (Ddelta) [right=3.3cm of D] {};
	\node[state] (Ddeltatau) [right=4cm of Ddelta] {};
	\node[state] (Ddeltataus) [right=2cm of Ddeltatau] {};
	
	\node (input) [left=1.5cm of D,align=center] {\small all processes \\ \small have input $v$};
	
	\node (decided) [below=1cm of Calpha,align=center] {\small $k$ values different \\ \small from $v$ decided};

	\draw [decorate,decoration={brace,amplitude=6pt}] (3.67,3.6) -- (12.1,3.6) node[pos=0.5,above=7pt] {$\delta_{i+1}$};	
	
	\draw [decorate,decoration={brace,amplitude=6pt}] (3.67,0.6) -- (12.1,0.6) node[pos=0.5,above=7pt] {$\gamma_{i+1}$};

	\node (a1) [rectangle,draw,fill=red,below right=2.7cm and 1cm of C] {};
	\node (a0) [left=0.3cm of a1] {};
	\node (a2) [rectangle,draw,fill=red,right=0.5cm of a1] {};
	\node (a3) [right=0.5cm of a2] {$\ldots$};
	\node (a4) [rectangle,draw,fill=red,right=0.5cm of a3] {};
	\node (a5) [rectangle,draw,right=0.5cm of a4] {};
	\node (a6) [rectangle,draw,right=0.5cm of a5] {};
	\node (a7) [rectangle,draw,right=0.5cm of a6] {};
	\node (a8) [right=0.5cm of a7] {$\ldots$};
	\node (a9) [rectangle,draw,right=0.5cm of a8] {};
	\node (aend) [right=0.3cm of a4] {};
	\draw ([shift={(-1.1cm, 1cm)}]a1) rectangle ([shift={(1.1cm, -1.7cm)}]a9);
%	\node (bobj1) [below=1cm of a5] {Base objects in $C\alpha\gamma_i$};
	\node (bobj1) [below=1.2cm of a5,align=left] {\small Base objects in $C\alpha\gamma_i$. The base objects \\ \small in $\mathcal{A}_i$ have the same value as in $D\delta_i$.};
	\node (bstar) [below=0.4cm of a5] {$B^\star$};
	\draw [decorate,decoration={brace,amplitude=4pt,raise=0.2cm}] (a0) -- (aend) node[pos=0.5,above=7pt] {$\mathcal{A}_i$};

	\node (b1) [rectangle,draw,fill=red,below right=2.7cm and 1cm of Calphagamma] {};
	\node (b0) [left=0.3cm of b1] {};
	\node (b2) [rectangle,draw,fill=red,right=0.5cm of b1] {};
	\node (b3) [right=0.5cm of b2] {$\ldots$};
	\node (b4) [rectangle,draw,fill=red,right=0.5cm of b3] {};
	\node (b5) [rectangle,draw,fill=red,right=0.5cm of b4] {};
	\node (b6) [rectangle,draw,right=0.5cm of b5] {};
	\node (b7) [rectangle,draw,right=0.5cm of b6] {};
	\node (b8) [right=0.5cm of b7] {$\ldots$};
	\node (b9) [rectangle,draw,right=0.5cm of b8] {};
	\node (bend) [right=0.3cm of b5] {};
	\draw ([shift={(-1.1cm, 1cm)}]b1) rectangle ([shift={(1.1cm, -1.7cm)}]b9);
	\node (bobj2) [below=1.2cm of b5,align=left] {\small Base objects in $C\alpha\gamma_{i+1}$. The base objects \\ \small in $\mathcal{A}_{i+1}$ have the same value as in $D\delta_{i+1}$.};
	\node (bstar2) [below=0.4cm of b5] {$B^\star$};
	\draw [decorate,decoration={brace,amplitude=4pt,raise=0.25cm}] (b0) -- (bend) node[pos=0.5,above=8pt] {$\mathcal{A}_{i+1}$};
	
   	\node (qi+1swap) [above=0.75cm of b5] {\color{red}$s'$};
   	
   	\draw [dashed] (Ddeltatau) -- (Calphagammaqi+1);
   	\node[fill=white,below=1.3cm of Ddeltatau] {$\tau' \widesim{q_{i+1}} \tau$};
   
    \path[->,
	line join=round, decoration={
	    zigzag,
	    segment length=4,
	    amplitude=.9,post=lineto,
	    post length=2pt}]	    
	    
	(C)	edge[decorate] node[below] {no steps by $\mathcal{Q}$}  node[above] {$\alpha$} (Calpha)
	(Calpha)	edge[decorate] node[below] {\small $\mathcal{Q}_i$-only} node[above] {$\gamma_i$} (Calphagamma)
	(Calphagamma)	edge[decorate] node[below] {$q_{i+1}$-only} node[above] {$\tau'$} (Calphagammaqi+1)
	(Calphagammaqi+1)	edge[color=red] node[above] {$s'$} node[below=0.4cm,align=center] {\small $q_{i+1}$ applies \\ \small \emph{Swap} to $B^\star \not\in \mathcal{A}_i$} (step)
	(qi+1swap) edge[color=red] (b5)
	(D) edge[decorate] node[above] {$\delta_i$} node[below] {$\mathcal{Q}_i$-only} (Ddelta)
	(Ddelta) edge[decorate] node[above] {$\tau$} node[below] {$q_{i+1}$-only} (Ddeltatau)
	(Ddeltatau) edge node[above] {$s$} (Ddeltataus);
\end{tikzpicture}

\vspace{0.3cm}

\caption{The construction of $\gamma_{i+1}$ from $\gamma_i$ in the proof of Lemma~\ref{lem:consume}.}\label{fig:consume}
\end{figure}
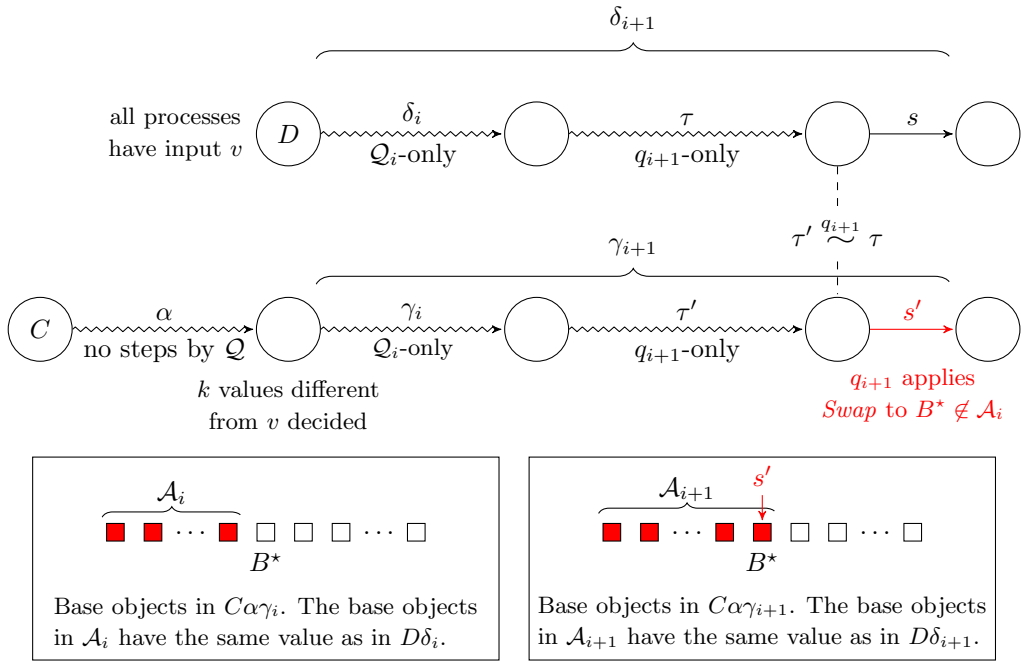

Now suppose the claim holds for some $0 \leq i < |\mathcal{Q}|$.
Notice that $C\alpha\gamma_i \widesim{q_{i+1}} D\delta_i$, since $q_{i+1}$ has input $v$ in both configurations and takes no steps in $\alpha$, $\gamma_i$, or $\delta_i$.
Consider a $q_{i+1}$-only solo-terminating execution $\sigma$ from $D\delta_i$.
By validity, $q_{i+1}$ decides $v$ in $\sigma$.
Let $\tau$ be the longest prefix of $\sigma$ such that $q_{i+1}$ only accesses objects in $\mathcal{A}_i$ during $\tau$.
By the induction hypothesis, $\textit{value}(B, C\alpha\gamma_i) = \textit{value}(B, D\delta_i)$ for all $B \in \mathcal{A}_i$.
Hence, there is a $q_{i+1}$-only execution $\tau'$ from $C\alpha\gamma_i$ such that $\tau' \widesim{q_{i+1}} \tau$.
If $\tau = \sigma$, then $q_{i+1}$ decides $v$ in $\tau'$ and $\tau$.
This contradicts $k$-agreement, since $k$ distinct values different from $v$ are decided in $C\alpha$.
Thus, $\tau$ is a proper prefix of $\sigma$.

Let $s'$ and $s$ be the steps that $q_{i+1}$ is poised to apply in $C\alpha\gamma_i\tau'$ and $D\delta_i\tau$, respectively.
Since $C\alpha\gamma_i \widesim{q_{i+1}} D\delta_i$ and $\tau' \widesim{q_{i+1}} \tau$, we have $C\alpha\gamma_i\tau' \widesim{q_{i+1}} D\delta_i\tau$.
Thus, $q_{i+1}$ performs the same \emph{Swap} operation in $s$ and $s'$.
Furthermore, since $q_{i+1}$ applies the same sequence of operations in $\tau'$ and $\tau$, we have $\textit{value}(B, C\alpha\gamma_i\tau') = \textit{value}(B, D\delta_i\tau)$ for all $B \in \mathcal{A}_i$.
By definition of \emph{Swap}, if $B^\star$ is the object accessed by $s'$ and $s$, then $\textit{value}(B^\star, C\alpha\gamma_i\tau's') = \textit{value}(B^\star, D\delta_i\tau s)$.
By definition of $\tau$, we know that $B^\star \not\in \mathcal{A}_i$.
Hence, taking $\gamma_{i+1} = \gamma_i\tau's'$, $\delta_{i+1} = \delta_i\tau s$, and $\mathcal{A}_{i+1} = \mathcal{A}_i \cup \{B^\star\}$ completes the inductive step.
This is illustrated in Figure~\ref{fig:consume}.

Taking $i = |\mathcal{Q}|$ gives us a set of $|\mathcal{Q}|$ objects $\mathcal{A}_{|\mathcal{Q}|}$.
This completes the proof of the lemma.
\end{proof}

\begin{theorem}\label{thm:swaplb}
%	Suppose that $\mathcal{B}$ is a set of base objects with unbounded domain size that support only the \emph{Swap} operation.
	For all $n > k \geq 1$, every nondeterministic solo-terminating, $n$-process $(k+1)$-valued $k$-set agreement algorithm from swap objects uses at least $\lceil\frac{n}{k}\rceil - 1$ objects.
\end{theorem}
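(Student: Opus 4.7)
The plan is to apply Lemma~\ref{lem:consume}, which requires exhibiting an initial configuration $C$, a set $\mathcal{Q}$ of processes with common input $v$ in $C$, and an execution $\alpha$ from $C$ containing no steps by $\mathcal{Q}$ in which $k$ distinct values different from $v$ are decided; the lemma then gives a lower bound of $|\mathcal{Q}|$ on the number of swap objects. My goal is thus a construction with $|\mathcal{Q}| \geq \lceil n/k\rceil - 1$.

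I would proceed by induction on $k$. For the base case $k = 1$, let $\mathcal{Q}$ consist of $n - 1$ processes with common input $v$ in $C$, let the remaining process $p$ have input $\bar{v} \neq v$, and let $\alpha$ be any solo-terminating execution by $p$ from $C$. Since $C$ is indistinguishable to $p$ from the initial configuration in which every process has input $\bar{v}$, validity forces $p$ to decide $\bar{v}$ in $\alpha$; Lemma~\ref{lem:consume} then yields $|\mathcal{Q}| = n - 1 = \lceil n/1\rceil - 1$.

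For the inductive step, I would split into cases based on the algorithm $\mathcal{A}$. In the \emph{degenerate case}, every execution of $\mathcal{A}$ decides at most $k - 1$ distinct values; restricting the input domain to any $k$-element subset of $\{0, \ldots, k\}$ turns $\mathcal{A}$ into an $n$-process $k$-valued $(k-1)$-set agreement algorithm using the same swap objects, and the inductive hypothesis gives a lower bound of $\lceil n/(k-1)\rceil - 1 \geq \lceil n/k\rceil - 1$. In the \emph{nondegenerate case}, there is an execution $\alpha^{\star}$ from some initial configuration $C^{\star}$ in which $k$ distinct input values are decided; let $v$ be the unique non-decided input value and $P^{\star}$ the set of processes that take steps in $\alpha^{\star}$. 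Define $C$ to agree with $C^{\star}$ on $P^{\star}$ while assigning input $v$ to every process in $\mathcal{Q} := \{p_1, \ldots, p_n\} \setminus P^{\star}$; since $\mathcal{Q}$ takes no steps in $\alpha^{\star}$, we have $C \widesim{P^\star} C^{\star}$, so $\alpha^{\star}$ is a valid execution from $C$ deciding $k$ distinct non-$v$ values, and Lemma~\ref{lem:consume} gives $|\mathcal{Q}| = n - |P^\star|$ swap objects.

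The main obstacle lies in the nondegenerate case: I must ensure that $\alpha^{\star}$ can always be chosen with $|P^\star| \leq n - \lceil n/k\rceil + 1$. My approach would be a minimization argument that iteratively uses indistinguishability to replace a participant's role by a quiet role (i.e., changing its input to $v$ and removing its steps) whenever the $k$ distinct decisions are preserved. The technical heart is showing that this reduction always terminates with sufficiently few remaining participants, which I expect to follow from a pigeonhole argument linking the active participants to accesses of the (bounded) set of available swap objects.
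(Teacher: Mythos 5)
Your base case and the overall plan (induction on $k$, invoking Lemma~\ref{lem:consume}) match the paper, but your inductive step has a genuine gap. The paper does not split on whether ``every execution decides at most $k-1$ values'' versus ``some execution decides $k$ values.'' Instead it fixes \emph{in advance} a set $\mathcal{R}$ of exactly $\bigl\lceil n(k-1)/k\bigr\rceil$ processes and splits on whether every $\mathcal{R}$-only execution from an initial configuration with $\mathcal{R}$-inputs in $\{0,\ldots,k-1\}$ decides at most $k-1$ values. In the first subcase the algorithm solves $(k-1)$-set agreement \emph{among the processes in $\mathcal{R}$}, so the inductive hypothesis applies with $|\mathcal{R}|$ processes and yields exactly $\lceil|\mathcal{R}|/(k-1)\rceil - 1 = \lceil n/k\rceil - 1$; in the second subcase there is an $\mathcal{R}$-only execution deciding $k$ values, so $\mathcal{Q} = \mathcal{P}-\mathcal{R}$ has size $\lfloor n/k\rfloor \geq \lceil n/k\rceil - 1$ by construction, and Lemma~\ref{lem:consume} applies immediately. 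The size of the quiet set is thus guaranteed by the choice of $\mathcal{R}$, not discovered afterward.

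Your version attempts the same dichotomy without this pre-chosen $\mathcal{R}$, and so you must \emph{post hoc} shrink the participant set $P^\star$ of a $k$-deciding execution down to at most $n - \lceil n/k\rceil + 1$. This is the hole: the proposed ``minimization argument that iteratively \ldots removes a participant's steps whenever the $k$ distinct decisions are preserved'' is not well-defined for swap objects. A swap has no trivial operations, so deleting a process's steps from $\alpha^\star$ changes the responses seen by every subsequent accessor of those objects, and the remaining suffix is generally not an execution at all; you would have to rebuild it, and there is no reason the rebuilt run still decides $k$ distinct values. Nor does a pigeonhole bound on object accesses give you a bound on $|P^\star|$: arbitrarily many processes can each take a single step on the same swap object. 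The clean fix is precisely the paper's move — restrict attention to $\mathcal{R}$-only behaviours from the outset, which simultaneously makes $\mathcal{Q} = \mathcal{P}-\mathcal{R}$ large and makes the ``degenerate'' subcase a legitimate $|\mathcal{R}|$-process algorithm to which the inductive hypothesis applies with the right parameters.
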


\begin{proof}
	Consider such an algorithm for the set of processes $\mathcal{P} = \{p_0, \ldots, p_{n-1}\}$.
	We will use induction on $k$.
	When $k = 1$, consider an initial configuration $C$ of the algorithm in which process $p_0$ has input $0$ and all other processes have input $1$.
	Let $C_0$ be an initial configuration in which all processes have input $0$.
	
	Let $\alpha'$ be a solo-terminating execution by $p_0$ from $C_0$.
	Since $C_0 \widesim{p_0} C$ and all of the base objects have the same values in $C_0$ and $C$, there is a $p_0$-only execution $\alpha$ by $p_0$ from $C$ such that $\alpha' \widesim{p_0} \alpha$.
	Since $p_0$ decides during $\alpha'$, it must decide during $\alpha$ as well.
	By validity, $p_0$ decides $0$ during $\alpha'$, which implies that $p_0$ decides $0$ during $\alpha$.
	Lemma~\ref{lem:consume} (with $\mathcal{Q} = \{p_1, \ldots, p_{n-1}\}$ and $v = 1$) implies that the algorithm uses at least $n-1$ swap objects, which completes the proof of the base case.
	
	Now let $1 < k < n$ and suppose the theorem holds for $k-1$.
	Let $\mathcal{R}$ be some set of $\bigl\lceil \frac{n(k-1))}{k}\bigr\rceil$ processes in $\mathcal{P}$.
	Let $\mathcal{I}$ be the set of all initial configurations in which $\mathcal{R}$'s inputs are in $\{0, \ldots, k-1\}$.
	If, for every initial configuration $C \in \mathcal{I}$ and every $\mathcal{R}$-only execution from $C$, at most $k-1$ different values are decided in $\alpha$, then the algorithm solves nondeterministic solo-terminating $k$-valued $(k-1)$-set agreement among the processes in $\mathcal{R}$.
	Hence, by the induction hypothesis, the algorithm uses at least $\bigl\lceil\frac{|\mathcal{R}|}{k-1}\bigr\rceil - 1 = \lceil\frac{n}{k}\rceil - 1$ swap objects.
	
	Otherwise, there is an initial configuration $C \in \mathcal{I}$ and an $\mathcal{R}$-only execution $\alpha$ from $C$ in which all $k$ values $0, \ldots, k-1$ are decided.
	Notice that $|\mathcal{P} - \mathcal{R}| = n - \bigl\lceil\frac{n(k-1)}{k}\bigr\rceil = \lfloor\frac{n}{k}\rfloor \geq \lceil\frac{n}{k}\rceil - 1$.
	Hence, by Lemma~\ref{lem:consume} (with $\mathcal{Q} = \mathcal{P} - \mathcal{R}$ and $v = k$), the algorithm uses at least $|\mathcal{P} - \mathcal{R}| \geq \lceil\frac{n}{k}\rceil - 1$ swap objects.
	This completes the proof of the theorem.
\end{proof}

%Since any nontrivial operation on a historyless object can be simulated by \emph{Swap}, Theorem~\ref{thm:swaplb} implies that nontrivial operations alone cannot be used to solve nondeterministic solo-terminating $k$-set agreement with fewer than $\lceil\frac{n}{k}\rceil-1$ historyless objects.
Any historyless object that supports only nontrivial operations can be simulated by a single swap object \citep{efr-07}.
Therefore, Theorem~\ref{thm:swaplb} implies that nondeterministic solo-terminating $k$-set agreement cannot be solved with fewer than $\lceil\frac{n}{k}\rceil-1$ historyless objects that support only nontrivial operations.
The proof of Lemma~\ref{lem:consume} exploits a key limitation of nontrivial operations on historyless objects.
In order to learn important information stored in a historyless object that only supports nontrivial operations, a process must overwrite that information.
This prevents other processes from learning the information.
The reason this argument does not work with readable swap objects is because, by using a \emph{Read} operation, a process may be able to see some of the values that have been decided in a configuration without overwriting anything.

%Ellen, Gelashvili, and Zhu \citep{egz-18} proved that space lower bounds for obstruction-free consensus algorithms also hold for nondeterministic solo-terminating (and hence, randomized wait-free) consensus algorithms.
%Combined with Theorem~\ref{thm:swaplb}, this gives us the following.

Since all obstruction-free and randomized wait-free consensus algorithms are also nondeterministic solo-terminating, Theorem~\ref{thm:swaplb} gives us the following.

\begin{corollary}
	For all $n > k \geq 1$, every obstruction-free or randomized wait-free, $n$-process, $(k+1)$-valued, $k$-set agreement algorithm from swap objects requires at least $\lceil\frac{n}{k}\rceil - 1$ objects.
\end{corollary}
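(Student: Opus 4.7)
The plan is to derive the corollary directly from Theorem~\ref{thm:swaplb} by showing that the two progress conditions mentioned---obstruction-freedom and randomized wait-freedom---are both subsumed by nondeterministic solo-termination, so the $\lceil \frac{n}{k}\rceil - 1$ lower bound transfers verbatim.

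First I would recall the definition of nondeterministic solo-termination: for every reachable configuration $C$ and every process $p$, there must exist \emph{some} $p$-only execution from $C$ in which $p$ decides. For an obstruction-free algorithm, this is immediate from the definition restated in Section~\ref{sec:model}: an obstruction-free algorithm is precisely a deterministic nondeterministic solo-terminating algorithm, so the unique $p$-only execution from $C$ produced by the deterministic transition function is itself solo-terminating, witnessing the property. For a randomized wait-free algorithm, I would argue by contradiction: if there were some reachable configuration $C$ and process $p$ such that no $p$-only execution from $C$ ever reaches a state where $p$ has decided, then under the scheduler that runs $p$ in isolation forever from $C$, every execution produced with positive probability would be infinite, making the expected length of an execution from $C$ infinite and contradicting randomized wait-freedom. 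Hence every randomized wait-free $k$-set agreement algorithm is also nondeterministic solo-terminating.

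Given these two inclusions, the corollary follows by invoking Theorem~\ref{thm:swaplb} on the algorithm in question: the hypotheses of the theorem (nondeterministic solo-terminating, $n$-process, $(k+1)$-valued $k$-set agreement from swap objects, $n > k \geq 1$) are all satisfied, so the algorithm uses at least $\lceil\frac{n}{k}\rceil - 1$ swap objects.

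There is essentially no obstacle here; the step that deserves a sentence of justification rather than a citation is the randomized wait-free case, since it is the only place where one has to reason from the expected-termination definition back to the existence of a particular deterministic solo schedule. Once that reduction is noted, the rest is a one-line appeal to Theorem~\ref{thm:swaplb}.
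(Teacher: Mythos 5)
Your proposal is correct and follows exactly the same route as the paper: both progress conditions are observed to be special cases of nondeterministic solo-termination, after which Theorem~\ref{thm:swaplb} applies directly. The paper states this as a one-line remark without spelling out the randomized wait-free reduction, but your contradiction argument via the solo scheduler from a reachable $C$ is the standard justification and is sound.
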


\section{Readable swap objects}\label{sec:readable}

In this section, we consider obstruction-free binary consensus algorithms from readable swap objects with bounded domains.
We first prove that at least $n-2$ readable binary swap objects are needed to solve obstruction-free binary consensus.
Then, we prove that at least $\frac{n-2}{3b+1}$ readable swap objects with domain size $b$ are needed to solve obstruction-free binary consensus.
For $b=2$, our first lower bound of $n-2$ is better than our second lower bound of $\frac{n-2}{7}$.
%The proofs of both of our lower bounds use similar techniques.\todo{describe technique a bit here?}
%However, the proof of the $\frac{n-2}{3b+1}$ lower bound 

Throughout this section, let $\mathcal{P} = \{p_0, \ldots, p_{n-3}\}$ be a set of $n-2$ processes and let $\mathcal{Q} = \{q_0, q_1\}$ be a special pair of processes disjoint from $\mathcal{P}$.
For all $0 \leq i \leq n-3$, define $\mathcal{P}_i = \{p_i, \ldots, p_{n-3}\}$ to be all but the first $i$ processes in $\mathcal{P}$.
Define $\mathcal{P}_{n-2} = \emptyset$.
We will first prove some properties that will be useful for obtaining both of our lower bounds.

\begin{observation}\label{obs:initbiv}
	Let $C$ be an initial configuration in which process $q_0$ has input $0$ and process $q_1$ has input $1$.
	Then $\mathcal{Q}$ is bivalent in $C$.
\end{observation}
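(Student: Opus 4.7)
My plan is to prove this via a standard indistinguishability argument that reduces to validity applied to monochromatic initial configurations.

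First, I would fix an initial configuration $C_0$ in which \emph{every} process has input $0$. By nondeterministic solo-termination, there exists a $q_0$-only solo-terminating execution $\alpha_0'$ from $C_0$. By validity, since $0$ is the only input held by any process in $C_0$, process $q_0$ must decide $0$ in $\alpha_0'$. Now I compare $C_0$ and $C$: process $q_0$ has input $0$ in both, so it has the same state in the two configurations, giving $C_0 \widesim{q_0} C$. Both are initial configurations, so every object has its initial value in each. Since any $q_0$-only execution only touches objects whose values agree in $C_0$ and $C$, the indistinguishability fact about $P$-only executions stated in Section~\ref{sec:model} yields a $q_0$-only execution $\alpha_0$ from $C$ with $\alpha_0 \widesim{q_0} \alpha_0'$. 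Consequently $q_0$ decides $0$ in $\alpha_0$, and $\alpha_0$ is a $\mathcal{Q}$-only execution from $C$ in which a process of $\mathcal{Q}$ decides $0$.

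Next I would apply the symmetric argument with an initial configuration $C_1$ in which every process has input $1$, invoking nondeterministic solo-termination and validity for $q_1$ to obtain a $q_1$-only solo-terminating execution from $C_1$ in which $q_1$ decides $1$. Since $q_1$ has input $1$ in both $C_1$ and $C$, the same indistinguishability step produces a $q_1$-only execution $\alpha_1$ from $C$ in which $q_1$ decides $1$. Together $\alpha_0$ and $\alpha_1$ witness both valencies, so $\mathcal{Q}$ is bivalent in $C$.

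There is no real obstacle here; the only thing to be careful about is that validity is only strong enough to force the decided value in the \emph{monochromatic} auxiliary configurations $C_0$ and $C_1$ (not in $C$ itself, where processes in $\mathcal{P}$ may hold other inputs), which is precisely why we route the argument through $C_0$ and $C_1$ and then transport the executions back to $C$ by indistinguishability.
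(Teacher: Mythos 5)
Your proof is correct and follows essentially the same route as the paper: construct the monochromatic initial configurations (the paper's $D_0$, $D_1$), apply validity there to force the decided values, and transport the solo-terminating executions back to $C$ via indistinguishability of $q_0$ (resp.\ $q_1$). No substantive differences.
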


\begin{proof}
Let $D_0$ and $D_1$ be the initial configurations in which all processes have input $0$ and $1$, respectively.
Then $C \widesim{q_0} D_0$ and $C \widesim{q_1} D_1$.
Furthermore, all of the base objects have the same values in $C$, $D_0$, and $D_1$.
By validity, $q_0$ decides $0$ in its solo-terminating execution from $D_0$, and hence it decides $0$ in its solo-terminating execution from $C$.
Similarly, process $q_1$ decides $1$ in its solo-terminating execution from $C$.
Thus, $\mathcal{Q}$ is bivalent in $C$.
\end{proof}

In the following lemma, we show that if $C$ is a configuration in which $\mathcal{Q}$ is bivalent and some set of processes $S \subseteq \mathcal{P}$ covers a set of objects $\mathcal{B}$, then there is a $\mathcal{Q}$-only execution from $C$ followed by the block swap by $S$ that leads to a configuration in which $\mathcal{Q}$ is bivalent.
This is a generalization of Lemma~3.5 in \citep{z-16}, which proves the same property when the implementation uses registers.

\begin{lemma}\label{lem:extend}
	Let $C$ be a configuration in which $\mathcal{Q}$ is bivalent and a set $S \subseteq \mathcal{P}$ of processes cover a set $\mathcal{B}$ of readable swap objects. Then there is a $\mathcal{Q}$-only execution $\gamma$ from $C$ such that $\mathcal{Q}$ is bivalent in $C\gamma\beta$, where $\beta$ is a block swap by $S$.
\end{lemma}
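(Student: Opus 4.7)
I would prove this by contradiction. Suppose that $C\gamma\beta$ is univalent for $\mathcal{Q}$ for every $\mathcal{Q}$-only execution $\gamma$ from $C$, and let $g(\gamma) \in \{0,1\}$ denote this valency. Since $\mathcal{Q}$ is bivalent at $C$, there are $\mathcal{Q}$-only executions from $C$ in which $0$ and $1$ are respectively decided, and these decisions are preserved by $\beta$, so $g$ attains both values. Walking along the prefixes of either of these executions yields a pivot: a $\mathcal{Q}$-only execution $\gamma$ from $C$ and a single step $s$ by some $q \in \mathcal{Q}$ with $g(\gamma) = 0$ and $g(\gamma s) = 1$. Let $B$ be the object that $s$ accesses.

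If $B \notin \mathcal{B}$, then $s$ and $\beta$ act on disjoint objects, so they commute: the object values and the states of $\mathcal{Q}$ are identical in $C\gamma s\beta$ and $C\gamma\beta s$, giving $C\gamma s\beta \widesim{\mathcal{Q}} C\gamma\beta s$. The $1$-univalency at $C\gamma s\beta$ therefore transfers to $C\gamma\beta s$, which is reachable from $C\gamma\beta$ by the $\mathcal{Q}$-only step $s$; this contradicts the $0$-univalency at $C\gamma\beta$.

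The remaining case, $B \in \mathcal{B}$, is the main obstacle. The key structural observation is that $\beta$ overwrites every object in $\mathcal{B}$, so $\textit{value}(B', C\gamma\beta) = \textit{value}(B', C\gamma s\beta)$ for every object $B'$; the only state difference relevant to $\mathcal{Q}$ is that of $q$, which in $C\gamma s\beta$ reflects having observed the pre-$\beta$ value $v_1$ of $B$ (the states of processes in $S$ differ too, but are irrelevant for $\mathcal{Q}$-only executions). From $C\gamma\beta$, the process $q$ is still poised to apply the operation of $s$, but would instead observe the post-$\beta$ value $v_S$ that the process in $S$ covering $B$ wrote. When $s$ is a read and $v_1 = v_S$, an indistinguishability argument identical to Case~1 finishes the proof.

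To handle the residual subcases, where $s$ is a swap or where $s$ is a read with $v_1 \neq v_S$, I would exploit the freedom to choose $\gamma$: rather than halting at the first pivot, I would extend $\gamma$ through additional $\mathcal{Q}$-only steps and re-pivot, iterating until the pivot either lies outside $\mathcal{B}$ or encounters matching local and post-$\beta$ values. This iterative strengthening is what distinguishes the argument from its register analogue in \citep{z-16}, where reads do not exist and swaps of covered objects are simply overwritten, and I expect it to be the technical heart of the proof.
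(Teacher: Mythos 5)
Your decomposition and the first case are essentially the paper's argument: the pivot step $s$ and the decomposition into $B \notin \mathcal{B}$ versus $B \in \mathcal{B}$ match exactly, and for $B \notin \mathcal{B}$ the commutativity argument you give is correct.

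For $B \in \mathcal{B}$, however, there is a genuine gap. You correctly note that $\beta$ overwrites every object in $\mathcal{B}$, so every base object has the same value in $C\gamma\beta$ and $C\gamma s\beta$, and that the only state difference (among $\mathcal{Q}$) is the state of $q$. But you then try to reason about what $q$ observed in $s$, handle only the subcase where $s$ is a read returning a matching value, and propose an unspecified ``iterative re-pivot'' for swaps and mismatched reads. The idea you are missing makes all of that unnecessary: $\mathcal{Q} = \{q_0, q_1\}$ has exactly two processes, and since $s$ is applied by $q$, the \emph{other} process $q' \in \mathcal{Q} \setminus \{q\}$ has the same state in $C\gamma\beta$ and $C\gamma s\beta$. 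Because all object values are also equal, $q'$'s solo-terminating execution from $C\gamma\beta$ is indistinguishable to $q'$ from one starting at $C\gamma s\beta$, so $q'$ decides the same value from both. This contradicts $g(\gamma) = 0$ and $g(\gamma s) = 1$ directly, with no case split on the type of $s$ or on whether the read's value matches. The iterative strengthening you sketch is not needed and, as written, has no termination argument, so it does not close the proof.
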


\begin{proof}
If $\mathcal{Q}$ is bivalent in $C\beta$, then the lemma is satisfied when $\gamma$ is the empty execution.

Now suppose that $\mathcal{Q}$ is $v$-univalent in $C\beta$, for some $v \in \{0, 1\}$.
Since $\mathcal{Q}$ is bivalent in $C$, there is a $\mathcal{Q}$-only execution $\alpha$ from $C$ in which some process in $\mathcal{Q}$ decides $\bar{v}$.
Hence, $\mathcal{Q}$ is $\bar{v}$-univalent in $C\alpha\beta$.

Let $\alpha'$ be some prefix of $\alpha$ such that $\mathcal{Q}$ is $v$-univalent in $C\alpha'\beta$ and $\mathcal{Q}$ is not $v$-univalent in $C\alpha's\beta$, where $s$ is the next step of $\alpha$ by one of the processes in $\mathcal{Q}$.
Without loss of generality, suppose that $q_0$ applies $s$.
There are two cases.

First suppose that $s$ is applied to some object in $\mathcal{B}$.
Then all of the objects have the same values in $C\alpha'\beta$ and $C\alpha's\beta$, and $C\alpha'\beta \stackrel{q_1}{\sim} C\alpha's\beta$.
Since $\mathcal{Q}$ is $v$-univalent in $C\alpha'\beta$, process $q_1$ must decide $v$ in its solo-terminating execution from $C\alpha'\beta$.
Hence, $q_1$ decides $v$ in its solo-terminating execution from $C\alpha's\beta$.
Since $\mathcal{Q}$ is not $v$-univalent in $C\alpha's\beta$ by definition of $\alpha'$, $\mathcal{Q}$ must be bivalent in $C\alpha's\beta$.
	
Otherwise, $s$ is applied to some object outside $\mathcal{B}$.
Then $C\alpha'\beta s = C\alpha's \beta$.
Since $\mathcal{Q}$ is $v$-univalent in $C\alpha'\beta$, process $q_0$ must decide $v$ in its solo-terminating execution from $C\alpha'\beta s$.
Hence, $q_0$ decides $v$ in its solo-terminating execution from $C\alpha's\beta$.
Since $\mathcal{Q}$ is not $v$-univalent in $C\alpha's\beta$ by definition of $\alpha'$, $\mathcal{Q}$ must be bivalent in $C\alpha's\beta$.

In both cases, $\mathcal{Q}$ is bivalent in $C\alpha's \beta$.
Therefore, the lemma is satisfied by $\gamma = \alpha's$.
\end{proof}

The next lemma is a key property that we use to obtain our lower bounds in the next two sections.
Let $C$ be some configuration in which $\mathcal{Q}$ is bivalent, and let $C'$ be a configuration that is indistinguishable from $C$ to a process $p_i$.
Consider $p_i$'s solo-terminating execution $\delta$ from $C'$.
Let $\delta_j$ be the longest prefix of $\delta$ such that, for every prefix $\delta_{j'}$ of $\delta_j$, there is a $(\mathcal{Q} \cup \mathcal{P}_{i+1})$-only execution $\alpha$ from $C$ that is indistinguishable from $\delta_{j'}$ to $p_i$.
Consider the next step $d$ that $p_i$ is poised to take in $C\delta_j$.
Suppose that $p_i$ obtains the response $v$ when it applies $d$ in $C'\delta_j$.
Then for any $(\mathcal{Q} \cup \mathcal{P}_i)$-only execution $\lambda$ from $C\alpha$ in which $p_i$ takes exactly one step, obtaining the response $v$, $\lambda$ and $\delta_jd$ are indistinguishable to $p_i$.
Hence, by definition of $\delta_j$, $\mathcal{Q}$ is univalent in $C\alpha\lambda$.
We formalize this property in the following lemma.

\begin{lemma}\label{lem:solotrick}
Let $p_i \in \mathcal{P}$, let $C$ be a configuration in which $\mathcal{Q}$ is bivalent, let $C'$ be a configuration such that $C \widesim{p_i} C'$, and let $\delta$ be $p_i$'s solo-terminating execution from $C'$.
Suppose $\delta$ consists of $r$ steps and, for all $s \in \{0, \ldots, r\}$, let $\delta_s$ be the prefix of $\delta$ that consists of the first $s$ steps by $p_i$.
Then there is a $j \in \{0, \ldots, r-1\}$ such that:

\begin{enumerate}[label=(\alph*),ref=\alph*]
	\item For all $j' \in \{0, \ldots, j\}$, there is a $(\mathcal{Q} \cup \mathcal{P}_i)$-only execution $\alpha_{j'}$ from $C$ such that $\mathcal{Q}$ is bivalent in $C\alpha_{j'}$ and $\alpha_{j'} \widesim{p_i} \delta_{j'}$.\label{lem:solotrick:prefix}
\end{enumerate}
	
\noindent Consider any $(\mathcal{Q} \cup \mathcal{P}_i)$-only execution $\alpha_j$ from $C$ such that $\mathcal{Q}$ is bivalent in $C\alpha_j$ and $\alpha_j \widesim{p_i} \delta_j$.
Let $d$ be the operation that $p_i$ is poised to apply to the object $B$ in $C'\delta_j$.
Then for every $(\mathcal{Q} \cup \mathcal{P}_{i+1})$-only execution $\lambda'$ from $C\alpha_j$:

\begin{enumerate}[label=(\alph*)]
\setcounter{enumi}{1}
	\item if $\textit{value}(B, C\alpha_j\lambda') = \textit{value}(B, C'\delta_j)$, then $\mathcal{Q}$ is univalent in $C\alpha_j\lambda'd$, and\label{lem:solotrick:cover}

	\item if $\textit{value}(B, C'\delta_j) = \textit{value}(B, C'\delta_j d)$ and, in some configuration of $\lambda'$, the value of $B$ is equal to $\textit{value}(B, C'\delta_j)$, then $\mathcal{Q}$ is univalent in $C\alpha_j\lambda'$.\label{lem:solotrick:nochange}
\end{enumerate}
\end{lemma}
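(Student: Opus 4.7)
The plan is to define $j$ as the maximum value in $\{0,1,\ldots,r-1\}$ for which statement (a) is satisfiable, and then derive (b) and (c) from the resulting maximality principle. At the lower end, $j = 0$ always works: taking $\alpha_0$ to be the empty execution gives $C\alpha_0 = C$ with $\mathcal{Q}$ bivalent (by hypothesis) and $\alpha_0 \widesim{p_i} \delta_0$ trivially. To rule out $j = r$, I would argue that if some $\alpha_r$ satisfied (a), then because $\alpha_r \widesim{p_i} \delta_r = \delta$, process $p_i$ would decide the same value $u$ in $C\alpha_r$ as it does at the end of $\delta$; but bivalence of $\mathcal{Q}$ in $C\alpha_r$ would let some process in $\mathcal{Q}$ decide $\bar{u}$ in a $\mathcal{Q}$-only continuation, contradicting agreement. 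So the largest valid $j$ lies in $\{0,\ldots,r-1\}$. The maximality gives the key consequence I will use throughout: every $(\mathcal{Q} \cup \mathcal{P}_i)$-only execution from $C$ that is $p_i$-indistinguishable from $\delta_{j+1}$ leaves $\mathcal{Q}$ univalent.

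For (b), fix $\alpha_j$ as specified and any $\lambda'$ with $\textit{value}(B, C\alpha_j\lambda') = \textit{value}(B, C'\delta_j)$. Because $\alpha_j \widesim{p_i} \delta_j$ and $\lambda'$ does not involve $p_i$, process $p_i$ is poised to apply $d$ in $C\alpha_j\lambda'$ and, by the matching value of $B$, obtains the same response as in $C'\delta_j$. Thus $\alpha_j\lambda'd$ is a $(\mathcal{Q} \cup \mathcal{P}_i)$-only execution from $C$ with $\alpha_j\lambda'd \widesim{p_i} \delta_{j+1}$, and the maximality of $j$ forces $\mathcal{Q}$ to be univalent in $C\alpha_j\lambda'd$.

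For (c), I would pick any prefix $\mu$ of $\lambda'$ with $\textit{value}(B, C\alpha_j\mu) = \textit{value}(B, C'\delta_j)$ (which exists by hypothesis) and write $\lambda' = \mu\nu$. The extra hypothesis that $d$ is trivial on $B$ in $C'\delta_j$ implies that $d$ is also trivial on $B$ in $C\alpha_j\mu$, so $C\alpha_j\mu d$ differs from $C\alpha_j\mu$ only in the local state of $p_i$. Since $\nu$ does not involve $p_i$, it can be replayed from $C\alpha_j\mu d$, and the resulting configuration $C\alpha_j\mu d\nu$ differs from $C\alpha_j\lambda' = C\alpha_j\mu\nu$ only in $p_i$'s state. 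The execution $\alpha_j\mu d\nu$ is $(\mathcal{Q} \cup \mathcal{P}_i)$-only and $p_i$-indistinguishable from $\delta_{j+1}$, so maximality of $j$ yields $\mathcal{Q}$-univalence at $C\alpha_j\mu d\nu$; this univalence then transfers to $C\alpha_j\lambda'$ because $\mathcal{Q}$-univalence depends only on the states of processes in $\mathcal{Q}$ and the object values, all of which agree between the two configurations.

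The main obstacle is part (c): the naive attempt is to append $d$ after the entire $\lambda'$ and invoke (b), but in the case where the value of $B$ drifts away from $\textit{value}(B, C'\delta_j)$ during $\nu$, appending $d$ at the end no longer yields $p_i$-indistinguishability from $\delta_{j+1}$. The trick is to commute $d$ backward to the earlier moment $C\alpha_j\mu$ at which $B$'s value still matches, using the triviality of $d$ on $B$ to ensure both that inserting $d$ there does not disturb any subsequent step in $\nu$ and that $\mathcal{Q}$-univalence at $C\alpha_j\mu d\nu$ carries back to $C\alpha_j\lambda'$ through the state-indistinguishability of the two configurations to every process outside $\{p_i\}$.
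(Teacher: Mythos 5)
Your proposal is correct and mirrors the paper's argument: you identify the same critical index $j$ (your maximality condition is equivalent to the paper's minimality condition, and both reduce to the key consequence that any $(\mathcal{Q}\cup\mathcal{P}_i)$-only execution $p_i$-indistinguishable from $\delta_{j+1}$ forces $\mathcal{Q}$-univalence), and parts (b) and (c) are established by the same extend-and-commute reasoning, including the crucial step in (c) of inserting $d$ at an earlier prefix $\mu$ where $B$ has the matching value and replaying the suffix. The only cosmetic difference is that the paper phrases the replay of the suffix via an indistinguishability-to-$(\mathcal{Q}\cup\mathcal{P}_{i+1})$ argument yielding an execution $\lambda_2^\star$, whereas you note it directly as a replay of $\nu$, which is equivalent in the deterministic setting.
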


\begin{proof}
Consider any execution $\alpha_r$ from $C$ such that $\alpha_r \widesim{p_i} \delta_r$.
Let $u \in \{0, 1\}$ be the value that $p_i$ has decided in $\delta_r$.
Then $p_i$ has decided $u$ in $\alpha_r$.
By agreement, $\mathcal{Q}$ is $u$-univalent in $C\alpha_r$.
On the other hand, if $\alpha_0 = \delta_0$ is the empty execution from $C$, then $\mathcal{Q}$ is bivalent in $C\alpha_0 = C$ and $\alpha_0 \widesim{p_i} \delta_0$.

Let $j \in \{0, \ldots, r-1\}$ be the minimum value such that $\mathcal{Q}$ is univalent in $C\alpha$ for every execution $\alpha$ from $C$ with $\alpha \widesim{p_i} \delta_{j+1}$.
(Note that there might be no execution $\alpha$ from $C$ with $\alpha \widesim{p_i} \delta_{j+1}$.)
Then for all $j' \in \{0, \ldots, j\}$, there is a $(\mathcal{Q} \cup \mathcal{P}_i)$-only execution $\alpha_{j'}$ from $C$ such that $\mathcal{Q}$ is bivalent in $C\alpha_{j'}$ and $\alpha_{j'} \widesim{p_i} \delta_{j'}$.
This gives us part~(\ref{lem:solotrick:prefix}).
%Figure~\ref{fig:solotrick} is a visual representation of this part of the lemma.

\begin{figure}[h]
\centering
\begin{tikzpicture}[shorten >=1pt,node distance=2.9cm,on grid,auto] 
	\tikzset{every state/.style={minimum size=0.85cm}}

	\node[state]		(C')		{$C'$};
	\node[state]		(C1') [right=2cm of C']	{};
	\node[state]		(C2') [right=2cm of C1'] {};
	\node			(dotstop) [right=1.2cm of C2']	{$\ldots$};
	\node[state]		(Cj') [right=1.2cm of dotstop]		{};
	\node[state]		(Cj1') [right=2cm of Cj']	{};
	
	\node[state,accepting]	(C) [below=2cm of C']		{$C$};
	\node[state,accepting]	(C1) [below=2cm of C1']	{};
	\node[state,accepting]	(C2) [below=2.8cm of C2']	{};
	\node[state,accepting]	(Cj) [below=3.7cm of Cj']		{};
	\node[state]	(Cj1) [below=4.5cm of Cj1']	{};
	
	\draw [dashed] (C') -- (C);
	\draw [dashed] (C1') -- (C1);
	\draw [dashed] (C2') -- (C2);
	\draw [dashed] (Cj') -- (Cj);
	\draw [dashed] (Cj1') -- (Cj1);
	
	\node (C'indistC) [fill=white,below=0.95cm of C'] {\small $C \widesim{p_i} C'$};
	
	\node (indist1) [fill=white,below=0.95cm of C1'] {\small $\alpha_1 \widesim{p_i} \delta_1$};
	
	\node (indist2) [fill=white,below=1.35cm of C2'] {\small $\alpha_2 \widesim{p_i} \delta_2$};
	
	\node (indist3) [fill=white,below=1.8cm of Cj'] {\small $\alpha_j \widesim{p_i} \delta_j$};
	
	\node (explain) [fill=white,above right=1.35cm and 0.4cm of Cj1,align=left] {\small If $\alpha \widesim{p_i} \delta_{j+1}$, then $\mathcal{Q}$ \\ \small is univalent in $C\alpha$};

    \path[->,
	line join=round, decoration={
	    zigzag,
	    segment length=4,
	    amplitude=.9,post=lineto,
	    post length=2pt}]	    
	    
	(C')	 edge node[above] {$\delta_1$} (C1')
	(C1') edge (C2')
	(C2')	edge (dotstop)
	(dotstop) edge (Cj')
	(Cj') edge (Cj1')
	
	(C) edge[decorate] node[above] {$\alpha_1$} (C1)
	(C) edge[decorate,bend right=10] node[above right=-0.1cm and 0.75cm] {$\alpha_2$} (C2)
	(C) edge[decorate,bend right=15] node[above right=-0.3cm and 1.65cm] {$\alpha_j$} (Cj)
	(C) edge[decorate,bend right=18] node[above right=-0.4cm and 3.1cm] {$\alpha$} (Cj1);
	
	\draw [decorate,decoration={brace,amplitude=4pt}] (0.4, 0.5) -- (3.5, 0.5) node[pos=0.5,above=2pt] {$\delta_2$};
	
	\draw [decorate,decoration={brace,amplitude=4pt}] (0.4, 1.1) -- (5.9, 1.1) node[pos=0.5,above=2pt] {$\delta_j$};
	
	\draw [decorate,decoration={brace,amplitude=4pt}] (0.4, 1.7) -- (7.9, 1.7) node[pos=0.5,above=2pt] {$\delta_{j+1}$};
\end{tikzpicture}
\caption{Part~(\ref{lem:solotrick:prefix}) of Lemma~\ref{lem:solotrick}. Nodes with double outlines denote configurations in which $\mathcal{Q}$ is bivalent. The executions $\delta_0$ and $\alpha_0$ are empty and omitted from this diagram.}\label{fig:solotrick}
\end{figure}

%We now prove \ref{lem:solotrick:cover}.
Define $v = \textit{value}(B, C'\delta_j)$.
Suppose $\textit{value}(B, C\alpha_j\lambda') = v$.
Then process $p_i$ obtains the same response when it applies $d$ in $C\alpha_j\lambda'$ and $C'\delta_j$.
Hence, $\alpha_j\lambda' d \widesim{p_i} \delta_j d = \delta_{j+1}$.
Let $\alpha = \alpha_j\lambda'd$.
Then $\mathcal{Q}$ is univalent in $C\alpha$ by definition of $j$, which proves property \ref{lem:solotrick:cover}.

%Now we prove (\ref{lem:solotrick:partb}.\ref{lem:solotrick:nochange}).
Suppose that $\textit{value}(B, C'\delta_j d) = v$.
Then $d$ is either a \emph{Read}$(B)$ or a \emph{Swap}$\bigl(B, v\bigr)$ operation.
Further suppose that, in some configuration of $\lambda'$, the value of $B$ is $v$.
Let $\lambda' = \lambda_1'\lambda_2'$, where $\textit{value}(B, C\alpha_j\lambda_1') = v$.
Then $p_i$ obtains the same response when it applies $d$ in $C\alpha_j\lambda_1'$ and $C'\delta_j$.
Thus, $\alpha_j\lambda_1' d \widesim{p_i} \delta_j d$.
Since $d$ is either a \emph{Read}$(B)$ or a \emph{Swap}$\bigl(B, v\bigr)$ operation, the value of $B$ is the same in $C\alpha_j\lambda_1'$ and $C\alpha_j\lambda_1' d$.
Furthermore, since $d$ is a single step that accesses $B$, all of the other objects also have the same values in $C\alpha_j\lambda_1'$ and $C\alpha_j\lambda_1' d$.
Since $p_{i} \not\in (\mathcal{Q} \cup \mathcal{P}_{i+1})$, the configurations $C\alpha_j\lambda_1'$ and $C\alpha_j\lambda_1' d$ are indistinguishable to $\mathcal{Q} \cup \mathcal{P}_{i+1}$.
Then there is a $(\mathcal{Q} \cup \mathcal{P}_{i+1})$-only execution $\lambda_2^\star$ from $C\alpha_j\lambda_1'd$ such that $\lambda_2'$ and $\lambda_2^\star$ are indistinguishable to $\mathcal{Q} \cup \mathcal{P}_{i+1}$.
Since all of the objects have the same values in $C\alpha_j\lambda_1'$ and $C\alpha_j\lambda_1'd$, all of the objects also have the same values in $C\alpha_j\lambda_1'\lambda_2'$ and $C\alpha_j\lambda_1'd\lambda_2^\star$.
Furthermore, $C\alpha_j\lambda' = C\alpha_j\lambda_1'\lambda_2'$ is indistinguishable from $C\alpha_j\lambda_1' d\lambda_2^\star$ to $\mathcal{Q} \cup \mathcal{P}_{i+1}$.
Since $p_{i}$ takes no steps in $\lambda_2^\star$, we know that $\alpha_j\lambda_1' d \widesim{p_{i}} \alpha_j\lambda_1' d\lambda_2^\star$, and therefore $\alpha_j\lambda_1' d\lambda_2^\star \widesim{p_{i}} \delta_j d = \delta_{j+1}$.
Let $\alpha = \alpha_j\lambda_1' d\lambda_2^\star$.
Then $\mathcal{Q}$ is univalent in $C\alpha$ by definition of $j$.
Hence, $\mathcal{Q}$ is univalent in $C\alpha_j\lambda'$ as well.
This concludes the proof of property \ref{lem:solotrick:nochange}.

\begin{figure}
\centering
\begin{minipage}{.49\textwidth}
\centering
\begin{tikzpicture}[shorten >=1pt,node distance=2.9cm,on grid,auto] 
	\tikzset{every state/.style={minimum size=0.85cm}}
	
	\node[state,accepting]	(C) 		{$C$};
	\node[state,accepting]	(Cj) [right=1.5cm of C]		{};
	\node[state]	(lambda) [right=1.5cm of Cj]	{};
	\node[state] (step) [right=1.5cm of lambda] {};
	
	\node[state]		(C')	 [above=2.5cm of C]	{$C'$};
	\node[state]		(Cj') [above=2.5cm of lambda]		{};

	\node (value) [above=0.9cm of Cj',align=center] {\small $B$ has \\ \small value $v$};
	
	\node (value2) [above=0.9cm of lambda,align=center] {\small $B$ has \\ \small value $v$};
	
	\node (univ) [below=0.9cm of step,align=center] {\small $\mathcal{Q}$ is \\ \small univalent};

    \path[->,
	line join=round, decoration={
	    zigzag,
	    segment length=4,
	    amplitude=.9,post=lineto,
	    post length=2pt}]
	    
	(C')	 edge[decorate] node[above] {$\delta_j$} (Cj')
	
	(C) edge[decorate] node[above] {$\alpha_j$} (Cj)
	(Cj) edge[decorate] node[above] {$\lambda'$} (lambda)
	(lambda) edge node[above] {$d$} (step);
\end{tikzpicture}
\captionof{figure}{Part~\ref{lem:solotrick:cover} of Lemma~\ref{lem:solotrick}.}\label{fig:partbi}
\end{minipage}
\begin{minipage}{.49\textwidth}
\centering
\begin{tikzpicture}[shorten >=1pt,node distance=2.9cm,on grid,auto] 
	\tikzset{every state/.style={minimum size=0.85cm}}
	
	\node[state,accepting]	(C) 		{$C$};
	\node[state,accepting]	(Cj) [right=1.5cm of C]		{};
	\node[state]	(lambda1) [right=1.5cm of Cj]	{};
	\node[state] (lambda2) [right=1.5cm of lambda1] {};
	
	\node[state]		(C')	 [above=2.5cm of C]	{$C'$};
	\node[state]		(Cj') [above=2.5cm of lambda1]		{};
	\node[state]		(Cj1') [right=1.5cm of Cj'] {};

	\draw [decorate,decoration={brace,amplitude=4pt}] (2.6, 3) -- (4.9, 3) node[pos=0.5,above=4pt,align=center] {\small $B$ has \\ \small value $v$};
	
	\node (value) [above=0.9cm of lambda1,align=center] {\small $B$ has \\ \small value $v$};
	
	\node (univ) [below=0.9cm of lambda2,align=center] {\small $\mathcal{Q}$ is \\ \small univalent};

    \path[->,
	line join=round, decoration={
	    zigzag,
	    segment length=4,
	    amplitude=.9,post=lineto,
	    post length=2pt}]	    
	    
	(C')	 edge[decorate] node[above] {$\delta_j$} (Cj')
	
	(C) edge[decorate] node[above] {$\alpha_j$} (Cj)
	(Cj) edge[decorate] (lambda1)
	(lambda1) edge[decorate] (lambda2)
	(Cj') edge node[above] {$d$} (Cj1');
	
	\draw [decorate,decoration={brace,amplitude=4pt}] (4.05, -0.45) -- (1.9, -0.45) node[pos=0.5,below=2pt] {$\lambda'$};
\end{tikzpicture}
\captionof{figure}{Part~\ref{lem:solotrick:nochange} of Lemma~\ref{lem:solotrick}.}\label{fig:partbii}
\end{minipage}
\end{figure}
\end{proof}

\subsection{Readable binary swap objects}\label{sec:binary}

We will now prove that at least $n-2$ readable binary swap objects are needed to solve obstruction-free binary consensus.
We begin with a simple observation.

\begin{observation}\label{obs:nochange}
	Let $C$ be a configuration in which a set $S \subseteq \mathcal{P}$ of processes covers a set $\mathcal{B}$ of objects.
	If $\mathcal{Q}$ is bivalent in $C\beta$, where $\beta$ is the block swap by $S$, and for every $B \in \mathcal{B}$ we have $\textit{value}(B, C) = \textit{value}(B, C\beta)$, then $\mathcal{Q}$ is bivalent in $C$.
\end{observation}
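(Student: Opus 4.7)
The plan is to show that $C$ and $C\beta$ are indistinguishable to $\mathcal{Q}$ and that every shared object has the same value in both configurations; once that is established, any $\mathcal{Q}$-only execution that witnesses bivalency from $C\beta$ can be lifted to a $\mathcal{Q}$-only execution from $C$ that decides the same value.

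First I would argue that the values of all objects agree in $C$ and $C\beta$. For objects in $\mathcal{B}$, this is precisely the hypothesis. For objects outside $\mathcal{B}$, note that $\beta$ is the block swap by $S$ and $S$ covers $\mathcal{B}$, so every step in $\beta$ is a \emph{Swap} applied to some object in $\mathcal{B}$. Hence no object outside $\mathcal{B}$ is touched during $\beta$, and its value is preserved. Next, because $S \subseteq \mathcal{P}$ is disjoint from $\mathcal{Q}$, only the local states of processes in $S$ change during $\beta$, so every process in $\mathcal{Q}$ has the same state in $C$ and $C\beta$. Therefore $C \widesim{\mathcal{Q}} C\beta$.

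Now suppose $\mathcal{Q}$ is bivalent in $C\beta$. Then for each $v \in \{0,1\}$ there is a $\mathcal{Q}$-only execution $\gamma_v$ from $C\beta$ in which some process of $\mathcal{Q}$ decides $v$. Using the indistinguishability property stated in Section~\ref{sec:model} (if $C \widesim{\mathcal{Q}} C\beta$ and the objects accessed by $\mathcal{Q}$ during $\gamma_v$ have the same values in both configurations, then there is a $\mathcal{Q}$-only execution $\gamma_v'$ from $C$ with $\gamma_v \widesim{\mathcal{Q}} \gamma_v'$), we obtain a $\mathcal{Q}$-only execution from $C$ in which the same process of $\mathcal{Q}$ decides $v$. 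Since this holds for both $v = 0$ and $v = 1$, $\mathcal{Q}$ is bivalent in $C$.

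This proof is essentially immediate once the right observations are lined up, so I do not anticipate a real obstacle; the only subtlety is confirming that $\beta$ affects nothing outside $\mathcal{B}$, which follows directly from the definition of a block swap by a covering set. The observation will then be combined with Lemma~\ref{lem:extend} in the subsequent lower bound argument to maintain bivalency across a covered configuration.
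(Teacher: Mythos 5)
Your proof is correct and takes essentially the same approach as the paper's: establish that every object has the same value in $C$ and $C\beta$ (the hypothesis handles $\mathcal{B}$, and the block swap touches nothing else), note that $C$ and $C\beta$ are indistinguishable to $\mathcal{Q}$ since $\mathcal{Q}$ takes no steps in $\beta$, and conclude bivalency of $\mathcal{Q}$ in $C$. The only difference is that you spell out the final lifting of the two $\mathcal{Q}$-only witnessing executions, which the paper leaves implicit.
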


\begin{proof}
Since $S$ only applies \emph{Swap} operations to objects in $\mathcal{B}$ during $\beta$, we know that all of the objects outside $\mathcal{B}$ have the same values in $C$ and $C\beta$.
Hence, all of the objects have the same values in $C$ and $C\beta$.
Furthermore, since no process in $\mathcal{Q}$ takes any steps during $\beta$, we have $C \widesim{\mathcal{Q}} C\beta$.
Therefore, $\mathcal{Q}$ is bivalent in $C$.
\end{proof}

Consider an initial configuration $C_0$ of the binary consensus algorithm in which $q_0$ has input $0$ and $q_1$ has input $1$.
Then by Observation~\ref{obs:initbiv}, $\mathcal{Q}$ is bivalent in $C_0$.
Let $\delta$ be $p_{0}$'s solo-terminating execution from $C_0$.
Suppose that $\delta$ consists of $r$ steps and, for all $s \in \{0, \ldots, r\}$, let $\delta_s$ be the prefix of $\delta$ that consists of the first $s$ steps by $p_0$.
Let $j$ be the value satisfies the conditions of Lemma~\ref{lem:solotrick} (with $C = C' = C_0$ and $i = 0$).
Then there is an execution $\alpha_j$ from $C_0$ such that $\mathcal{Q}$ is bivalent in $C_0\alpha_j$ and $\alpha_j \widesim{p_0} \delta_j$.

Let $d$ be the operation that $p_0$ is poised to apply to the object $B$ in $C_0\delta_j$.
First suppose that $\textit{value}(B, C_0\delta_j) = \textit{value}(B, C_0\delta_j d)$.
%If $\textit{value}(B, C_0\delta_j) = \textit{value}(B, C_0\delta_j d)$, then define $\mathcal{X}_1 = \{B\}$, $\mathcal{Y}_1 = \emptyset$, and $S_1 = \emptyset$.\todo{remove defs}
Consider any $(\mathcal{Q} \cup \mathcal{P}_i)$-only execution $\lambda$ from $C_0\alpha_j$ in which the value of $B$ changes.
Since $B$ is a 1-bit object, there is a configuration of $\lambda$ in which the value of $B$ is $\emph{value}(B, C_0\delta_j)$.
Then by part \ref{lem:solotrick:nochange} of Lemma~\ref{lem:solotrick}, it follows that $\mathcal{Q}$ is univalent in $C_0\alpha_j\lambda$.
%Since $B$ is a 1-bit object and it is the only object in $\mathcal{X}_1$, this gives us the desired property for $\mathcal{X}_1$.\todo{change this as appropriate}

Now suppose that $\textit{value}(B, C_0\delta_j) \neq \textit{value}(B, C_0\delta_j d)$.
Consider any $(\mathcal{Q} \cup \mathcal{P}_i)$-only execution $\lambda$ from $C_0\alpha_j$ such that $d$ changes the value of $B$ when applied in $C_0\alpha_j\lambda$.
Then \emph{value}$(B, C_0\alpha_j\lambda) = \textit{value}(B, C_0\delta_j)$.
From part \ref{lem:solotrick:cover} of Lemma~\ref{lem:solotrick}, it follows that $\mathcal{Q}$ is univalent in $C_0\alpha_j\lambda d$.

\medskip

In the next lemma, we generalize the argument above to construct, for all $i \in \{0, \ldots, n-2\}$, a configuration $C_i$ in which $\mathcal{Q}$ is bivalent and two disjoint sets of objects $\mathcal{X}_i$ and $\mathcal{Y}_i$ that satisfy the following properties.
If the value of any object in $\mathcal{X}_i$ changes during a $(\mathcal{Q} \cup \mathcal{P}_i)$-only execution from $C_i$, then $\mathcal{Q}$ is univalent in the resulting configuration.
The objects in $\mathcal{Y}_i$ are covered by a set of processes $S \subseteq \mathcal{P} - \mathcal{P}_i$.
If the block swap by $S$ changes the value of some object in $\mathcal{Y}_i$ following some $(\mathcal{Q} \cup \mathcal{P}_i)$-only execution from $C_i$, then $\mathcal{Q}$ is univalent in the resulting configuration.
Furthermore, we have $|\mathcal{X}_i \cup \mathcal{Y}_i| = i$.
Taking $i = n-2$, this allows us to obtain our lower bound.
We show how to construct these sets of objects in the following lemma.

\begin{lemma}\label{lem:maintech}
	For all $i \in \{0, \ldots, n-2\}$, there is a configuration $C_i$, two disjoint sets of objects $\mathcal{X}_i$ and $\mathcal{Y}_i$ such that $|\mathcal{X}_i \cup \mathcal{Y}_i| = i$, and a set of $|\mathcal{Y}_i|$ processes $S_i \subseteq \mathcal{P} - \mathcal{P}_{i}$ (among the first $i$ processes in $\mathcal{P}$) that satisfy the following properties for every $(\mathcal{Q} \cup \mathcal{P}_i)$-only execution $\lambda$ from $C_i$:
	
\begin{enumerate}[label=(\alph*),ref=\alph*]
	
		\item $\mathcal{Q}$ is bivalent in $C_i$,\label{propBiv}
		
		\item $S_{i}$ covers $\mathcal{Y}_{i}$ in $C_i$,\label{propCover}	
		
		\item if the value of any object in $\mathcal{X}_i$ is changed at any point during $\lambda$, then $\mathcal{Q}$ is univalent in $C_i\lambda$, and		\label{nochange}

		\item if the block swap $\beta_i$ by $S_i$ changes the value of some object in $\mathcal{Y}_i$ when applied in $C_i\lambda$, then $\mathcal{Q}$ is univalent in $C_i\lambda\beta_i$.\label{blockswap}

	\end{enumerate}
\end{lemma}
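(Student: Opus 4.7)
My plan is to prove the lemma by induction on $i$, adding exactly one object to $\mathcal{X}_i \cup \mathcal{Y}_i$ at each step. The base case $i = 0$ is immediate: take $C_0$ to be any initial configuration in which $q_0$ has input $0$ and $q_1$ has input $1$, and set $\mathcal{X}_0 = \mathcal{Y}_0 = S_0 = \emptyset$. Property~(a) then follows from Observation~\ref{obs:initbiv}, and properties~(b)--(d) hold vacuously.

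For the inductive step, suppose the lemma holds for some $i < n-2$. Since $\mathcal{Q}$ is bivalent in $C_i$ and the algorithm is obstruction-free, process $p_i \in \mathcal{P} - \mathcal{P}_{i+1}$ has a solo-terminating execution $\delta$ from $C_i$. I would apply Lemma~\ref{lem:solotrick} with $C = C' = C_i$ and process $p_i$ to obtain the index $j$, a $(\mathcal{Q} \cup \mathcal{P}_i)$-only execution $\alpha_j$ from $C_i$ with $\alpha_j \widesim{p_i} \delta_j$ and $\mathcal{Q}$ bivalent in $C_i\alpha_j$, and the operation $d$ that $p_i$ is poised to apply to some object $B$ in $C_i\delta_j$. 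Set $v = \textit{value}(B, C_i\delta_j)$ and $C_{i+1} = C_i\alpha_j$, so that property~(a) for $i+1$ is immediate.

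I would split into two cases. In Case~1, where $d$ is a \emph{Read} or a \emph{Swap}$(v)$, applying part~(c) of Lemma~\ref{lem:solotrick} with $\lambda'$ empty forces $\textit{value}(B, C_{i+1}) = \bar v$, since otherwise $\mathcal{Q}$ would be univalent in $C_{i+1}$. Because $B$ is binary, any $(\mathcal{Q} \cup \mathcal{P}_{i+1})$-only execution $\lambda$ from $C_{i+1}$ that changes $B$'s value passes through a configuration where $B = v$, and part~(c) of Lemma~\ref{lem:solotrick} then yields univalence in $C_{i+1}\lambda$; I set $\mathcal{X}_{i+1} = \mathcal{X}_i \cup \{B\}$, $\mathcal{Y}_{i+1} = \mathcal{Y}_i$, $S_{i+1} = S_i$. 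In Case~2, where $d = \emph{Swap}(\bar v)$, I set $\mathcal{X}_{i+1} = \mathcal{X}_i$, $\mathcal{Y}_{i+1} = \mathcal{Y}_i \cup \{B\}$, $S_{i+1} = S_i \cup \{p_i\}$, and order $\beta_{i+1} = d \cdot \beta_i$. Property~(b) then holds because processes in $S_i$ take no steps during $\alpha_j$, and in Case~2 because $p_i$ is poised on $B$ with a \emph{Swap} in $C_{i+1}$. Property~(c) for old objects in $\mathcal{X}_i$ follows from the inductive hypothesis applied to the $(\mathcal{Q} \cup \mathcal{P}_i)$-only execution $\alpha_j\lambda$ from $C_i$. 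Property~(d) in Case~1 similarly follows via $\alpha_j\lambda$; in Case~2 it splits according to whether the first step $d$ of $\beta_{i+1}$ changes $B$ (handled by part~(b) of Lemma~\ref{lem:solotrick}) or a later step in $\beta_i$ changes some object in $\mathcal{Y}_i$ (handled by the inductive hypothesis applied to $\alpha_j\lambda d$, which is $(\mathcal{Q} \cup \mathcal{P}_i)$-only from $C_i$).

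I expect the main obstacle to be ensuring that $B \notin \mathcal{X}_i \cup \mathcal{Y}_i$, so that $|\mathcal{X}_{i+1} \cup \mathcal{Y}_{i+1}| = i+1$ and the sets remain disjoint. In Case~1, if $B \in \mathcal{X}_i$ then $\alpha_j$ cannot change $B$'s value (else $\mathcal{Q}$ would be univalent in $C_{i+1}$, contradicting bivalence), so $\textit{value}(B, C_i) = \bar v$. Combined with $\textit{value}(B, C_i\delta_j) = v$ and the fact that $\delta_j$ is $p_i$-only, $p_i$ must apply a \emph{Swap} to $B$ at some position $j^* \leq j$ in $\delta$, changing $B$'s value. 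By part~(a) of Lemma~\ref{lem:solotrick} there is a bivalent $\alpha_{j^*}$ with $\alpha_{j^*} \widesim{p_i} \delta_{j^*}$ in which the same \emph{Swap} receives the same response, so $B$'s value changes during $\alpha_{j^*}$, contradicting $B \in \mathcal{X}_i$ via the inductive hypothesis. Handling the remaining disjointness constraint---in particular $B \notin \mathcal{Y}_i$---will be the most subtle part of the proof: I would invoke Lemma~\ref{lem:extend} to preserve bivalence while applying $\beta_i$ in $C_{i+1}$, and argue that if $B$ were already in $\mathcal{Y}_i$ then the process in $S_i$ covering $B$ would be forced to swap a value matching $B$'s current value there, after which a further indistinguishability argument involving $p_i$'s continuation would contradict the definition of $j$.
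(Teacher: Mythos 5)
Your overall plan — induction on $i$, applying Lemma~\ref{lem:solotrick} to pick a critical step $d$ of $p_i$'s solo execution, and splitting into Case~1 ($d$ leaves $B$'s value unchanged, add $B$ to $\mathcal{X}$) versus Case~2 ($d$ changes $B$'s value, add $B$ to $\mathcal{Y}$ and $p_i$ to $S$) — matches the paper's structure, and your argument for $B \notin \mathcal{X}_i$ in Case~1 is sound. However, there is a genuine gap in how you start the inductive step: you run $p_i$'s solo-terminating execution $\delta$ directly from $C_i$ (applying Lemma~\ref{lem:solotrick} with $C = C' = C_i$), whereas the paper first applies Lemma~\ref{lem:extend} (with $S = S_i$) to obtain a $\mathcal{Q}$-only execution $\gamma$ from $C_i$ such that $\mathcal{Q}$ is bivalent in $C_i\gamma\beta_i$, and then runs $\delta$ from $C_i\gamma$. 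That preliminary $\gamma$ is not cosmetic: by property~(d) it forces $\textit{value}(Y, C_i\gamma) = \textit{value}(Y, C_i\gamma\beta_i)$ for every $Y \in \mathcal{Y}_i$, i.e.\ the value that the covering process in $S_i$ would swap into any $Y \in \mathcal{Y}_i$ already sits in $Y$ at the start of $\delta$. This is exactly what underpins the paper's Claim~\ref{claim:BinY}, which establishes that for $B \in \mathcal{Y}_i$ the block-swap value for $B$ must equal $v = \textit{value}(B, C_i\gamma\delta_j)$.

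Your proposed fix for $B \notin \mathcal{Y}_i$ — invoking Lemma~\ref{lem:extend} \emph{after} $C_{i+1}$ to get $\gamma'$ with $\mathcal{Q}$ bivalent in $C_{i+1}\gamma'\beta_i$ — only tells you that $\textit{value}(B, C_{i+1}\gamma')$ equals whatever value the covering process is poised to swap, say $w$. If $w = v$, Lemma~\ref{lem:solotrick}(c) gives your contradiction. But if $w = \bar v$, you are stuck: in Case~1 with $d$ a Read, $p_i$ would read $\bar v$ from $B$ in $C_{i+1}\gamma'$ rather than the $v$ it read in $\delta$, so neither Lemma~\ref{lem:solotrick}(b) nor (c) applies, and the ``block swap hides $d$'' trick does not fire because $d$ does not change $B$'s value. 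The same $w = \bar v$ scenario also breaks the $B \notin \mathcal{Y}_i$ argument in Case~2 (where $d = \textit{Swap}(B, \bar v)$): the block swap no longer changes $B$'s value after $d$, so property~(d) yields no contradiction. The paper's Claim~\ref{claim:BinY} rules out $w = \bar v$ precisely by knowing $\textit{value}(B, C_i\gamma) = \bar v$ and hence forcing a $\textit{Swap}(B, v)$ step in $\delta_j$ that can then be hidden by $\beta_i$; without the $\gamma$-prefix, the value of $B$ at the start of $\delta$ is unconstrained and this inference is unavailable. To repair the proof, insert the Lemma~\ref{lem:extend} step before $\delta$, establish the analogue of Claim~\ref{claim:BinY}, and then your two-case analysis goes through essentially as you described. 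You should also supply a short argument for $B \notin \mathcal{X}_i$ in Case~2 (the paper gives one), though that part is less delicate.
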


\begin{proof}
We use induction on $i$.
Let $C_0$ be the bivalent initial configuration defined earlier in which $q_0$ has input $0$ and $q_1$ has input $1$.
This gives us property~(\ref{propBiv}).
Let $\mathcal{X}_0 = \mathcal{Y}_{0} = S_0 = \emptyset$.
Since $\mathcal{X}_0$, $\mathcal{Y}_{0}$, and $S_0$ are empty, properties (\ref{propCover}), (\ref{nochange}), and (\ref{blockswap}) hold vacuously.
This proves the base case.

\smallskip

Now suppose that the lemma holds for some $i \in \{0, \ldots, n-3\}$.
Then there is a configuration $C_i$, two disjoint sets of objects $\mathcal{X}_i$ and $\mathcal{Y}_i$ such that $|\mathcal{X}_i \cup \mathcal{Y}_i| = i$, and a set of $|\mathcal{Y}_i|$ processes $S_i \subseteq \mathcal{P} - \mathcal{P}_i$ that satisfy properties~(\ref{propBiv})--(\ref{blockswap}).
By Lemma~\ref{lem:extend} (with $C = C_i$ and $S = S_i$), there is a $\mathcal{Q}$-only execution $\gamma$ from $C_i$ such that $\mathcal{Q}$ is bivalent in $C_i\gamma\beta_i$, where $\beta_i$ is the block swap by $S_i$.
Property~(\ref{blockswap}) (with $\lambda = \gamma$) implies that, if the block swap $\beta_i$ changes the value of some object in $\mathcal{Y}_i$ when applied in $C_i\gamma$, then $\mathcal{Q}$ is univalent in $C_i\gamma\beta_i$.
Since $\mathcal{Q}$ is bivalent in $C_i\gamma\beta_i$, the block swap $\beta_i$ does not change the value of any object in $\mathcal{Y}_i$ when it is applied in $C_i\gamma$.
In other words, $\textit{value}(Y, C_i\gamma) = \textit{value}(Y, C_i\gamma\beta_i)$ for all $Y \in \mathcal{Y}_i$.
Hence, by Observation~\ref{obs:nochange}, $\mathcal{Q}$ is bivalent in $C_i\gamma$.

Let $\delta$ be $p_{i}$'s solo-terminating execution from $C_i\gamma$.
Suppose that $\delta$ consists of $r$ steps and, for all $s \in \{0, \ldots, r\}$, let $\delta_s$ be the prefix of $\delta$ that consists of the first $s$ steps by $p_i$.
Let $j \in \{0, \ldots, r-1\}$ be the value that satisfies the conditions of Lemma~\ref{lem:solotrick} (with $C = C' = C_i\gamma$).
Let $v = \textit{value}(B, C_i\gamma\delta_j)$.

The following claim is important for our construction.

\begin{claim}\label{claim:BinY}
	If $B \in \mathcal{Y}_i$, then $\textit{value}(B, C_i\gamma\beta_i) = v$.
\end{claim}

\begin{proof}[Proof of Claim~\ref{claim:BinY}]
To obtain a contradiction, suppose $B \in \mathcal{Y}_i$ and $\textit{value}(B, C_i\gamma\beta_i) = \bar{v}$.
Recall that, for all objects $Y \in \mathcal{Y}_i$, $\textit{value}(Y, C_i\gamma) = \textit{value}(Y, C_i\gamma\beta_i)$.
In particular, $B \in \mathcal{Y}_i$, so $\textit{value}(B, C_i\gamma) = \textit{value}(B, C_i\gamma\beta_i) = \bar{v}$.
Since $\textit{value}(B, C_i\gamma\delta_j) = v \neq \textit{value}(B, C_i\gamma)$ and $\delta_j$ is a $p_i$-only execution, process $p_i$ must apply \emph{Swap}$(B, v)$ during $\delta_j$.

Suppose that $p_i$ is poised to apply \emph{Swap}$(B, v)$ in $C_i\gamma\delta_t$, where $0 \leq t \leq j-1$.
%Define $t \in \{0, \ldots, j-1\}$ such that $p_{i}$ is poised to apply \emph{Swap}$(B, v)$ in $C_i\gamma\delta_t$.
Lemma~\ref{lem:solotrick}(\ref{lem:solotrick:prefix}) implies that there is a $(\mathcal{Q} \cup \mathcal{P}_{i})$-only execution $\alpha_{t}$ from $C_i\gamma$ such that $\mathcal{Q}$ is bivalent in $C_i\gamma\alpha_t$ and $\alpha_t \widesim{p_{i}} \delta_t$.
Hence, $p_{i}$ is poised to apply \emph{Swap}$(B, v)$ in $C_i\gamma\alpha_t$.
By Lemma~\ref{lem:extend} (with $C = C_i\gamma\alpha_t$ and $S = S_i$), there is a $\mathcal{Q}$-only execution $\gamma'$ from $C_i\gamma\alpha_t$ such that $\mathcal{Q}$ is bivalent in $C_i\gamma\alpha_t\gamma'\beta_i$.

Since $B \in \mathcal{Y}_i$, there is a process in $S_i$ that covers $B$ in $C_i$ by property~(\ref{propCover}).
The execution $\gamma$ is $\mathcal{Q}$-only, so this process covers $B$ in $C_i\gamma$ as well.
Since $\textit{value}(B, C_i\gamma\beta_i) = \bar{v}$, we know that the process in $S_i$ that covers $B$ in $C_i\gamma$ applies a \emph{Swap}$(B, \bar{v})$ operation in $\beta_i$.
Let $d_t$ be the \emph{Swap}$(B, v)$ operation that $p_{i}$ is poised to apply in $C_i\gamma\alpha_t$.
The execution $\gamma'$ is $\mathcal{Q}$-only, so $p_i$ is still poised to apply $d_t$ in $C_i\gamma\alpha_t\gamma'$.
Since $d_t$ is applied to $B$, it is overwritten by the block swap $\beta_i$, so $\textit{value}(B, C_i\gamma\alpha_t\gamma'd_t\beta_i) = \textit{value}(B, C_i\gamma\alpha_t\gamma'\beta_i)$.
All of the other base objects also have the same values in $C_i\gamma\alpha_t\gamma'd_t\beta_i$ and $C_i\gamma\alpha_t\gamma'\beta_i$.
Hence, $\mathcal{Q}$ is bivalent in $C_i\gamma\alpha_t\gamma'd_t\beta_i$.
However, since $\textit{value}(B, C_i\gamma\alpha_t\gamma'd_t) = v \neq \bar{v} = \textit{value}(B, C_i\gamma\alpha_t\gamma'd_t\beta_i)$, property (\ref{blockswap}) (with $\lambda = \gamma\alpha_t\gamma'd_t$) implies that $\mathcal{Q}$ is univalent in $C_i\gamma\alpha_t\gamma'd_t\beta_i$.
This is a contradiction, which completes the proof of Claim~\ref{claim:BinY}.
\end{proof}

\setlength{\textfloatsep}{19.0pt plus 2.0pt minus 4.0pt}

\begin{figure}
\centering
\begin{tikzpicture}[shorten >=1pt,node distance=2.9cm,on grid,auto] 
	\tikzset{every state/.style={minimum size=1.1cm}}

	\node[state,accepting]	(Ci)	{$C_i$};
	\node[state,accepting]	(gamma) [right=2.4cm of Ci]	{};
	\node[state,accepting]	(alpha)	[right=5cm of gamma]	{$C_{i+1}$};
	\node[state]	(delta)	[above=2.5cm of alpha]	{};
	\node[state,accepting]	(beta)	[above=2.5cm of gamma] {};
	
	\node	(indist)	[below=1.2cm of delta]	{$\delta_j \widesim{p_i} \alpha_j$};

	\node (value) [right=1.2cm of delta,align=center] {\small $B$ has \\ \small value $v$};

    \path[->,
	line join=round, decoration={
	    zigzag,
	    segment length=4,
	    amplitude=.9,post=lineto,
	    post length=2pt}]	    
	    
	(Ci)	edge[decorate] node[below] {$\mathcal{Q}$-only} node[above] {$\gamma$} (gamma)
%	(gamma)	edge node[left] {\small Block swap by $R_i$} node[right] {$\beta_i$} (beta)
	(gamma)	edge[decorate] node[below] {\small $(\mathcal{Q} \cup \mathcal{P}_i)$-only}  node[above] {$\alpha_j$} (alpha)
	(gamma)	edge[decorate,bend left=18] node[above left] {$\delta_j$}  node[below right=-0.05cm and -0.05cm] {\small $p_i$-only}  (delta)
	(gamma) edge node[left,align=right] {\small Block swap \\ \small $\beta_i$ by $S_i$} (beta);
\end{tikzpicture}
\caption{The construction of $C_{i+1}$ from $C_i$ in the proof of Lemma~\ref{lem:maintech}. Nodes with double outlines denote configurations in which $\mathcal{Q}$ is bivalent.}\label{fig:binlemma}
\vspace{-0.5cm}
\end{figure}

We now proceed with the inductive step.
Lemma~\ref{lem:solotrick}(\ref{lem:solotrick:prefix}) implies that there is a $(\mathcal{Q} \cup \mathcal{P}_i)$-only execution $\alpha_j$ from $C_i\gamma$ such that $\mathcal{Q}$ is bivalent in $C_i\gamma\alpha_j$ and $\alpha_j \widesim{p_i} \delta_j$.
Define $C_{i+1} = C_i\gamma\alpha_j$.
This is illustrated in Figure~\ref{fig:binlemma}.
Since $\mathcal{Q}$ is bivalent in $C_i\gamma\alpha_j = C_{i+1}$, this gives us property (\ref{propBiv}) for $i+1$.

Let $d$ be the operation that $p_i$ is poised to apply to the object $B$ in $C_i\gamma\delta_j$.
Recall that $v = \textit{value}(B, C_i\gamma\delta_j)$.
There are two cases depending on the value of $B$ in configuration $C_i\gamma\delta_jd$.

\smallskip

	\textbf{Case 1:} $\textit{value}(B, C_i\gamma\delta_j d) = v$.
	In this case, since $d$ does not change the value of $B$, $d$ could be either a \emph{Read} or a \emph{Swap}$(B, v)$ operation.
	We first show that $B \not\in \mathcal{X}_i$.
	Lemma~\ref{lem:solotrick}\ref{lem:solotrick:nochange} (with $C = C' = C_i\gamma$ and $\lambda'$ empty) implies that, if $\textit{value}(B, C_i\gamma\alpha_j) = \textit{value}(B, C_i\gamma\delta_j)$, then $\mathcal{Q}$ is univalent in $C_i\gamma\alpha_j$.
	Since $\mathcal{Q}$ is bivalent in $C_i\gamma\alpha_j$, $\textit{value}(B, C_i\gamma\alpha_j) \neq \textit{value}(B, C_i\gamma\delta_j) = v$.
%	The contrapositive of Lemma~\ref{lem:solotrick}\ref{lem:solotrick:nochange} (with $C = C' = C_i\gamma$ and $\lambda'$ empty) implies $\textit{value}(B, C_i\gamma\alpha_j) \neq \textit{value}(B, C_i\gamma\delta_j) = v$.
	This implies that $\textit{value}(B, C_i\gamma\alpha_j) = \bar{v}$.
	Since $\textit{value}(B, C_i\gamma\delta_j) = v$, it must be the case that either $\textit{value}(B, C_i\gamma) = v$ or $p_i$ applies a \emph{Swap}$(B, v)$ operation that changed the value of $B$ from $\bar{v}$ to $v$ during $\delta_j$.
	In the second case, since $\delta_j$ and $\alpha_j$ are indistinguishable to $p_i$, process $p_i$ changes the value of $B$ from $\bar{v}$ to $v$ during $\alpha_j$ as well.
	Hence, in either case, the value of $B$ changes during $\alpha_j$.
%	If $\textit{value}(B, C_i\gamma) = v$, then we know that the value of $B$ changes during $\alpha_j$.
%	On the other hand, if $p_i$ applies a \emph{Swap}$(B, v)$ primitive during $\delta_j$, then it applies a \emph{Swap}$(B, v)$ primitive during $\alpha_j$ since $\alpha_j \widesim{p_i} \delta_j$.
%	Again, this implies that the value of $B$ changes during $\alpha_j$.\todo{explain why this means it changes}
	If $B \in \mathcal{X}_i$, then property (\ref{nochange}) (with $\lambda = \gamma\alpha_j$) implies that $\mathcal{Q}$ is univalent in $C_i\gamma\alpha_j$.
	This is a contradiction, since $\mathcal{Q}$ is bivalent in $C_i\gamma\alpha_j$.
	Hence, $B \not\in \mathcal{X}_i$.

	Now we show that $B \not\in \mathcal{Y}_i$.
	To obtain a contradiction, suppose $B \in \mathcal{Y}_i$.
	By Claim~\ref{claim:BinY}, $\textit{value}(B, C_i\gamma\beta_i) = v$.
	Then the process in $S_i$ that covers $B$ applies \emph{Swap}$(B, v)$ in $\beta_i$.
	By Lemma~\ref{lem:extend} (with $C = C_i\gamma\alpha_j$ and $S = S_i$), there is a $\mathcal{Q}$-only execution $\gamma''$ from $C_i\gamma\alpha_j$ such that $\mathcal{Q}$ is bivalent in $C_i\gamma\alpha_j\gamma''\beta_i$.
	By property (\ref{blockswap}) (with $\lambda = \gamma\alpha_j\gamma''$), if the block swap $\beta_i$ changes the value of some object in $\mathcal{Y}_i$ when it is applied in $C_i\gamma\alpha_j\gamma''$, then $\mathcal{Q}$ is univalent in $C_i\gamma\alpha_j\gamma''\beta_i$.
%	By the contrapositive of property (\ref{partc}.\ref{blockswap}) (with $\lambda = \gamma\alpha_j\gamma''$), the block swap $\beta_i$ does not change the value of any object in $\mathcal{Y}_i$ when it is applied in $C_i\gamma\alpha_j\gamma''$.\todo{remove contrapositive again}
	Since $\mathcal{Q}$ is bivalent in $C_i\gamma\alpha_j\gamma''\beta_i$, the block swap $\beta_i$ does not change the value of any object in $\mathcal{Y}_i$ when it is applied in $C_i\gamma\alpha_j\gamma''$.
	In particular, since $B \in \mathcal{Y}_i$, $\textit{value}(B, C_i\gamma\alpha_j\gamma'') = \textit{value}(B, C_i\gamma\alpha_j\gamma''\beta_i) = v$.
	By Observation~\ref{obs:nochange}, $\mathcal{Q}$ is bivalent in $C_i\gamma\alpha_j\gamma''$.
	Since $B$ has value $v$ in $C_i\gamma\alpha_j\gamma''$, Lemma~\ref{lem:solotrick}\ref{lem:solotrick:nochange} (with $C = C' = C_i\gamma$ and $\lambda' = \gamma''$) implies that $\mathcal{Q}$ is univalent in $C_i\gamma\alpha_j\gamma''$.
%	By the contrapositive of Lemma~\ref{lem:solotrick}\ref{lem:solotrick:nochange} (with $C = C' = C_i\gamma$ and $\lambda' = \gamma''$), the value of $B$ is not equal to $\textit{value}(B, C_i\gamma\delta_j) = v$ at any point during $\gamma''$.
%	Since $\mathcal{Q}$ is bivalent in $C_i\gamma\alpha_j\gamma''$, we have $\textit{value}(B, C_i\gamma\alpha_j\gamma'') = \bar{v}$.
%	Since $B \in \mathcal{Y}_i$, we have $\bar{v} = \textit{value}(B, C_i\gamma\alpha_j\gamma'') = \textit{value}(B, C_i\gamma\alpha_j\gamma''\beta_i)$.
%	Thus, the process in $S_i$ that covers $B$ applies \emph{Swap}$(B, \bar{v})$ in $\beta_i$.
%	This means that $\textit{value}(B, C_i\gamma\beta_i) = \bar{v}$, which contradicts Claim~\ref{claim:BinY}.\todo{turn this into a direct proof using this claim. Or repeat what Claim says here}
	This is a contradiction.
	Therefore, $B \not\in \mathcal{Y}_i$.
	
	\smallskip
	
	We have shown that $B \not\in \mathcal{X}_i \cup \mathcal{Y}_i$.
	Define $\mathcal{X}_{i+1} = \mathcal{X}_i \cup \{B\}$, $\mathcal{Y}_{i+1} = \mathcal{Y}_i$, $S_{i+1} = S_i$, and $\beta_{i+1} = \beta_i$.
	Since none of the processes in $S_{i+1} = S_i$ take any steps during $\gamma\alpha_j$, they still cover $\mathcal{Y}_i$ in $C_i\gamma\alpha_j = C_{i+1}$, which proves property (\ref{propCover}) for $i+1$.
	
	\smallskip
	
	Let $\lambda'$ be a $(\mathcal{Q} \cup \mathcal{P}_{i+1})$-only execution from $C_{i+1}$.	
	 Then $\gamma\alpha_j\lambda'$ is a $(\mathcal{Q} \cup \mathcal{P}_{i})$-only execution from $C_i$.
	Hence, by property (\ref{nochange}) (with $\lambda = \gamma\alpha_j\lambda'$), if the value of some object in $\mathcal{X}_i$ changes during $\gamma\alpha_j\lambda'$, then $\mathcal{Q}$ is univalent in $C_i\gamma\alpha_j\lambda' = C_{i+1}\lambda'$.
	Furthermore, by Lemma~\ref{lem:solotrick}\ref{lem:solotrick:nochange} (with $C = C' = C_i\gamma$), if the value of $B$ is equal to $\textit{value}(B, C_i\gamma\delta_j) = v$ at any point during $\lambda'$, then $\mathcal{Q}$ is univalent in $C_i\gamma\alpha_j\lambda' = C_{i+1}\lambda'$.
	If the value of $B$ changes during $\lambda'$, then, since $B$ is a 1-bit object, its value is equal to $v$ at some point during $\lambda'$.
	This implies that if the value of $B$ changes at any point during $\lambda'$, then $\mathcal{Q}$ is univalent in $C_i\gamma\alpha_j\lambda' = C_{i+1}\lambda'$.
	This proves property (\ref{nochange}) for $i+1$.
	
	\smallskip

	Since $\gamma\alpha_j\lambda'$ is a $(\mathcal{Q} \cup \mathcal{P}_i)$-only execution, property (\ref{blockswap}) (with $\lambda = \gamma\alpha_j\lambda'$) implies that, if $\beta_i$ changes the value of an object in $\mathcal{Y}_i = \mathcal{Y}_{i+1}$ when applied in $C_i\gamma\alpha_j\lambda'$, then $\mathcal{Q}$ is univalent in $C_i\gamma\alpha_j\lambda'\beta_i = C_{i+1}\lambda'\beta_i$.
	This proves property (\ref{blockswap}) for $i+1$.

	\medskip

	\textbf{Case 2:} $\textit{value}(B, C_i\gamma\delta_j d) = \bar{v}$.
	We first show that $B \not\in \mathcal{X}_i$.
	To obtain a contradiction, suppose that $B \in \mathcal{X}_i$.
	Since $d$ changes the value of $B$ when applied in $C_i\gamma\delta_j$, we know that $d$ is a \emph{Swap}$(B, \bar{v})$ operation.
	By Lemma~\ref{lem:extend} (with $C = C_i\gamma\alpha_j$ and  $S = \{p_{i}\}$), there is a $\mathcal{Q}$-only execution $\gamma_{i}$ from $C_i\gamma\alpha_j$ such that $\mathcal{Q}$ is bivalent in $C_i\gamma\alpha_j\gamma_{i} d$.
	Property (\ref{nochange}) (with $\lambda = \gamma\alpha_j\gamma_id$) implies that, if the value of $B$ is changed at any point during $\gamma\alpha_j\gamma_id$, then $\mathcal{Q}$ is univalent in $C_i\gamma\alpha_j\gamma_id$.
	Since $\mathcal{Q}$ is bivalent in $C_i\gamma\alpha_j\gamma_id$, the value of $B$ is not changed during $\gamma\alpha_j\gamma_{i}d$.
	Thus, $\textit{value}(B, C_i\gamma) = \textit{value}(B, C_i\gamma\alpha_j\gamma_{i}d) = \bar{v}$.
	Since $\textit{value}(B, C_i\gamma\delta_j) = v$, process $p_i$ applies a \emph{Swap}$(B, v)$ operation during $\delta_j$.
	Since $\alpha_j \widesim{p_i} \delta_j$, process $p_i$ applies a \emph{Swap}$(B, v)$ operation during $\alpha_j$ as well.
	This implies that the value of $B$ changes during $\alpha_j$.
	But this is a contradiction, since the value of $B$ does not change during $\gamma\alpha_j\gamma_id$.
	Hence, $B \not\in \mathcal{X}_i$.

	Now we show that $B \not\in \mathcal{Y}_i$.
	To obtain a contradiction, suppose $B \in \mathcal{Y}_i$.
	By Claim~\ref{claim:BinY}, $\textit{value}(B, C_i\gamma\beta_i) = v$.
	Then the process in $S_i$ that covers $B$ applies \emph{Swap}$(B, v)$ in $\beta_i$.
	By Lemma~\ref{lem:extend} (with $C = C_i\gamma\alpha_j$ and $S = S_i$), there is a $\mathcal{Q}$-only execution $\gamma''$ from $C_i\gamma\alpha_j$ such that $\mathcal{Q}$ is bivalent in $C_i\gamma\alpha_j\gamma''\beta_i$.
	Consider the configuration $C_i\gamma\alpha_j\gamma''d\beta_i$.
	Since $d$ is applied to $B \in \mathcal{Y}_i$, it is overwritten by $\beta_i$.
	That is, $\textit{value}(B, C_i\gamma\alpha_j\gamma''d\beta_i) = \textit{value}(B, C_i\gamma\alpha_j\gamma''\beta_i)$.
	Since $\mathcal{Q}$ is bivalent in $C_i\gamma\alpha_j\gamma''\beta_i$, it is bivalent in $C_i\gamma\alpha_j\gamma''d\beta_i$ as well.
	Property (\ref{blockswap}) (with $\lambda = \gamma\alpha_j\gamma''d$) implies that, if the block swap $\beta_i$ changes the value of some object in $\mathcal{Y}_i$ when applied in $C_i\gamma\alpha_j\gamma''d$, then $\mathcal{Q}$ is univalent in $C_i\gamma\alpha_j\gamma''d\beta_i$.
	This is a contradiction, because $\mathcal{Q}$ is bivalent in $C_i\gamma\alpha_j\gamma''d\beta_i$, $\textit{value}(B, C_i\gamma\alpha_j\gamma''d) = \bar{v}$, and $\textit{value}(B, C_i\gamma\alpha_j\gamma''d\beta_i) = v$.
%	Since $\mathcal{Q}$ is bivalent in $C_i\gamma\alpha_j\gamma''d\beta_i$, the block swap $\beta_i$ does not change the value of any object in $\mathcal{Y}_i$ when applied in $C_i\gamma\alpha_j\gamma''d$.
%	In other words, $\textit{value}(B, C_i\gamma\alpha_j\gamma''d) = \textit{value}(B, C_i\gamma\alpha_j\gamma''d\beta_i)$.
%	Since $d$ is \emph{Swap}$(B, \bar{v})$, it must be true that $\bar{v} = \textit{value}(B, C_i\gamma\alpha_j\gamma''d) = \textit{value}(B, C_i\gamma\alpha_j\gamma''d\beta_i)$.
%	Thus, the process in $S_i$ that covers $B$ applies a \emph{Swap}$(B, \bar{v})$ primitive in $\beta_i$.
%	Hence, $\textit{value}(B, C_i\gamma\beta_i) = \bar{v}$, which contradicts Claim~\ref{claim:BinY}.
	Hence, $B \not\in \mathcal{Y}_i$.
	
	\smallskip
	
	We have shown that $B \not\in \mathcal{X}_i \cup \mathcal{Y}_i$.
	Recall that $C_{i+1} = C_i\gamma\alpha_j$.
	Let $\mathcal{X}_{i+1} = \mathcal{X}_i$, $\mathcal{Y}_{i+1} = \mathcal{Y}_i \cup \{B\}$, $S_{i+1} = S_i \cup \{p_{i}\}$, and $\beta_{i+1} = d\beta_i$.
	Since $S_i$ takes no steps during $\gamma\alpha_j$, we know that $S_i$ still covers $\mathcal{Y}_i$ in $C_i\gamma\alpha_j = C_{i+1}$.
	Since $\alpha_j \widesim{p_i} \delta_j$, we know that $p_i$ is poised to apply $d$ in $C_i\gamma\alpha_j = C_{i+1}$.	
	Since $d$ is a \emph{Swap}$(B, \bar{v})$ operation, process $p_{i}$ covers $B$ in $C_{i+1}$.
	Hence, $S_{i+1} = S_i \cup \{p_{i}\}$ covers $\mathcal{Y}_{i+1} = \mathcal{Y}_i \cup \{B\}$ in $C_{i+1}$, which proves property (\ref{propCover}) for $i+1$.
	
	\smallskip

	If $\lambda'$ is a $(\mathcal{Q} \cup \mathcal{P}_{i+1})$-only execution from $C_{i+1} = C_i\gamma\alpha_j$, then $\gamma\alpha_j\lambda'$ is a $(\mathcal{Q} \cup \mathcal{P}_{i})$-only execution from $C_i$.
	By property~(\ref{nochange}) (with $\lambda = \gamma\alpha_j\lambda'$), if the value of some object in $\mathcal{X}_i = \mathcal{X}_{i+1}$ changes during $\gamma\alpha_j\lambda'$, then $\mathcal{P}$ is univalent in $C_i\gamma\alpha_j\lambda' = C_{i+1}\lambda'$.
	This proves property (\ref{nochange}) for $i+1$.
	
	\smallskip

	Notice that $\gamma\alpha_j\lambda'd$ is a $(\mathcal{Q} \cup \mathcal{P}_i)$-only execution from $C_i$.
	By property~(\ref{blockswap}) (with $\lambda = \gamma\alpha_j\lambda'd$), if $\beta_i$ changes the value of some object in $\mathcal{Y}_i$ when it is applied in $C_i\gamma\alpha_j\lambda'd$, then $\mathcal{Q}$ is univalent in $C_i\gamma\alpha_j\lambda'd\beta_i$.
	Lemma~\ref{lem:solotrick}\ref{lem:solotrick:cover} implies that, if $\textit{value}(B, C_i\gamma\alpha_j\lambda') = \textit{value}(B, C_i\gamma\delta_j) = v$, then $\mathcal{Q}$ is univalent in $C_i\gamma\alpha_j\lambda'd$.
	Recall that $d$ is a \emph{Swap}$(B, \bar{v})$ operation.
	Hence, if $d$ changes the value of $B$ when it is applied in $C_i\gamma\alpha_j\lambda'$, then $\mathcal{Q}$ is univalent in $C_i\gamma\alpha_j\lambda'd$.
	Therefore, if $d\beta_i = \beta_{i+1}$ changes the value of some object in $\mathcal{Y}_i \cup \{B\} = \mathcal{Y}_{i+1}$ when it is applied in $C_i\gamma\alpha_j\lambda' = C_{i+1}\lambda'$, then $\mathcal{Q}$ is univalent in $C_{i+1}\lambda'\beta_{i+1}$.
	This proves property~(\ref{blockswap}) for $i+1$.
	
\smallskip

In either case, the lemma holds for $i + 1$.
Therefore, the lemma holds for all $i \in \{0, \ldots, n-2\}$ by induction.
\end{proof}

Taking $i = n-2$ in Lemma~\ref{lem:maintech} gives us two disjoint sets of objects $\mathcal{X}_{n-2}$ and $\mathcal{Y}_{n-2}$ with $|\mathcal{X}_{n-2} \cup \mathcal{Y}_{n-2}| = n-2$.
This gives us the following theorem.

\begin{theorem}\label{thm:bswaplb}
	For all $n \geq 2$, any $n$-process, obstruction-free binary consensus algorithm from readable binary swap objects uses at least $n-2$ objects.
\end{theorem}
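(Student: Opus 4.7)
The plan is to derive the theorem as an immediate corollary of Lemma~\ref{lem:maintech}. For $n = 2$ the bound $n-2 = 0$ is vacuous, so I would handle only the case $n \geq 3$. In this regime the process sets $\mathcal{P} = \{p_0, \ldots, p_{n-3}\}$ and $\mathcal{Q} = \{q_0, q_1\}$ partition all $n$ processes, so the hypotheses of Lemma~\ref{lem:maintech} are satisfied for every $i \in \{0, \ldots, n-2\}$.

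The key step is to instantiate Lemma~\ref{lem:maintech} at $i = n-2$. This produces a configuration $C_{n-2}$ together with two disjoint sets of readable binary swap objects $\mathcal{X}_{n-2}$ and $\mathcal{Y}_{n-2}$ whose union has size exactly $n-2$. Since every element of either set is a base object of the algorithm, the union $\mathcal{X}_{n-2} \cup \mathcal{Y}_{n-2}$ witnesses that the algorithm accesses at least $n-2$ distinct readable binary swap objects, which is exactly the assertion of the theorem.

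There is essentially no further obstacle: all of the real work has been absorbed into the inductive construction in Lemma~\ref{lem:maintech}. Each step from $i$ to $i+1$ in that lemma invokes Lemma~\ref{lem:solotrick} on a solo-terminating execution of $p_i$ and, via a case analysis on whether the ``pivotal'' step of $p_i$ leaves the accessed object unchanged or flips it, shows that the object in question is fresh (i.e.\ not already in $\mathcal{X}_i \cup \mathcal{Y}_i$) and then either adjoins it to $\mathcal{X}_{i+1}$ or puts it into $\mathcal{Y}_{i+1}$ with $p_i$ added to the covering set $S_{i+1}$. Consequently, the cardinality grows by exactly one per step, and running the induction out to $i = n-2$ yields the required $n-2$ distinct objects, completing the proof.
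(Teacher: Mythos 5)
Your proposal is correct and takes exactly the same route as the paper: the theorem follows immediately from Lemma~\ref{lem:maintech} by instantiating $i = n-2$, which yields disjoint sets $\mathcal{X}_{n-2}$ and $\mathcal{Y}_{n-2}$ of readable binary swap objects with $|\mathcal{X}_{n-2} \cup \mathcal{Y}_{n-2}| = n-2$. Your description of the inductive mechanism inside Lemma~\ref{lem:maintech} (and the role of Lemma~\ref{lem:solotrick} in the case analysis on whether the pivotal step changes the value of the accessed object) is also an accurate summary of where the real work lives.
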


Ellen, Gelashvili, and Zhu \citep{egz-18} proved that any nondeterministic solo-terminating algorithm that uses a set of readable objects can be transformed into an obstruction-free algorithm that solves the same task and uses the same set of readable objects.
Hence, our lower bound also holds for nondeterministic solo-terminating (and hence, randomized wait-free) consensus algorithms.
Furthermore, a single readable binary swap object can be used to simulate any binary historyless object \citep{efr-07}.
Combined with Theorem~\ref{thm:bswaplb}, this gives us the following.

\begin{corollary}\label{cor:binhist}
	For all $n \geq 2$, any $n$-process, nondeterministic solo-terminating, binary consensus algorithm from binary historyless objects uses at least $n-2$ objects.
\end{corollary}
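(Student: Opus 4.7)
The plan is to chain together the three facts assembled in the paragraph preceding the corollary: Theorem~\ref{thm:bswaplb}, the transformation of \citet{egz-18} from nondeterministic solo-terminating algorithms to obstruction-free ones on readable objects, and the simulation of historyless objects by readable swap objects of \citet{efr-07}. Concretely, I would argue by contradiction: suppose there is an $n$-process nondeterministic solo-terminating binary consensus algorithm $\mathcal{A}$ that uses fewer than $n-2$ binary historyless objects.

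First, I would invoke the \citet{efr-07} simulation. Since every binary historyless object can be implemented from a single readable binary swap object, I can replace each historyless object used by $\mathcal{A}$ with its simulating readable binary swap object. The resulting algorithm $\mathcal{A}'$ still solves binary consensus for $n$ processes, it remains nondeterministic solo-terminating (the simulation preserves this progress property), and it uses at most as many readable binary swap objects as $\mathcal{A}$ used historyless objects, i.e., fewer than $n-2$.

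Next, I would apply the result of \citet{egz-18} which says that any nondeterministic solo-terminating algorithm built from readable base objects can be transformed into an obstruction-free algorithm solving the same task over the same set of base objects. Applying this to $\mathcal{A}'$ yields an $n$-process obstruction-free binary consensus algorithm $\mathcal{A}''$ that uses fewer than $n-2$ readable binary swap objects. This directly contradicts Theorem~\ref{thm:bswaplb}, completing the proof.

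The step I expect to require the most care is merely bookkeeping: confirming that the two cited transformations compose cleanly and that the object count is preserved, i.e., that the \citet{efr-07} simulation really uses exactly one readable binary swap object per binary historyless object and that the \citet{egz-18} transformation does not introduce extra base objects. Since both of these are explicit in the cited work and already leveraged in the paragraph immediately following Theorem~\ref{thm:bswaplb}, no substantive new argument is needed beyond citing them in sequence.
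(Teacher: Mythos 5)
Your proposal is correct and takes essentially the same approach as the paper: the corollary follows from Theorem~\ref{thm:bswaplb} by composing the \citet{efr-07} one-object simulation of binary historyless objects by readable binary swap objects with the \citet{egz-18} transformation from nondeterministic solo-terminating to obstruction-free algorithms over readable objects. You state the chain as a contradiction starting from the NST algorithm whereas the paper phrases it as forward extensions of the lower bound, but the content is identical.
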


\subsection{Readable bounded swap objects}\label{sec:bounded}

%We will now extend our technique from the previous section to readable swap objects with arbitrary bounded domain sizes.
%In particular, we show that any obstruction-free binary consensus algorithm from readable swap objects with domain size $b$ uses at least $\frac{n-2}{3b+1}$ objects.

Consider some $n$-process obstruction-free binary consensus algorithm from readable swap objects with domain size $b$.
We will now modify our technique from the proof of Lemma~\ref{lem:maintech} to show that this algorithm uses at least $\frac{n-2}{3b+1}$ objects.
Without loss of generality, suppose that the domain of every readable swap object is $\{0, \ldots, b-1\}$.
Let $\mathcal{A}$ denote the set of all readable swap objects used by the algorithm.
Let $C_0$ be an initial configuration of the algorithm in which $q_0$ has input $0$ and $q_1$ has input $1$.
As before, the pair of processes $\mathcal{Q}$ is bivalent in $C_0$.

Instead of constructing two sets of objects $\mathcal{X}_i$ and $\mathcal{Y}_i$, we construct two functions $f_i$ and $g_i$ that map objects to subsets of $\{0, \ldots, b-1\}$ along with a configuration $C_i$ in which $\mathcal{Q}$ is bivalent.
The functions $f_i$ and $g_i$ denote sets of forbidden values for each object in certain executions from $C_i$.
More specifically, for any $(\mathcal{Q} \cup \mathcal{P}_i)$-only execution $\lambda$ from $C_i$, if the value of some object $B$ is in $f_i(B)$ at any point during $\lambda$, then $\mathcal{Q}$ is univalent in $C_i\lambda$.
If a process $p \in \mathcal{P}_i$ is poised to apply a \emph{Swap}$(B, x)$ operation in $C_i\lambda$ for some object $B$ and some $x \in g_i(B)$, then $\mathcal{Q}$ is univalent in $C_i\lambda$.
We also obtain a set of processes $S_i \subseteq \mathcal{P} - \mathcal{P}_i$ that covers a set of $|S_i|$ objects in $C_i$.
If a process in $S_i$ is poised to apply a \emph{Swap}$(B, x)$ operation in $C_i$, then $x \not\in f_i(B) \cup g_i(B)$.
Unlike in Lemma~\ref{lem:maintech}, it is not necessarily true that $S_i \subseteq S_{i+1}$.
In particular, we might remove a single process $p$ from $S_i$ to obtain $S_{i+1}$.
However, whenever we do this, we also add a new value to $f_i(B)$ to obtain $f_{i+1}(B)$, where $B$ is the object covered by $p$ in $C_i$.
This allows us to show that $\sum_{B \in \mathcal{A}}\bigl(2\cdot |f_i(B)| + |g_i(B)|\bigr) + |S_i| \geq i$.

\medskip

To concretely illustrate how our technique in this section differs from the proof of Lemma~\ref{lem:maintech}, we explain how to obtain $C_i$, $S_i$, and the sets of forbidden values $f_i$ and $g_i$ for $i = 1$.
Let $\delta$ be $p_{0}$'s solo-terminating execution from $C_0$.
Suppose that $\delta$ consists of $r$ steps and, for all $s \in \{0, \ldots, r\}$, let $\delta_s$ be the prefix of $\delta$ that consists of the first $s$ steps by $p_0$.
Let $j$ be the value that satisfies the conditions of Lemma~\ref{lem:solotrick} (with $C = C' = C_0$ and $i = 0$).
Lemma~\ref{lem:solotrick}(\ref{lem:solotrick:prefix}) implies that there is an execution $\alpha_j$ from $C_0$ such that $\mathcal{Q}$ is bivalent in $C_0\alpha_j$ and $\alpha_j \widesim{p_0} \delta_j$.
Define $C_1 = C_0\alpha_j$.
Then $\mathcal{Q}$ is bivalent in $C_1$.

Let $B^\star$ be the base object that $p_0$ is poised to access in $C_0\delta_j$ and let $d$ be the operation that $p_0$ is poised to apply.
% $d$ be the primitive that $p_0$ is poised to apply to the object $B^\star$ in $C_0\delta_j$.
Let $v^\star = \textit{value}(B^\star, C_0\delta_j)$.
If the application of $d$ in $C_0\delta_j$ does not change the value of $B^\star$, i.e. $v^\star = \textit{value}(B^\star, C_0\delta_j d)$, then define $f_1(B^\star) = \{v^\star\}$, $f_1(B) = \emptyset$ for all $B \in \mathcal{A} - \{B^\star\}$, and $g_1(B) = \emptyset$ for all $B \in \mathcal{A}$.
Define $S_1 = \emptyset$.
Unlike in the proof of Lemma~\ref{lem:maintech}, it is not possible to show that, for any $(\mathcal{Q} \cup \mathcal{P}_1)$-only execution $\lambda'$ from $C_1$, $\mathcal{Q}$ is univalent in $C_1\lambda'$ if the value of $B^\star$ changes during $\lambda'$.
However, Lemma~\ref{lem:solotrick}\ref{lem:solotrick:nochange} implies that, if the object $B^\star$ has the value $v$ at some point during $\lambda'$, then $\mathcal{Q}$ is univalent in $C_0\alpha_j\lambda' = C_1\lambda'$.

Otherwise, the application of $d$ in $C_0\delta_j$ changes the value of $B^\star$, i.e. $v^\star \neq \textit{value}(B^\star, C_0\delta_j d)$.
Define $f_1(B) = \emptyset$ for all $B \in \mathcal{A}$, $g_1(B^\star) = \{v^\star\}$, and $g_1(B) = \emptyset$ for all $B \in \mathcal{A} - \{B^\star\}$.
Define $S_1 = \{p_0\}$.
Since $d$ changes the value of $B^\star$ when applied in $C_0\delta_j$, we know that $d$ is a \emph{Swap} operation.
Hence, $S_1 = \{p_1\}$ covers a set of $|S_1|$ objects $\{B^\star\}$ in $C_0\alpha_j = C_1$.
Unlike in the proof of Lemma~\ref{lem:maintech}, it is not possible to show that, for any $(\mathcal{Q} \cup \mathcal{P}_1)$-only execution $\lambda'$ from $C_1$, $\mathcal{Q}$ is univalent in $C_1\lambda'd$ if $d$ changes the value of $B^\star$ when applied in $C_1\lambda'$.
However, Lemma~\ref{lem:solotrick}\ref{lem:solotrick:cover} implies that, if the value of $B^\star$ is $v^\star$ in $C_1\lambda'$, then $\mathcal{Q}$ is univalent in $C_1\lambda'd$.
Now suppose there exists a $(\mathcal{Q} \cup \mathcal{P}_1)$-only execution $\lambda$ from $C_1$ such that there is a process $p_i \in \mathcal{P}_1$ poised to apply a \emph{Swap}$(B^\star, v^\star)$ operation $d^\star$ in $C_1\lambda$.
If $\mathcal{Q}$ is bivalent in $C_1\lambda$, then we can apply Lemma~\ref{lem:extend} (with $S = \{p_0\}$) to obtain a $\mathcal{Q}$-only execution $\gamma$ from $C_1\lambda$ such that $\mathcal{Q}$ is bivalent in $C_1\lambda\gamma d$.
Then $\mathcal{Q}$ is bivalent in $C_1\lambda\gamma d^\star d$ as well, which is a contradiction.
Hence, $\mathcal{Q}$ is univalent in $C_1\lambda$.

\medskip

When $i \geq 1$, we insert a block swap $\beta_i$ by $S_i$ before the solo execution $\delta$ by $p_i$ in the induction step, as depicted in Figure~\ref{fig:rsobounded}.
The reason for this is that, unlike in the proof of Lemma~\ref{lem:maintech}, there is no way to construct a $\mathcal{Q}$-only execution $\gamma$ from $C_i$ such that $\mathcal{Q}$ is bivalent in both $C_i\gamma$ and $C_i\gamma\beta_i$.
Instead, we use the fact that the block swap $\beta_i$ can only swap non-forbidden values into the covered base objects.
Therefore, if $p_i$ applies a \emph{Read} or a \emph{Swap} during $\delta$ to one of the base objects covered by $S_i$ in $C_i$ and obtains a forbidden value $v$ as a response, then $p_i$ must have changed the value of that base object to $v$ in some previous step of $\delta$.
Once again, we use Lemma~\ref{lem:solotrick} to obtain an execution $\alpha_j$ that is indistinguishable from the first $j$ steps $\delta_j$ of $\delta$ to process $p_i$.
We will show that it is not possible for $p_i$ to apply any \emph{Swap}$(B, v)$ operations in the first $j+1$ steps of $\delta$, for any base object $B$ and any forbidden value $v \in f_i(B) \cup g_i(B)$.
Then we show that, if the last step of $\delta_{j+1}$ by $p_i$ accesses the base object $B^\star$ and does not change its value, then we can add \emph{value}$(B^\star, C_i\delta_j)$ to $f_i(B^\star)$ to obtain $f_{i+1}(B^\star)$.
If $B^\star$ is covered by some process in $S_i$ in $C_i$, then we remove that process from $S_i$ to obtain $S_{i+1}$.
On the other hand, if the last step of $\delta_{j+1}$ by $p_i$ changes the value of $B^\star$, then we add \emph{value}$(B^\star, C_i\delta_j)$ to $g_i(B^\star)$ to obtain $g_{i+1}(B^\star)$.
If $B^\star$ is covered by some process in $S_i$ in $C_i$, then we replace this process with $p_i$ to obtain $S_{i+1}$.
Otherwise, we add $p_i$ to $S_i$ to obtain $S_{i+1}$.
We formalize this argument in the proof of the following lemma.

\begin{lemma}\label{lem:tech}
For all $i \in \{0, \ldots, n-2\}$, there is a configuration $C_i$ reachable from $C_0$, a set of processes $S_i \subseteq \mathcal{P} - \mathcal{P}_i$, and a pair of functions $f_i, g_i$ that map base objects to subsets of $\{0, \ldots, b-1\}$ such that the following properties hold for every $(\mathcal{Q} \cup \mathcal{P}_i)$-only execution $\lambda$ from $C_i$:

\begin{enumerate}[label=(\alph*),ref=\alph*]
	\item $\mathcal{Q}$ is bivalent in $C_i$,\label{lem:qbiv}
	\item $S_i$ covers a set of $|S_i|$ base objects in $C_i$,\label{lem:ricover}
	\item for every process $p \in S_i$, if $p$ is poised to apply a \textit{Swap}$(B, x)$ operation in $C_i$, then $x \not\in f_i(B) \cup g_i(B)$,\label{lem:richange}
	\item $\sum_{B \in \mathcal{A}} \bigl(2\cdot |f_i(B)| + |g_i(B)|\bigr) + |S_i| \geq i$,\label{lem:size}

	\item if the value of some object $B$ is equal to some value in $f_i(B)$ in any configuration of $\lambda$, then $\mathcal{Q}$ is univalent in $C_i\lambda$, and\label{lem:fvals}
	\item if some process $p \in \mathcal{P}_i$ is poised to apply a \textit{Swap}$(B, x)$ operation in $C_i\lambda$ for some object $B$ and some $x \in g_i(B)$, then $\mathcal{Q}$ is univalent in $C_i\lambda$.\label{lem:gvals}
\end{enumerate}
\end{lemma}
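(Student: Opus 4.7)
My plan is to proceed by induction on $i$. For the base case, take $C_0$ to be the initial configuration in which $q_0$ has input $0$ and $q_1$ has input $1$, and set $S_0 = \emptyset$ and $f_0(B) = g_0(B) = \emptyset$ for every object $B$. Property~(\ref{lem:qbiv}) holds by Observation~\ref{obs:initbiv}, while (\ref{lem:ricover})--(\ref{lem:gvals}) hold vacuously.

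For the inductive step, let $\beta_i$ denote the block swap by $S_i$ in $C_i$. Since $p_i \in \mathcal{P}_i$ is not in $S_i \subseteq \mathcal{P} - \mathcal{P}_i$, we have $C_i \widesim{p_i} C_i\beta_i$, so I would apply Lemma~\ref{lem:solotrick} with $C = C_i$, $C' = C_i\beta_i$, and $\delta$ the solo-terminating execution of $p_i$ from $C_i\beta_i$. This yields an index $j$, a $(\mathcal{Q} \cup \mathcal{P}_i)$-only execution $\alpha_j$ from $C_i$ with $\alpha_j \widesim{p_i} \delta_j$ and $\mathcal{Q}$ bivalent in $C_i\alpha_j$, the next step $d$ of $p_i$ on some object $B^\star$, and the value $v^\star = \textit{value}(B^\star, C_i\beta_i\delta_j)$. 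Take $C_{i+1} = C_i\alpha_j$. If $d$ leaves $B^\star$ unchanged (Case 1, so $d$ is either \emph{Read} or \emph{Swap}$(B^\star, v^\star)$), add $v^\star$ to $f_i(B^\star)$, and if some $p_{\text{old}} \in S_i$ covers $B^\star$, delete $p_{\text{old}}$ from $S_i$ to form $S_{i+1}$. Otherwise $d$ is a \emph{Swap}$(B^\star, v')$ with $v' \neq v^\star$ (Case 2): add $v^\star$ to $g_i(B^\star)$, and either replace $p_{\text{old}}$ by $p_i$ or insert $p_i$ into $S_i$, according to whether $B^\star$ was covered in $S_i$.

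Properties~(\ref{lem:qbiv}) and (\ref{lem:ricover}) follow immediately from the construction, since $\alpha_j$ contains no steps by $S_i$ and hence does not disturb the coverings carried over from $C_i$. For the preservation parts of (\ref{lem:fvals}) and (\ref{lem:gvals}) --- the statements for values already in $f_i$ or $g_i$ --- any $(\mathcal{Q} \cup \mathcal{P}_{i+1})$-only $\lambda'$ from $C_{i+1}$ concatenates with $\alpha_j$ to give a $(\mathcal{Q} \cup \mathcal{P}_i)$-only execution from $C_i$, so the inductive hypothesis applies directly. The new value added to $f$ in Case 1 satisfies (\ref{lem:fvals}) by Lemma~\ref{lem:solotrick}\ref{lem:solotrick:nochange}. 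For the new value added to $g$ in Case 2, given any $(\mathcal{Q} \cup \mathcal{P}_{i+1})$-only $\lambda'$ from $C_{i+1}$ and a process $p \in \mathcal{P}_{i+1}$ poised to \emph{Swap}$(B^\star, v^\star)$ in $C_{i+1}\lambda'$, I would assume bivalence, invoke Lemma~\ref{lem:extend} with $S = \{p_i\}$ to obtain a $\mathcal{Q}$-only $\gamma$ preserving bivalence after $p_i$ plays $d$, let $p$ execute its Swap so that $B^\star$ holds $v^\star$, and then apply Lemma~\ref{lem:solotrick}\ref{lem:solotrick:cover} together with indistinguishability to $\mathcal{Q}$ to derive the desired contradiction.

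The main obstacle I expect is the combined verification of (\ref{lem:richange}) and (\ref{lem:size}): the values I add to $f$ and $g$ must really be new, and in Case 2 the swap value $v'$ of $p_i$ must avoid $f_i(B^\star) \cup g_i(B^\star)$. The exclusions $v^\star \notin f_i(B^\star)$ (Case 1) and $v' \notin f_i(B^\star)$ (Case 2) are both handled by applying Lemma~\ref{lem:extend} with $S = \{p_i\}$ to force the forbidden value to appear in $B^\star$, then invoking property~(\ref{lem:fvals}) inductive to contradict bivalence at the resulting configuration. The exclusion $v' \notin g_i(B^\star)$ follows from (\ref{lem:gvals}) inductive applied to $\lambda = \alpha_j$. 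Finally, $v^\star \notin g_i(B^\star)$ in the subcase of Case 2 where $B^\star$ is already covered by $p_{\text{old}} \in S_i$ requires a case split: either $v^\star$ equals the value $p_{\text{old}}$ is poised to write (barred from $g_i(B^\star)$ by (\ref{lem:richange}) inductive), or $p_i$ itself applies \emph{Swap}$(B^\star, v^\star)$ at some earlier step $t \leq j$ of $\delta$, and then Lemma~\ref{lem:solotrick}\ref{lem:solotrick:prefix} yields a bivalent configuration where $p_i$ is poised to apply this swap, contradicting (\ref{lem:gvals}) inductive. With these exclusions established, a direct bookkeeping check shows the size count grows by at least one, completing the induction.
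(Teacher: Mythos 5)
Your overall strategy matches the paper's: induct on $i$, use $\beta_i$ (the block swap by $S_i$) to set up Lemma~\ref{lem:solotrick} with $C=C_i$, $C'=C_i\beta_i$, and split on whether $p_i$'s step $d$ at index $j$ changes $B^\star$, adding the pre-value $v^\star$ to $f$ or $g$ accordingly. Your Case~2 analysis (both the $v'\notin f_i\cup g_i$ exclusions and the argument for $v^\star\notin g_i(B^\star)$ when $B^\star$ is covered, which is essentially the paper's Claim~\ref{claim:offlimits}) is sound, and your verification of the new forbidden values against properties (\ref{lem:fvals}) and (\ref{lem:gvals}) matches the paper.

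However, there is a genuine gap in your verification that $v^\star\notin f_i(B^\star)$ in Case~1. You propose to ``apply Lemma~\ref{lem:extend} with $S=\{p_i\}$ to force the forbidden value to appear in $B^\star$,'' but in Case~1 the step $d$ can be a $\textit{Read}(B^\star)$ --- you yourself note this earlier. Then $p_i$ is not poised to apply a nontrivial operation, so $p_i$ does not cover $B^\star$, $\{p_i\}$ covers nothing, and Lemma~\ref{lem:extend} is inapplicable. Moreover a Read does not ``force'' any value into $B^\star$, so even informally the plan does not go through. The paper's argument avoids Lemma~\ref{lem:extend} here entirely: by the offlimits claim (your Case~2 observation applied to all of $\delta_j$, not just the covered subcase), $p_i$ never swaps a value from $f_i(B^\star)$ into $B^\star$ during $\delta_j$, so if $v^\star\in f_i(B^\star)$ then $\textit{value}(B^\star,C_i\beta_i)=v^\star$; if $B^\star$ is covered by $S_i$ the covering process would be poised to $\textit{Swap}(B^\star,v^\star)$, contradicting (\ref{lem:richange}), and otherwise $\textit{value}(B^\star,C_i)=v^\star$, contradicting (\ref{lem:fvals}) with $\lambda$ empty and (\ref{lem:qbiv}). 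You have all the ingredients to close this hole, but the plan as stated does not.

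One further minor difference: in Case~1 you delete from $S_i$ any process covering $B^\star$, whereas the paper only removes a covering process that is specifically poised to $\textit{Swap}(B^\star,v^\star)$. Your more aggressive removal is still correct --- property~(\ref{lem:richange}) holds a fortiori, and the net increase in $\sum_B(2|f_i(B)|+|g_i(B)|)+|S_i|$ is still at least $+1$ --- but it is not needed.
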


\begin{proof}
We use induction on $i$.
Let $C_0$ be the bivalent initial configuration defined earlier in which $q_0$ has input $0$ and $q_1$ has input $1$.
This gives us property~(\ref{lem:qbiv}).
Let $S_0 = \emptyset$ and let $f_0(B) = g_0(B) = \emptyset$ for all $B \in \mathcal{A}$.
Properties~(\ref{lem:ricover}), (\ref{lem:richange}), (\ref{lem:fvals}), and (\ref{lem:gvals}) all hold vacuously since $S_0$ is empty and $f_0(B)$ and $g_0(B)$ are empty for all $B \in \mathcal{A}$.
Property~(\ref{lem:size}) follows from $\sum_{B \in \mathcal{A}} \bigl(2\cdot |f_0(B)| + |g_0(B)|\bigr) + |S_0| = 0$.

Now suppose that the lemma holds for some $i \in \{0, \ldots, n-3\}$.
Let $\delta$ be $p_i$'s solo-terminating execution from $C_i\beta_i$, where $\beta_i$ is a block swap by $S_i$.
Suppose that $\delta$ consists of $r$ steps by $p_i$, and for all $s \in \{0, \ldots, r\}$, let $\delta_s$ be the prefix of $\delta$ that consists of the first $s$ steps by $p_i$.
Let $j \in \{0, \ldots, r-1\}$ be the value that satisfies the conditions of Lemma~\ref{lem:solotrick} (with $C = C_i$ and $C' = C_i\beta_i$).
The following claim is important for our construction.

\begin{claim}\label{claim:offlimits}
	For all $B \in \mathcal{A}$ and all $x \in f_i(B) \cup g_i(B)$, process $p_i$ does not apply any \textit{Swap}$(B, x)$ operations in $\delta_{j+1}$.
\end{claim}

\begin{proof}[Proof of Claim~\ref{claim:offlimits}]
To obtain a contradiction, suppose that for some $B \in \mathcal{A}$, some $x \in f_i(B) \cup g_i(B)$, and some $0 \leq t \leq j$, $p_i$ is poised to apply a \emph{Swap}$(B, x)$ operation in $C_i\beta_i\delta_{t}$.
Since $0 \leq t \leq j$, Lemma~\ref{lem:solotrick}(\ref{lem:solotrick:prefix}) implies that there exists a $(\mathcal{Q} \cup \mathcal{P}_i)$-only execution $\alpha_{t}$ from $C_i$ such that $\mathcal{Q}$ is bivalent in $C_i\alpha_{t}$ and $\alpha_{t} \widesim{p_i} \delta_{t}$.
Then $p_i$ is poised to apply a \emph{Swap}$(B, x)$ operation in $C_i\alpha_{t}$.
If $x \in g_i(B)$, then property~(\ref{lem:gvals}) (with $\lambda = \alpha_{t}$) implies that $\mathcal{Q}$ is univalent in $C_i\alpha_{t}$.
This is a contradiction.
Therefore, $x \in f_i(B)$.

Let $d_t$ be the step that $p_i$ is poised to apply in $C_i\beta_i\delta_{t}$ and $C_i\alpha_{t}$.
By Lemma~\ref{lem:extend} (with $C = C_i\alpha_{t}$ and $S = \{p_i\}$), there is a $\mathcal{Q}$-only execution $\gamma$ from $C_i\alpha_{t}$ such that $\mathcal{Q}$ is bivalent in $C_i\alpha_{t}\gamma d_t$.
Since $d_t$ is a \emph{Swap}$(B, x)$ operation, \emph{value}$(B, C_i\alpha_{t}\gamma d_t) = x \in f_i(B)$.
Then by property~(\ref{lem:fvals}) (with $\lambda = \alpha_{t}\gamma d_t$), $\mathcal{Q}$ is univalent in $C_i\alpha_t\gamma d_t$.
This is a contradiction.
\end{proof}

\begin{figure}[h]
\centering
\begin{tikzpicture}[shorten >=1pt,node distance=2.9cm,on grid,auto] 
	\tikzset{every state/.style={minimum size=1.1cm}}

	\node[state,accepting]	(Ci)	{$C_i$};
	\node[state,accepting]	(alpha) [right=4.5cm of Ci]	{$C_{i+1}$};
	\node[state]	(beta)	[above=2.2cm of Ci]	{};
	\node[state]	(delta)	[above=2.2cm of alpha]	{};
	
	\node	(indist)	[below=1.1cm of delta]	{$\delta_j \widesim{p_i} \alpha_j$};

    \path[->,
	line join=round, decoration={
	    zigzag,
	    segment length=4,
	    amplitude=.9,post=lineto,
	    post length=2pt}]	    
	    
	(Ci)	edge[decorate] node[below] {$(\mathcal{Q} \cup \mathcal{P}_i)$-only}  node[above] {$\alpha_j$} (alpha)
	(Ci)	edge node[right] {\small $S_i$-only} node[left] {$\beta_i$} (beta)
	(beta)	edge[decorate] node[below] {$p_i$-only} node[above] {$\delta_j$} (delta);
\end{tikzpicture}
\caption{The construction of $C_{i+1}$ from $C_i$ in the proof of Lemma~\ref{lem:tech}. Nodes with double outlines denote configurations in which $\mathcal{Q}$ is bivalent.}\label{fig:rsobounded}
\end{figure}
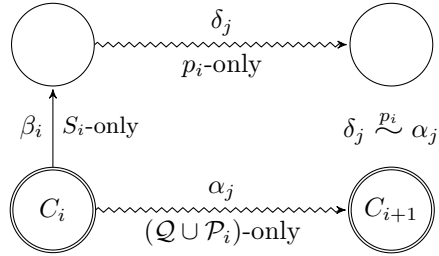

We now proceed with the inductive step.
Let $d$ be the operation that $p_i$ is poised to apply to the object $B^\star$ in $C_i\beta_i\delta_j$.
Let $v^\star = \textit{value}(B^\star, C_i\beta_i\delta_j)$.
Define $C_{i+1} = C_i\alpha_{j}$.
This is illustrated in Figure~\ref{fig:rsobounded}.
Since $\mathcal{Q}$ is bivalent in $C_i\alpha_{j}$, we obtain property~(\ref{lem:qbiv}) for $i+1$.
There are two cases.

\smallskip

\textbf{Case 1:} $\textit{value}(B^\star, C_i\beta_i\delta_j d) = v^\star$.
	Since $d$ does not change the value of $B$,  $d$ is either a \emph{Read}$(B^\star)$ or a \emph{Swap}$(B^\star, v^\star)$ operation.
	In this case, we add $v^\star$ to $f_i(B^\star)$ to obtain $f_{i+1}(B^\star)$.
	More formally, define $f_{i+1}(B) = f_i(B)$ for all $B \in \mathcal{A} - \{B^\star\}$, $g_{i+1}(B) = g_i(B)$ for all $B \in \mathcal{A}$, and $f_{i+1}(B^\star) = f_i(B^\star) \cup \{v^\star\}$.
	If there is a process $p \in S_i$ that covers $B^\star$ in $C_i$ and $p$ is poised to apply \emph{Swap}$(B^\star, v^\star)$ in $C_i$, then define $S_{i+1} = S_i - \{p\}$.
	Otherwise, define $S_{i+1} = S_i$.
	Since $S_i$ covers $|S_i|$ objects in $C_i$ by property~(\ref{lem:ricover}) of the inductive hypothesis, no process in $S_i$ takes steps during $\alpha_{j}$, and $S_{i+1} \subseteq S_i$, this gives us property~(\ref{lem:ricover}) for $i+1$.
	
	\smallskip
	
	Notice that if some process $p \in S_{i+1}$ is poised to apply a \emph{Swap}$(B^\star, x)$ operation in $C_{i+1}$, then $x \neq v^\star$ by definition of $S_{i+1}$.
	Furthermore, since none of the processes in $S_{i+1}$ took steps in $\alpha_j$, they are poised to apply the same operations in $C_{i+1}$ as in $C_i$.	
%	Furthermore, for all $B \in \mathcal{A} - \{B^\star\}$, $f_{i+1}(B) = f_i(B)$ and $g_{i+1}(B) = g_i(B)$.
%	Hence, if some process $p \in S_{i+1}$ is poised to apply a \emph{Swap}$(B, x)$ primitive in $C_{i+1}$, then $x \not\in$ by property~(\ref{lem:richange}) for $i$.
	Hence, property~(\ref{lem:richange}) of the inductive hypothesis gives us property~(\ref{lem:richange}) for $i+1$.
	
	\smallskip
	
	We now show that $v^\star \not\in f_i(B^\star)$.
	To obtain a contradiction, suppose that $v^\star \in f_i(B^\star)$.
	By Claim~\ref{claim:offlimits}, process $p_i$ cannot change the value of $B^\star$ to a value in $f_i(B^\star)$ during $\delta_j$.
	Hence, \emph{value}$(B^\star, C_i\beta_i) = v^\star$.
	If $B^\star$ is covered by $S_i$ in $C_i$, then the process $p \in S_i$ that swaps $B^\star$ during $\beta_i$ applies \emph{Swap}$(B^\star, v^\star)$ in this block swap.
	Since $v^\star \in f_i(B^\star)$, this contradicts property~(\ref{lem:richange}) of the inductive hypothesis.
	Thus, $B^\star$ is not covered by $S_i$ in $C_i$.
	Then \emph{value}$(B^\star, C_i) = v^\star$.
	Property~(\ref{lem:fvals}) of the inductive hypothesis (where $\lambda$ is the empty execution) implies that $\mathcal{Q}$ is univalent in $C_i$.
	This contradicts property~(\ref{lem:qbiv}) of the inductive hypothesis.
	Hence, $v^\star \not\in f_i(B^\star)$.
	This implies that $|f_{i+1}(B^\star)| = |f_i(B^\star)| + 1$.
	Since $|S_{i+1}| \geq |S_i| - 1$, this gives us property~(\ref{lem:size}) for $i+1$.
	
	\smallskip
	
	Consider a $(\mathcal{Q} \cup \mathcal{P}_{i+1})$-only execution $\lambda''$ from $C_{i+1} = C_i\alpha_{j}$.
	Then $\alpha_j\lambda''$ is a $(\mathcal{Q} \cup \mathcal{P}_i)$-only execution from $C_i$.
	Hence, by property~(\ref{lem:fvals}) (with $\lambda = \alpha_j\lambda''$), if the value of some object $B$ is equal to some value in $f_i(B)$ in any configuration of $\alpha_j\lambda''$, then $\mathcal{Q}$ is univalent in $C_{i}\alpha_j\lambda'' = C_{i+1}\lambda''$.
	If the value of $B$ is equal to $v^\star$ at some point during $\lambda''$, then Lemma~\ref{lem:solotrick}\ref{lem:solotrick:nochange} (with $C = C_i$, $C' = C_i\beta_i$, and $\lambda' = \lambda''$) implies that $\mathcal{Q}$ is univalent in $C_{i+1}\lambda''$.
	This completes the proof of property~(\ref{lem:fvals}) for $i+1$.
	
	Property~(\ref{lem:gvals}) (with $\lambda = \alpha_j\lambda''$) says that, if some process $p \in \mathcal{P}_i$ is poised to apply a \emph{Swap}$(B, x)$ operation in $C_i\alpha_j\lambda''$ for some object $B$ and some $x \in g_i(B) = g_{i+1}(B)$, then $\mathcal{Q}$ is univalent in $C_i\alpha_j\lambda'' = C_{i+1}\lambda''$.
	This gives us property~(\ref{lem:gvals}) for $i+1$.
	
%	Since $\mathcal{P}_{i+1} \subsetneq \mathcal{P}_i$ and $g_{i+1}(B) = g_i(B)$ for all $B \in \mathcal{A}$, property~(\ref{lem:gvals}) for $i+1$ follows directly from property~(\ref{lem:gvals}) for $i$.\todo{less detail than binary object section}

\smallskip

\textbf{Case 2:} \emph{value}$(B^\star, C_i\beta_i\delta_j d) \neq \textit{value}(B^\star, C_i\beta_i\delta_j) = v^\star$.
	Then $d$ is a \emph{Swap}$(B^\star, v')$ operation, for some $v' \in \{0, \ldots, b-1\} - \{v^\star\}$.
	In this case, we add $v^\star$ to $g_i(B^\star)$ to obtain $g_{i+1}(B^\star)$.
	More formally, define $f_{i+1}(B) = f_i(B)$ for all $B \in \mathcal{A}$, $g_{i+1}(B) = g_i(B)$ for all $B \in \mathcal{A} - \{B^\star\}$, and $g_{i+1}(B^\star) = g_i(B^\star) \cup \{v^\star\}$.
	
	If some process $p \in S_i$ is poised to access $B^\star$ in $C_i$, then define $S_{i+1} = (S_i - \{p\}) \cup \{p_i\}$.
	In this case, $S_{i+1}$ covers the same set of objects in $C_{i+1}$ as $S_i$ covers in $C_i$ and $|S_{i+1}| = |S_i|$, which gives us property~(\ref{lem:ricover}) for $i+1$.
	Otherwise, define $S_{i+1} = S_i \cup \{p_i\}$.
	In this case, $S_{i+1}$ covers the same objects in $C_{i+1}$ as $S_i$ covers in $C_i$ in addition to the object $B^\star$.
	This also gives us property~(\ref{lem:ricover}) for $i+1$.
	
	\smallskip
	
	By Claim~\ref{claim:offlimits}, we have $v' \not\in \bigl(f_i(B^\star) \cup g_i(B^\star)\bigr)$.
	Furthermore, we know that $v' \neq v^\star$.
	Hence, $v' \not\in \bigl(f_{i+1}(B^\star) \cup g_{i+1}(B^\star)\bigr)$.
	Notice that no process in $S_i$ takes any steps in $\alpha_j$, $f_{i+1}(B) = f_i(B)$, and $g_{i+1}(B) = g_i(B)$ for all $B \in \mathcal{A} - \{B^\star\}$.
	Hence, property~(\ref{lem:richange}) of the inductive hypothesis implies that, for every process $p \in S_i$, if $p$ is poised to apply \emph{Swap}$(B, x)$ in $C_{i+1}$ for some $B \in \mathcal{A} - \{B^\star\}$, then $x \not\in \bigl(f_{i+1}(B) \cup g_{i+1}(B)\bigr)$.
%	Furthermore, since $d$ changes the value of $B^\star$ when it is applied in $C_i\beta_i\delta_j$, we have $v \neq \textit{value$(B^\star, C_i\beta_i\delta_j)$}$.
	Since $p_i$ is the only process in $S_{i+1}$ poised to swap $B^\star$ in $C_{i+1}$, this gives us property~(\ref{lem:richange}) for $i+1$.
	
	\smallskip
	
	Note that it is not guaranteed that $v^\star \not\in g_i(B^\star)$.
	In other words, we might not add a new forbidden value in this case.
	However, we now show that either $|S_{i+1}| = |S_i| + 1$ or $|g_{i+1}(B^\star)| = |g_i(B^\star)| + 1$, which allows us to obtain property~(\ref{lem:size}) for $i+1$.
	
	If $B^\star$ is not covered by $S_i$ in $C_i$, then $S_{i+1} = S_i \cup \{p_i\}$.
	Hence, $|S_{i+1}| = |S_i| + 1$.
	Furthermore, $|f_{i+1}(B)| = |f_i(B)|$ for all $B \in \mathcal{A}$, $|g_{i+1}(B)| = |g_i(B)|$ for all $B \in \mathcal{A} - \{B^\star\}$, and $|g_{i+1}(B^\star)| \geq |g_i(B^\star )|$.
	This gives us property~(\ref{lem:size}) for $i+1$.
	
	Otherwise, $B^\star$ is covered by $S_i$ in $C_i$.
	By property~(\ref{lem:richange}) of the inductive hypothesis, we have \emph{value}$(B, C_i\beta_i) \not\in \bigl(f_i(B) \cup g_i(B)\bigr)$ for all objects $B$ covered by $S_i$ in $C_i$.
	Furthermore, by Claim~\ref{claim:offlimits}, process $p_i$ does not change the value of any object $B \in \mathcal{A}$ to any value in $f_i(B) \cup g_i(B)$ during $\delta_{j+1}$.
	Then \emph{value}$(B, C_i\beta_i\delta_j) \not\in f_i(B) \cup g_i(B)$ for all objects $B$ covered by $S_i$ in $C_i$.
	In particular, since $B^\star$ is covered by $S_i$ in $C_i$, this implies $v^\star \not\in f_i(B^\star) \cup g_i(B^\star)$.
	Since $g_{i+1}(B^\star) = g_{i}(B) \cup \{v^\star\}$, this implies $|g_{i+1}(B^\star)| = |g_i(B^\star)| + 1$.
	Furthermore, since $S_{i+1} = (S_i - \{p\}) \cup \{p_i\}$, where $p$ is the process in $S_i$ that covers $B^\star$ in $C_i$, we have $|S_{i+1}| = |S_i|$.
	Finally, $|f_{i+1}(B)| = |f_i(B)|$ for all objects $B$, and $|g_{i+1}(B)| = |g_{i}(B)|$ for all $B \in \mathcal{A} - \{B^\star\}$.
	This gives us property~(\ref{lem:size}) for $i+1$.
	
	\smallskip

	Consider a $(\mathcal{Q} \cup \mathcal{P}_{i+1})$-only execution $\lambda''$ from $C_{i+1}$.
	Then $\alpha_{j}\lambda''$ is a $(\mathcal{Q} \cup \mathcal{P}_i)$-only execution from $C_i$.
	Hence, by property~(\ref{lem:fvals}) (with $\lambda = \alpha_j\lambda''$), if the value of some object $B$ is equal to some value in $f_i(B) = f_{i+1}(B)$ in any configuration of $\alpha_{j}\lambda''$, then $\mathcal{Q}$ is univalent in $C_i\alpha_j\lambda'' = C_{i+1}\lambda''$.
	This gives us property~(\ref{lem:fvals}) for $i+1$.
	
	By property~(\ref{lem:gvals}) (with $\lambda = \alpha_j\lambda''$), if some process $p \in \mathcal{P}_{i+1} \subsetneq \mathcal{P}_i$ is poised to apply a \emph{Swap}$(B, x)$ operation in $C_{i}\alpha_j\lambda''$ for some object $B$ and some $x \in g_i(B)$, then $\mathcal{Q}$ is univalent in $C_{i}\alpha_j\lambda''$.
	We now prove that if some process $p \in \mathcal{P}_{i+1}$ is poised to apply a \emph{Swap}$(B^\star, v^\star)$ operation in $C_{i+1}\lambda''$, then $\mathcal{Q}$ is univalent in $C_{i+1}\lambda''$.
	To obtain a contradiction, suppose that $\mathcal{Q}$ is bivalent in $C_{i+1}\lambda''$ and some process $p \in \mathcal{P}_{i+1}$ is poised to apply a \emph{Swap}$(B^\star, v^\star)$ operation $t$ in $C_{i+1}\lambda''$.
%	Let $t$ be the primitive that $p$ is poised to apply in $C_{i+1}\lambda$.
	By Lemma~\ref{lem:extend} (with $C = C_{i+1}\lambda''$ and $S = \{p_i\}$), there is a $\mathcal{Q}$-only execution $\gamma$ from $C_{i+1}\lambda''$ such that $\mathcal{Q}$ is bivalent in $C_{i+1}\lambda''\gamma d$.
	Since $t$ and $d$ are both applied to $B^\star$, all of the objects have the same values in $C_{i+1}\lambda''\gamma d$ and $C_{i+1}\lambda''\gamma t d$.
	Then $\mathcal{Q}$ is bivalent in $C_{i+1}\lambda''\gamma t d$ as well.
	However, Lemma~\ref{lem:solotrick}\ref{lem:solotrick:cover} (with $\lambda' = \lambda''\gamma t$) implies that $\mathcal{Q}$ is univalent in $C_{i+1}\lambda''\gamma td$, which is a contradiction.
%	Furthermore, since $t$ is a \emph{Swap}$(B^\star, \textit{value$(B^\star, C_i\beta_i\delta_j)$})$ primitive, we have $\rho_{j}\alpha\gamma t d \widesim{p_i} \delta_j d = \delta_{j+1}$.
%	This contradicts the definition of $j$.
	Hence, if some process $p \in \mathcal{P}_{i+1}$ is poised to apply a \emph{Swap}$(B^\star, v^\star)$ operation in $C_{i+1}\lambda''$, then $\mathcal{Q}$ is univalent in $C_{i+1}\lambda''$.
	This completes the proof of property~(\ref{lem:gvals}) for $i+1$.
\end{proof}

Applying Lemma~\ref{lem:tech} with $i = n-2$ gives us $f_{n-2}, g_{n-2}$, and $S_{n-2}$ with $\sum_{B \in \mathcal{A}} \bigl(2\cdot |f_{n-2}(B)| + |g_{n-2}(B)|\bigr) + |S_{n-2}| \geq n-2$ by property~(\ref{lem:size}).
Since $f_{n-2}(B)$ and $g_{n-2}(B)$ are subsets of $\{0, \ldots b-1\}$, $\sum_{B \in \mathcal{A}} \bigl(2\cdot |f_{n-2}(B)| + |g_{n-2}(B)|\bigr) \leq 3\cdot b\cdot |\mathcal{A}|$.
Since $S_{n-2}$ covers a set of $|S_{n-2}|$ objects in $C_{n-2}$ by property~(\ref{lem:ricover}), $|S_{n-2}| \leq |\mathcal{A}|$.
Thus, $3\cdot b\cdot |\mathcal{A}| + |\mathcal{A}| \geq n-2$.
This gives us the following theorem.

\begin{theorem}\label{thm:boundedlb}
For all $n, b \geq 2$, any $n$-process, obstruction-free binary consensus algorithm from readable swap objects with domain size $b$ uses at least $\frac{n-2}{3b + 1}$ objects.
\end{theorem}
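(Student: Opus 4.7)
The plan is to derive the theorem as an essentially immediate consequence of Lemma~\ref{lem:tech}, which has already done all of the real work. I would instantiate the lemma with $i = n-2$ to obtain a configuration $C_{n-2}$, a set of processes $S_{n-2} \subseteq \mathcal{P} - \mathcal{P}_{n-2} = \mathcal{P}$, and functions $f_{n-2}, g_{n-2}$ assigning to each readable swap object a subset of its domain $\{0, \ldots, b-1\}$. The properties I care about here are (\ref{lem:ricover}) and (\ref{lem:size}); the bivalence and univalence statements in the other clauses are only needed inductively and will not be used directly.

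Next I would apply property (\ref{lem:size}) to get
\[
\sum_{B \in \mathcal{A}} \bigl(2 \cdot |f_{n-2}(B)| + |g_{n-2}(B)|\bigr) + |S_{n-2}| \geq n-2,
\]
and then bound each summand crudely from above. Since $f_{n-2}(B), g_{n-2}(B) \subseteq \{0, \ldots, b-1\}$, we have $|f_{n-2}(B)| \leq b$ and $|g_{n-2}(B)| \leq b$, so the first sum is at most $3b \cdot |\mathcal{A}|$. By property (\ref{lem:ricover}), $S_{n-2}$ covers $|S_{n-2}|$ distinct objects in $\mathcal{A}$, so $|S_{n-2}| \leq |\mathcal{A}|$. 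Combining these bounds yields $(3b+1)|\mathcal{A}| \geq n-2$, i.e., $|\mathcal{A}| \geq \frac{n-2}{3b+1}$, which is exactly the claimed lower bound.

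There is really no obstacle to overcome at this stage of the argument: every hard step, in particular the simultaneous maintenance of the covering set and the forbidden-value functions across the induction, is already absorbed into the statement of Lemma~\ref{lem:tech}. The only bookkeeping I would double-check is that the lemma's requirement $i \leq n-2$ permits taking $i = n-2$ (it does), and that applying the lemma produces functions defined on all of $\mathcal{A}$ so that the two upper bounds I use are over the same index set. Accordingly, the write-up would be a short paragraph: invoke the lemma, combine the size inequality with the two trivial upper bounds, and divide through by $3b+1$.
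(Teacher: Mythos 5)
Your proposal is correct and matches the paper's own argument step for step: both instantiate Lemma~\ref{lem:tech} at $i = n-2$, use property~(\ref{lem:size}) for the lower bound on the weighted count, bound the $f$ and $g$ terms by $3b\cdot|\mathcal{A}|$ via the domain size, bound $|S_{n-2}| \leq |\mathcal{A}|$ via property~(\ref{lem:ricover}), and divide by $3b+1$. There is nothing further to add.
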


Once again, a result by Ellen, Gelashvili, and Zhu \citep{egz-18} implies that our lower bound also holds for nondeterministic solo-terminating (and hence, randomized wait-free) consensus algorithms.
This, along with the fact that a readable swap objects can be used to simulate any historyless object with the same domain \citep{efr-07}, gives us the following.

\begin{corollary}
	For all $n \geq 2$, any $n$-process, nondeterministic solo-terminating, binary consensus algorithm from historyless objects with domain size $b$ uses at least $\frac{n-2}{3b+1}$ objects.
\end{corollary}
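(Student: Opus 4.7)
The plan is to derive this corollary as an immediate consequence of Theorem~\ref{thm:boundedlb}, together with the two simulation/transformation results already cited in the paper. Concretely, suppose toward a contradiction that there is an $n$-process nondeterministic solo-terminating binary consensus algorithm $A$ that uses $N$ historyless objects with domain size $b$, where $N < \frac{n-2}{3b+1}$. I would like to produce an $n$-process obstruction-free binary consensus algorithm from $N$ readable swap objects with domain size $b$, which would contradict Theorem~\ref{thm:boundedlb}.

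First, I would replace each historyless object used by $A$ with a readable swap object of the same domain size $b$. By the result of Ellen, Fatourou, and Ruppert~\citep{efr-07}, every historyless object with domain size $b$ can be simulated by a single readable swap object with domain size $b$: the simulation records the current value in the swap object and implements every nontrivial operation by a single \emph{Swap} (possibly preceded by a \emph{Read} to compute the response function locally) and every trivial operation by a single \emph{Read}. Applying this simulation object-by-object yields an $n$-process nondeterministic solo-terminating binary consensus algorithm $A'$ that uses exactly $N$ readable swap objects, each of domain size $b$.

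Second, I would invoke the transformation of Ellen, Gelashvili, and Zhu~\citep{egz-18}, which converts any nondeterministic solo-terminating algorithm from objects supporting \emph{Read} into an obstruction-free algorithm that solves the same task and uses the same set of objects. Applying this transformation to $A'$ (whose readable swap objects do support \emph{Read}) yields an $n$-process obstruction-free binary consensus algorithm $A''$ using $N$ readable swap objects with domain size $b$.

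Finally, I apply Theorem~\ref{thm:boundedlb} to $A''$, which gives $N \geq \frac{n-2}{3b+1}$, contradicting the assumption $N < \frac{n-2}{3b+1}$. No step here is technically hard: the substantive work is done inside Theorem~\ref{thm:boundedlb} and inside the two cited results. The only thing to be careful about is the order of the two reductions — the EGZ transformation requires the objects to support \emph{Read}, so we must substitute readable swap objects for the historyless objects \emph{before} applying the transformation, not after.
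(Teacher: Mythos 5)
Your proof is correct and takes essentially the same route as the paper: reduce historyless objects to readable swap objects via the simulation of Ellen, Fatourou, and Ruppert, apply the Ellen--Gelashvili--Zhu transformation to move from nondeterministic solo-termination to obstruction-freedom, and then invoke Theorem~\ref{thm:boundedlb}. (One minor inaccuracy in your parenthetical: the simulation of a nontrivial historyless operation needs no preliminary \emph{Read}, since \emph{Swap} already returns the old value from which the response is computed; but this does not affect your argument, which relies on the cited result rather than the simulation's internals.)
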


\section{Conclusion}\label{sec:conclusion}

In this paper, we showed that $n$-process obstruction-free $k$-set agreement can be solved with $n-k$ swap objects with unbounded domains.
We also proved a lower bound of $\lceil \frac{n}{k}\rceil - 1$ swap objects for solving nondeterministic solo-terminating $k$-set agreement, which exactly matches our algorithm for $k = 1$.
Since a swap operation can simulate any nontrivial operation on a historyless object, this lower bound implies that, if there is a nondeterministic solo-terminating $k$-set agreement algorithm from historyless objects that uses fewer than $\lceil\frac{n}{k}\rceil - 1$ objects, then at least one of the historyless objects must support a trivial operation.
Closing the gap between these upper and lower bounds remains an open problem.
We conjecture that at least $n-k$ swap objects are necessary.

The optimal space complexity of solving $n$-process obstruction-free $k$-set agreement with registers is unknown when $k > 1$.
The best known upper bound, due to Bouzid, Raynal, and Sutra \citep{brs-18}, is $n-k+1$, while the best known lower bound, due to Ellen, Gelashvili, and Zhu \citep{egz-18}, is $\lceil \frac{n}{k}\rceil$.
Furthermore, there is no known non-constant lower bound on the space complexity of solving obstruction-free $k$-set agreement using readable swap objects.
One possible line of future work is to consider $k$-set agreement algorithms using readable swap objects with bounded domain sizes.

We also proved that any obstruction-free binary consensus algorithm from readable binary swap objects requires at least $n-2$ objects.
We modified the technique from this proof in order to show that $\frac{n-2}{3b+1}$ readable swap objects with domain size $b$ are needed to solve obstruction-free binary consensus.
When $b$ is a constant, this lower bound is $\Omega(n)$, which asymptotically matches the best known algorithms.
If $b$ is $o(\sqrt{n})$, then our lower bound is asymptotically larger than the $\Omega(\sqrt{n})$ lower bound of Ellen, Herlihy, and Shavit \cite{ehs-98}.
Obtaining an $\omega (\sqrt{n})$ lower bound on the space complexity of solving obstruction-free consensus using readable swap objects with unbounded domain is a longstanding open problem that will likely require different techniques.

Bowman \citep{b-11} presented an obstruction-free $n$-process binary consensus algorithm from $2n-1$ different binary registers.
%This is the best known algorithm from binary historyless base objects in terms of space complexity.
Although our lower bound of $n-2$ asymptotically matches Bowman's algorithm, it remains open to prove exactly matching upper and lower bounds for binary historyless base objects.
It would be interesting to investigate whether there is an obstruction-free $n$-process binary consensus algorithm using fewer than $2n-1$ instances of binary swap objects rather than binary registers.

Another open question is determining the optimal space complexity of solving obstruction-free $n$-valued consensus using binary historyless objects.
The best known upper bound, due to Ellen, Gelashvili, Shavit, and Zhu \citep{egsz-20}, is $O(n\log n)$.
However, the best known lower bound for binary registers is $n$, due to Ellen, Gelashvili, and Zhu \citep{egz-18}, and the best known lower bound for arbitrary binary historyless objects is $n-2$, due to Corollary~\ref{cor:binhist}.

\ifarxiv
\else
%TODO %TODO %TODO %TODO %TODO %TODO %TODO %TODO %TODO %TODO %TODO %TODO 
\begin{acks}
I thank my advisor, Faith Ellen, for all of the insightful discussions and proofreading over the course of this project.
Support is gratefully acknowledged from the Natural Sciences and Engineering Research Council of Canada under grant RGPIN-2020-04178.
I also gratefully acknowledge the support of the Ontario Graduate Scholarship (OGS) Program.
A preliminary version of this paper appeared in \cite{me-22}.
\end{acks}
\fi

\ifarxiv
\bibliographystyle{unsrt}
\else
%%
%% The next two lines define the bibliography style to be used, and
%% the bibliography file.
\bibliographystyle{ACM-Reference-Format}
\fi
\bibliography{refs}

\end{document}
\endinput
%%
%% End of file `sample-manuscript.tex'.